\newtheorem{conjecture}[theorem]{Conjecture}
\tikzstyle{ubrace} = [draw, thick, decoration={brace, amplitude=7pt, mirror, raise=0.0cm}, decorate,
\tikzstyle{rbrace} = [draw, thick, decoration={brace, amplitude=7pt, mirror, raise=0.0cm}, decorate,
\tikzstyle{obrace} = [draw, thick, decoration={brace, amplitude=7pt, raise=0.0cm}, decorate,
\tikzstyle{lbrace} = [draw, thick, decoration={brace, amplitude=7pt, raise=0.0cm}, decorate,
\tikzstyle{eve} = [draw, diamond, minimum size=20pt, inner sep=1pt]
\tikzstyle{adam} = [draw, regular polygon, regular polygon sides=4, inner sep=1pt]
\tikzstyle{state} = [circle, draw, inner sep=1pt, minimum size=20pt]
\tikzstyle{bool} = [circle, draw, inner sep=1pt, minimum size=5pt, scale=0.9]
\tikzstyle{dot} = [fill, circle, inner sep=0pt, minimum size=4pt]
\tikzstyle{dots} = [scale=2.0]
\tikzstyle{trans} = [draw, -{latex}, auto]
\tikzstyle{booltrans} = [draw, -{latex}, auto]
\tikzstyle{edge} = [draw, thick]
\tikzstyle{edgeBox} = [draw, thick,-latex]
\newcommand{\evalInt}[2]{%
	\pgfmathparse{int(#2)}%
	{\global\edef#1{\pgfmathresult}}%
}
\definecolor{darkgreen}{rgb}{0.13, 0.55, 0.13}
\newcommand{\todo}[1]{\textcolor{red}{\textsf{To do: #1}}}
\newcommand{\NotNeeded}[1]{}
\newcommand{\FullVersion}[1]{}
\newcommand{\Subject}[1]{\subparagraph{#1.}}
\newcommand{\ProofSketch}{\begin{proof}[Proof sketch.]}
\newcommand{\St}{~|~}
\newcommand{\tuple}[1]{\langle #1  \rangle}
\newcommand{\pair}{\tuple}
\newcommand{\Nat}{\ensuremath{\mathbb{N}}\xspace}
\newcommand{\Func}[1]{{\mathsf{#1}}}
\newcommand{\BP}{\Func{B}^+}
\newcommand{\pro}[2]{\ensuremath{#1 \times #2}\xspace} 
\newcommand{\strat}{\stratE}
\newcommand{\boxes}{\mathsf{B}}
\newcommand{\A}{{\cal A}}
\newcommand{\B}{{\cal B}}
\newcommand{\C}{{\cal C}}
\newcommand{\D}{{\cal D}}
\newcommand{\E}{{\cal E}}
\newcommand{\G}{{\cal G}}
\newcommand{\N}{{\cal N}}
\renewcommand{\S}{{\cal S}}
\newcommand{\restr}{{\upharpoonright}}
\newcommand{\iotar}{(\iota,\{\iota\})}
\newcommand{\Ssafe}{\mathsf{SafeZone}}
\newcommand{\delrr}{\Delta}
\newcommand{\Gsafe}{G_{\mathit{safe}}}
\newcommand{\Wsafe}{W_{\mathit{safe}}}
\newcommand{\trans}[3]{#1\xrightarrow[]{#2}#3}
\newcommand{\Ar}{\A_{\mathit{r}}}
\newcommand{\Qr}{Q_{\mathit{r}}}
\newcommand{\delr}{\delta_{\mathit{r}}}
\newcommand{\alpr}{\alpha_{\mathit{r}}}
\newcommand{\deltadet}{\delta_{\mathit{det}}}
\newcommand{\reach}[1]{\overline{#1}}
\newcommand{\sigmasafe}{\sigma_{\mathit{safe}}}
\newcommand{\sigmainf}{\sigma_{\infty}}
\newcommand{\Wk}{W_k}
\newcommand{\taukunif}{\tau_k^{\mathit{unif}}}
\newcommand{\sigmakunif}{\sigma_k^{\mathit{unif}}}
\newcommand{\sigmakunifM}{\sigma_{k,M}^{\mathit{unif}}}
\newcommand{\Gt}{G_2}
\newcommand{\EGFG}{\exists\mathrm{GFG}}
\newcommand{\AGFG}{\forall\mathrm{GFG}}
\newcommand{\Gpos}{G_{\mathtt{pos}}}
\newcommand{\GM}{{\cal P}}
\newcommand{\boxX}[1]{\mathtt{box}(#1)}
\newcommand{\boxA}{\boxX{\A}} 
\newcommand{\boxAco}{\boxX{\bar \A}}
\newcommand{\noLab}{\epsilon}
\newcommand{\stratE}{\sigma}
\newcommand{\stratA}{\tau}
\newcommand{\upd}{\mathit{upd}}
\newcommand{\letter}{a}
\newcommand{\pspace}{{\sc{PSpace}}\xspace}
\newcommand{\ptime}{{\sc{PTime}}\xspace}
\newcommand{\exptime}{{\sc{Exptime}}\xspace}
\title{On Succinctness and Recognisability of Alternating Good-for-Games Automata} 
\author{Udi Boker}{Interdisciplinary Center (IDC) Herzliya, Israel}{udiboker@gmail.com}{}{Israel Science Foundation grant 1373/16}
\author{Denis Kuperberg}{CNRS, LIP, \'Ecole Normale Supérieure, Lyon, France}{denis.kuperberg@ens-lyon.fr}{0000-0001-5406-717X}{}
\author{Karoliina Lehtinen}{University of Liverpool, United Kingdom}{k.lehtinen@liverpool.ac.uk}{0000-0003-1171-8790}{EPSRC grant EP/P020909/1  (Solving Parity Games in Theory and Practice)}
\author{Micha\l~Skrzypczak}{Institute of Informatics, University of Warsaw, Poland}{mskrzypczak@mimuw.edu.pl}{0000-0002-9647-4993}{}
\authorrunning{U. Boker, D. Kuperberg, K. Lehtinen, M. Skrzypczak}
\keywords{Good for games, history-determinism, alternation}
\begin{document}
\maketitle

\begin{abstract}
We study alternating good-for-games (GFG) automata, i.e., alternating automata where both conjunctive and disjunctive choices can be resolved in an online manner, without knowledge of the suffix of the input word still to be read. We show that they can be exponentially more succinct than both their nondeterministic and universal counterparts.
Furthermore, we lift many results from nondeterministic parity GFG automata to alternating ones: a single exponential determinisation procedure, an \exptime upper bound to the GFGness problem, a \ptime algorithm for the GFGness problem of weak automata, 
and a reduction from a positive solution to the $G_2$ conjecture to a \ptime algorithm for the GFGness problem of parity automata with a fixed index. The $G_2$ conjecture states that a nondeterministic parity automaton $\A$ is GFG if and only if a token game, known as the $G_2$ game, played on $\A$ is won by the first player. So far, it had only been proved for B\"uchi automata; we provide further evidence for it by proving it for coB\"uchi automata.
We also study the complexity of deciding ``half-GFGness'', a property specific to alternating automata that only requires nondeterministic choices to be resolved in an online manner. We show that this problem is strictly more difficult than GFGness check, already for alternating automata on finite words.
\end{abstract}

\newpage

\section{Introduction}
\label{sec:Introduction}

\emph{Good-for-games} (GFG) automata were first introduced in~\cite{HP06} as a tool for solving the synthesis problem.
The equivalent notion of \emph{history\=/determinism} was introduced independently in~\cite{Col09} in the context of regular cost functions.
Intuitively, a nondeterministic automaton is GFG if nondeterminism can be resolved on the fly, only with knowledge of the input word read so far.
GFG automata can be seen as an intermediate formalism between deterministic and nondeterministic ones, with advantages from both worlds. Indeed, like deterministic automata, GFG automata enjoy good compositional properties---useful for solving games and composing automata and trees---and easy inclusion checks~\cite{BKKS13}. Like nondeterministic automata, they can be exponentially more succinct than deterministic automata~\cite{KS15}.

In recent years, much effort has gone into understanding various properties of nondeterministic GFG automata, for instance their relationship with deterministic automata~\cite{BKKS13, KS15,BKS17,KM19}, applications in probabilistic model checking~\cite{KMBK14} and LTL and $\mu$\=/calculus synthesis~\cite{IK19}, decision procedures for GFGness~\cite{LR13,KS15,BK18}, minimisation~\cite{AK19}, and links with recent advances in parity games~\cite{CF19}.

Alternating GFG automata are a natural generalisation of nondeterministic GFG automata that enjoy the same compositional properties as nondeterministic GFG automata, while providing more flexibility. As we show in the present work, for some languages they can also be exponentially more succinct, allowing for better synthesis procedures.
Alternating GFG automata were introduced independently by Colcombet~\cite{colcombet2013hab} and Quirl~\cite{Quirl} while a~form of alternating GFG automata with requirements specific to counters were also considered in~\cite{KV11}, as a tool to study cost functions on infinite trees.
Boker and Lehtinen studied the expressiveness and succinctness of alternating GFG automata in~\cite{BL19}, showing that they
\begin{itemize}
\item are not more succinct than DFAs on finite words,
\item are as expressive as deterministic ones of the same acceptance condition on infinite words,
\item and can be determinised with a $2^{\theta(n)}$ size blowup for the B\"uchi and coB\"uchi conditions.
\end{itemize}

Many questions about GFG alternating automata were left open, in particular
whether there exists a~doubly exponential gap between alternating GFG and deterministic automata, and the complexity of deciding whether an~alternating parity automaton is GFG. We pursue the study of these questions, and obtain a~deeper understanding of the GFG realm.

\paragraph*{\large Contributions}
\Subject{Succinctness of alternating GFG automata}
We show that there is a single exponential gap between alternating parity GFG automata and deterministic ones, thereby answering a~question left open in~\cite{BL19}.
However, we also show that alternating GFG automata can present exponential succinctness compared to both nondeterministic and universal GFG automata. This means that alternating GFG automata can be used to reduce the complexity of solving some games with complex acceptance conditions.

\Subject{Recognising GFG automata}

We show that deciding whether an alternating automata on finite words or a weak alternating automata on infinite words is GFG is in \ptime.

For more general acceptance conditions such as parity, we rely on the two\=/token game $\Gt$ introduced in~\cite{BK18}. Bagnol and Kuperberg showed in~\cite{BK18} that this game characterises GFGness for nondeterministic B\"uchi automata, in the sense that the first player has a winning strategy in $\Gt$ if and only if the automaton is GFG. They conjectured that this result holds in general for parity conditions, which would provide a \ptime procedure to decide whether a given nondeterministic parity automaton with a fixed index is GFG.
We lift this characterisation to alternating automata: we define an alternating version of $\Gt$, which can still be solved in \ptime for automata of fixed index. Moreover, we prove that $\Gt$ characterises GFG alternating automata, provided that the conjecture holds for nondeterministic automata.
We then prove that the conjecture indeed holds for nondeterministic \textit{coB\"uchi} automata, taking a step towards a general solution. This immediately provides a new \ptime algorithm for recognising GFGness in nondeterministic coB\"uchi automata. It is simpler than the one from~\cite{KS15}, which involves several games and intermediate modifications of the input automaton.
Falling short of giving a~\ptime algorithm in the general case, we give an \exptime upper bound to the problem of deciding whether an~alternating parity automaton is GFG, matching the known upper bound for recognising nondeterministic GFG automata. 

We also study the complexity of deciding ``half\=/GFGness'', i.e.,\ whether the nondeterminism (or universality) of an automaton is GFG. This property guarantees that composition with games preserves the winner for one of the players. We show that already on finite words, this problem is \pspace-hard, and it is in \exptime for alternating B\"uchi automata.
This shows that a \ptime algorithm for deciding GFGness must, as in the case of finite word automata and weak automata, exploit the subtle interplay between nondeterminism and universality, and cannot be reduced to checking independently whether each of them is~GFG.

\Subject{Roadmap} We begin with some definitions, after which, in~\cref{sec:Alternating-behaviour}, we define alternating GFG automata, study their succinctness and the complexity of deciding GFGness of the nondeterminism within an alternating automaton.  \Cref{sec:Determinisation} provides a single-exponential determinisation procedure for alternating GFG parity automata. \Cref{sec:deciding} shows that GFGness of alternating parity automata is in \exptime and works towards a \ptime algorithm. In particular, it provides such an algorithm for weak automata. Finally,~\cref{sec:G2-coBuchi} shows that the $\Gt$ conjecture holds for coB\"uchi automata. Throughout the paper, we provide high-level proof sketches, with detailed technical developments 
in the appendix.

\section{Preliminaries}
\label{sec:Preliminaries}

\Subject{Words and automata}
An \emph{alphabet} $\Sigma$ is a finite nonempty set of letters. A finite (resp.\ infinite) \emph{word} $u=u_0 \ldots u_k\in \Sigma^{*}$ (resp.\ $w=w_0 w_1\ldots\in \Sigma^{\omega}$) is a finite (resp.\ infinite) sequence of letters from $\Sigma$. 
A \emph{language} is a set of words, and the empty word is written $\epsilon$. We denote a set $\{i,\ldots,j\}$ of integers by $[i,j]$.
 
An \emph{alternating word automaton} is a~tuple $\A=(\Sigma,Q,\iota,\delta,\alpha)$, where: $\Sigma$ is an alphabet; $Q$ is a finite nonempty set of states; $\iota\in Q$ is an~initial state; $\delta\colon Q\times \Sigma \to \BP(Q)$ is a transition function where $\BP(Q)$ is the set of positive Boolean formulas (\emph{transition conditions}) over $Q$; and $\alpha$, on which we elaborate below, is either an acceptance condition or a transition labelling on top of which an acceptance condition is defined.
For a state $q\in Q$, we denote by $\A^q$ the automaton that is derived from $\A$ by setting its initial state $\iota$ to $q$. 

An automaton $\A$ is nondeterministic (resp.\ universal) if all its transition conditions are disjunctions (resp.\ conjunctions), and it is deterministic if all its transition conditions are just states. We represent the transition function of nondeterministic and universal automata as $\delta\colon Q\times \Sigma\to 2^Q$, and of a deterministic automaton as $\delta\colon Q\times \Sigma\to Q$. A \emph{transition} of an automaton is a triple $(q,\letter,q')\in Q{\times}\Sigma{\times} Q$, sometimes also written $\trans{q}{\letter}{q'}$.

We denote by $\widehat{\delta}\subseteq \BP(Q)$ the set of all subformulas of formulas in the image of $\delta$, i.e., all the Boolean formulas that ``appear'' somewhere in the transition function of $\A$. 

\Subject{Acceptance conditions}
There are various acceptance (winning) conditions, defined with respect to the set of transitions\footnote{Acceptance is defined in the literature with respect to either states or transitions; for technical reasons we prefer to work with acceptance on transitions.} that a path of $\A$ visits infinitely often. (Notice that a transition condition allows for many possible transitions.)
We later formally define acceptance of a word $w$ by $\A$ in terms of games, and consider a path of $\A$ on a word $w$ as a play in that game. For nondeterministic automata, a ``run'' coincides with a ``path''.

Some of the acceptance conditions are defined on top of a labelling of the transitions rather than directly on the transitions. In particular, in the parity condition, we have $\alpha\colon Q\times\Sigma\times Q \to \Gamma$, where $\Gamma\subseteq\Nat$ is a finite set of priorities and a path is accepting if and only if the highest priority seen infinitely often on it is even.

The B\"uchi and coB\"uchi conditions are special cases of the parity condition with $\Gamma=\{1,2\}$ and $\Gamma=\{0,1\}$, respectively. When speaking of B\"uchi and coB\"uchi automata, we often refer to $\alpha$ as the set of ``accepting transitions'', namely the transitions that are mapped to $2$ in the B\"uchi case and to $0$ in the coB\"uchi case.
The weak condition is a special case of both the B\"uchi and coB\"uchi conditions, in which every path eventually remains in the same priority.

The Rabin and Streett conditions are more involved, yet defined directly on the set $T$ of transitions. A Rabin condition is a set $\{\pair{B_1,G_1}, \pair{B_2,G_2},\ldots, \pair{B_k,G_k}\}$, with $B_i, G_i \subseteq T$,
and a path $\rho$ is accepting iff for some $i\in\{1,\ldots,k\}$, we have that the set $inf(\rho)$ of transitions that are visited infinitely often in $\rho$ satisfies ($inf(\rho) \cap B_i = \emptyset$ and $inf(\rho) \cap G_i \neq \emptyset$). A Streett condition is dual: a set $\{\pair{B_1,G_1}, \pair{B_2,G_2},
\ldots, \pair{B_k,G_k}\}$, with $B_i, G_i \subseteq Q$, whereby a~path $\rho$ is accepting iff for all $i\in\{1,\ldots,k\}$, we have ($inf(\rho) \cap B_i = \emptyset$ or $inf(\rho) \cap G_i \neq \emptyset$).

\Subject{Sizes and types of automata}
The size of $\A$ is the maximum of the alphabet size, the number of states, the transition function length, which is the sum of the transition condition lengths over all states and letters, and the acceptance condition's index, which is $1$ for weak, B\"uchi and coB\"uchi, $|\Gamma|$ for parity, and $k$ for Rabin and Street.

We sometimes abbreviate automata types by three-letter acronyms in $\{$D, N, U, A$\} \times \{$F, W, B, C, P, R, S$\} \times \{$A,W$\}$. The first letter stands for the transition mode, the second for the acceptance-condition, and the third indicates that the automaton runs on finite or infinite words. For example, DPW stands for a deterministic parity automaton on infinite words.

\Subject{Games and strategies}
Some of our technical proofs use standard concepts of an arena, a game, a winning strategy, etc\ldots For the sake of completeness, we provide precise mathematical definitions of these objects in \cref{ap:Preliminaries}. Here we will just overview the involved concepts.

First, we work with two\=/player games of perfect information, where the players are Eve and Adam. These games are played on graphs (called arenas). Most of the considered games are of infinite duration and their winning condition is expressed in terms of the infinite sequences of edges taken during the play. We invoke results of determinacy (one of the players has a winning strategy), as well as of \emph{positional determinacy} (one of the players has a strategy that depends only on the last position of the play).

\Subject{Model-checking games}

To represent the semantics of an alternating automaton $\A$, we treat the Boolean formulas that appear in the transition conditions of $\A$ as games. More precisely, given a letter $\letter\in\Sigma$ we represent the transition conditions $q\mapsto\delta(q,\letter)\in\BP(Q)$ as the \emph{one-step arena} over $\letter$. A play over this arena begins in a state $q\in Q$; then players go down the formula $\delta(q,a)$ with Eve resolving disjunctions and Adam resolving conjunctions; and finally they reach an atom $q'\in Q$ and the play stops. This means that a play over the one-step arena over $\letter$ results in a transition of the form $(q,\letter,q')$.

The language $L(\A)$ of an alternating automaton $\A$ over an alphabet $\Sigma$ is defined via the \emph{model-checking game}. A configuration of this game is a state $q$ of $\A$, starting at $\iota$. In the $i$th round, starting from state $q_i$, the players play the game over the one-step arena over $w_i$, resulting in a transition $\trans{q}{w_i}{q_{i+1}}$. The acceptance condition of $\A$ becomes the winning condition of this game. $\A$ \emph{accepts} a word $w\in\Sigma^\omega$ if Eve has a winning strategy in this game. 

For technical convenience, we define (in \cref{ap:Preliminaries}) the model-checking game in terms of a \emph{synchronised product} of the word $w$ (treated as an infinite graph) and the automaton $\A$. Synchronised products turn out to be useful in the analysis of various games presented in this paper and will be used throughout the technical versions of the proofs, in the appendix.

\begin{definition}
\label{def:auto-compl}
Given an alternating automaton $\A$, we denote by $\overline{\A}$ the \emph{dual} automaton: it has the same alphabet, set of states, and initial state. Its transition conditions $\delta_{\overline{\A}}(q,\letter)$ are obtained from those of $\A$ by replacing each disjunction ${\lor}$ with conjunction ${\land}$ and vice versa. Its acceptance condition is the dual of $\A's$ condition. (In parity automata, all priorities are increased by $1$.)
\end{definition}

\Subject{Boxes}
Another technical concept that we use is that of \emph{boxes}. They can be defined with respect to the synchronised product, see page~\pageref{def:boxes}, but also directly based on transition conditions. Consider an alternating automaton $\A$ and a letter $\letter\in\Sigma$. Moreover, fix a strategy $\stratE$ of Eve that resolves disjunctions in all the transition conditions $\delta(q,\letter)$. Now, the \emph{box} of $\A$, $\letter$, and $\stratE$, denoted $\beta(\A,\letter,\stratE)$ is the subset of $Q\times\{\letter\}\times Q$ that contains $(q,\letter,q')$ if there is a play consistent with $\stratE$ on $\delta(q,\letter)$ that reaches the atom $q'$ of the formula. By $\boxes_{\A,\letter}$ we denote the set of all boxes of $\A$ and $\letter$, while  $\boxes_\A$ denotes the union $\bigcup_{\letter\in\Sigma} \boxes_{\A,\letter}$.

\begin{definition}
\label{def:path-in-boxes}
Given a~sequence of boxes $\pi=b_0,b_1,\ldots$ of an automaton $\A$ and a~path $\rho=(q_0,\letter_0,q_1),(q_1,\letter_1,q_2),\ldots$, we say that $\rho$ is a~\emph{path of $\pi$} if for every $i$ we have $(q_i,\letter_i,q_{i+1})\in b_i$.
The sequence $\pi$ is said to be~\emph{universally accepting} if every path in $\pi$ is accepting in $\A$.
\end{definition}

Intuitively, a sequence of boxes $\pi$ as above represents a particular strategy $\strat$ of Eve in the model-checking game over the word $w=a_0,a_1,\ldots$ In that case, a path of $\pi$ corresponds to a possible play of this game consistent with $\strat$.

\section{Good-For-Games Alternating Automata}
\label{sec:Alternating-behaviour}

Good-for-games (GFG) nondeterministic automata are automata in which the nondeterministic choices can be resolved without looking at the future of the word. For example, consider an automaton that consists of a nondeterministic choice between a component that accepts words in which $a$ occurs infinitely often and a component that accepts words in which $a$ occurs finitely often. This automaton accepts all words but is not GFG since the nondeterministic choice of component cannot be resolved without knowing the whole word.

To extend this definition to alternating automata, we must look both at its nondeterminism and universality and require that both can be resolved without knowledge of the future. The following letter games capture this intuition.

\begin{definition}[Letter games~\cite{BL19}]
\label{def:LetterGames}
Given an alternating automaton $\A$, Eve's letter game proceeds at each turn from a state $q$ of $\A$, starting from the initial state of $\A$, as follows:
\begin{itemize}
\item Adam chooses a letter $\letter$,
\item Adam and Eve play on the one-step arena over $\letter$ from $q$ to a new state $q'$, where Eve resolves disjunctions and Adam conjunctions.
\end{itemize}
A play of the letter game thus generates a word $w$ and a path $\rho$ of $\A$ on $w$. Eve wins this play if either $w\notin L(\A)$ or $\rho$ is accepting in $\A$.

Adam's letter game is similar, except that Eve chooses letters and Adam wins if either $w\in L(\A)$ or the path $\rho$ is rejecting.
\end{definition}
A more formal definition is given in \cref{ap:Alternating}.

\begin{definition}[GFG automata~\cite{BL19}]
An automaton $\A$ is \emph{$\EGFG$} if Eve wins her letter game; it is \emph{$\AGFG$} if Adam wins his letter game. Finally, $\A$ is \emph{GFG} if it is both $\EGFG$ and $\AGFG$.
\end{definition}

As shown in \cite[Theorem~8]{BL19}, an automaton $\A$ is GFG if and only if it is indeed ``good for playing games'', in the sense that its product with every game whose winning condition is $L(\A)$ preserves the winner of the game.

\subsection{Alternating GFG vs. Nondeterministic and Universal Ones}


We show in this section that alternating GFG automata can be more succinct than both nondeterministic and universal GFG automata.

\begin{lemma}
\label{lem:Cn_family}
There is a family $(\C_n)_{n\in\Nat}$ of alternating GFG $\{0,1,2\}$-parity automata of size linear in $n$ over a fixed alphabet, such that every nondeterministic GFG parity automaton and universal GFG parity automaton for $L(\C_n)$ is of size $2^{\Omega(n)}$.
\end{lemma}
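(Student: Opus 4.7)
The plan is to construct an explicit family of languages $L_n$ over a fixed alphabet that admits a linear-size alternating GFG $\{0,1,2\}$-parity automaton $\C_n$, and then to establish matching exponential lower bounds for both nondeterministic GFG and universal GFG parity automata.

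For the construction, I would take $L_n$ of the form $\bigwedge_{i=1}^{n} L_i$, where the alphabet is fixed and encodes $n$ parallel ``coordinates'' through a block structure with a single delimiter symbol, and where each $L_i$ is a disjunctive parity property on coordinate $i$ that can be recognised by a constant-size nondeterministic GFG $\{0,1,2\}$-parity gadget $\A_i$. The alternating automaton $\C_n$ then consists of a small prefix that, on each block, universally branches over $i\in[n]$, dispatching to a copy of the gadget $\A_i$ responsible for coordinate $i$. Since the universal branching is a single layer of conjunction and each gadget has constant size, $\C_n$ has size $O(n)$, and its priorities stay in $\{0,1,2\}$. The GFGness of $\C_n$ reduces to the GFGness of each $\A_i$: after Adam commits to coordinate $i$ in Eve's letter game, Eve plays her GFG strategy inside $\A_i$, and symmetrically for Adam's letter game.

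For the nondeterministic GFG lower bound, I would set up a fooling-set argument adapted to the GFG setting. For each subset $S\subseteq[n]$ I would construct a finite prefix $u_S$ together with tail extensions $v_T$ indexed by $T\subseteq[n]$, such that $u_S v_T\in L_n$ if and only if $T$ is ``compatible'' with $S$ in a prescribed manner (for example, $T\subseteq S$). Given any hypothetical NGFG parity automaton $\B$ for $L_n$ with Eve's GFG strategy $\sigma$, let $q_S$ be the state reached after reading $u_S$ under $\sigma$. The key step is a surgery-and-pigeonhole argument: if $q_S=q_{S'}$ for $S\neq S'$, one can splice the strategies $\sigma$ uses on $u_S v_T$ and $u_{S'} v_T$ to derive a contradiction with GFGness on a carefully chosen $T$ that distinguishes $S$ from $S'$. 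This forces $|\{q_S : S\subseteq [n]\}|\geq 2^{\Omega(n)}$, hence $|\B|\geq 2^{\Omega(n)}$.

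The universal GFG lower bound follows by dualisation: a universal GFG parity automaton for $L_n$ of size $N$ yields, by $\Dual{\cdot}$, a nondeterministic GFG parity automaton for $\overline{L_n}$ of the same size. I would choose the primitives $L_i$ so that $\overline{L_n}=\bigvee_i \overline{L_i}$ admits a symmetric fooling-set construction — for instance by picking each $L_i$ so that $\overline{L_i}$ has the same ``track'' shape as some $L_j$ — so the NGFG argument transfers verbatim to $\overline{L_n}$, giving $2^{\Omega(n)}$ for universal GFG. The main obstacle will be the fooling-set argument itself: naive fooling sets capture only the Myhill--Nerode dimension and can be defeated by a clever GFG strategy that ``delays'' commitment, so the distinguishing family $(u_S,v_T)$ must be chosen robustly enough that GFGness genuinely enforces the state-separation. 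Balancing this robustness against keeping $L_n$ simple enough that $\C_n$ remains linear and uses only three priorities, while simultaneously having a symmetric dual, is the crux of the proof.
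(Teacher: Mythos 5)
Your proposal has a genuine gap: the lower bound is never actually proved, and the technique you sketch for it is the one that is known to be unreliable for GFG automata. You yourself flag that ``the crux'' is making the fooling-set/splicing argument robust against a GFG strategy that delays commitment, but this is not a technicality one can defer --- GFG automata can be exponentially \emph{smaller} than the Myhill--Nerode index (this is exactly the content of the~\cite{KS15} separation between GFG-NCW and DCW), so counting residual classes reached after prefixes $u_S$ does not, by itself, force many states, and no concrete distinguishing family $(u_S,v_T)$ with a working surgery argument is given. The paper avoids this entirely with a different key lemma: by~\cite[Theorem~4]{BKKS13}, a GFG automaton for a language $L$ and a GFG automaton for its complement compose into a \emph{deterministic} automaton of product size. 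Starting from the known family $\A_n$ of linear-size GFG-NCWs whose language $L_n$ requires DPWs of size $2^{\Omega(n)}$, any small GFG NPW (or, after dualising, UPW) for $\overline{L_n}$ would combine with $\A_n$ to give a small DPW for $L_n$, a contradiction. This converts an already-established deterministic lower bound into a GFG lower bound, which is exactly the step your outline is missing.

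The construction is also different in a way that matters. The paper takes $\C_n$ to be a deterministic one-letter dispatch into $\A_n$ (on $a$) and its dual $\overline{\A_n}$ (on $b$), so $L(\C_n)=aL_n\cup b\overline{L_n}$; this forces a nondeterministic GFG automaton to handle $\overline{L_n}$ (hard for nondeterminism) and a universal GFG automaton to handle $L_n$ (hard for universality), and GFGness of $\C_n$ is immediate since both components are driven by the single GFG strategy of $\A_n$. Your conjunction $\bigwedge_i L_i$ over constant-size gadgets risks the opposite: a language built as a single layer of universal branching over small components tends to admit a small universal automaton outright, so the universal lower bound would likely fail unless the $L_i$ are themselves individually hard --- at which point $\C_n$ is no longer linear. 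If you want to salvage your approach, the essential ingredients to import are (i) a language already known to be deterministically hard but GFG-nondeterministically easy, and (ii) the product theorem to transfer that hardness to GFG automata for the complement.
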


\ProofSketch
We use the succinctness result from \cite[Thm. 1]{KS15}, stating that there exists a family $\A_n$ of NCW-GFG with size linear in $n$, such that any DPW for $L(\A_n)$ has exponential size.
Combining $\A_n$ and its dual into a single alternating automaton gives us the wanted result. See \cref{ap:Cn} for a detailed construction.
\end{proof}

\subsection{Deciding Half-GFGness}

In order to decide GFGness, it is enough to be able to decide the $\EGFG$ property on the automaton and its dual.
A natural first approach is therefore to study the complexity of deciding whether an APW is $\EGFG$.
Yet, we will show that already on finite words, this problem is more difficult than deciding GFGness.

\begin{lemma}
\label{lem:EGFG_PSPACE}
Deciding whether an AFA is $\EGFG$ is \pspace-hard.
\end{lemma}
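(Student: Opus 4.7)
My plan is to prove \pspace-hardness by giving a polynomial-time reduction from the universality problem for NFAs on finite words: given an NFA $\N$ over $\Sigma$, decide whether $L(\N) = \Sigma^{*}$, a classical \pspace-complete problem. Given $\N$, I would construct an AFA $\A$ of size polynomial in $|\N|$ such that $\A$ is $\EGFG$ if and only if $L(\N) = \Sigma^{*}$.

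The construction would combine $\N$ with its dual $\overline{\N}$, which is a universal automaton recognizing $\overline{L(\N)}$. A naive top-level disjunctive combination $\A = \N \vee \overline{\N}$ gives $L(\A) = \Sigma^{*}$ but is too permissive, as it allows Eve to commit to a branch based only on the first letter. The refined construction would therefore interleave the existential $\N$-track with Adam's universal $\overline{\N}$-track throughout the automaton, so that Eve's disjunctive commitments have to be made and remade at every step, and Adam's universal choices can repeatedly challenge Eve to exhibit an accepting run of $\N$ on the prefix read so far.

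I would then prove the two directions of the equivalence. For $(\Leftarrow)$, assuming $L(\N) = \Sigma^{*}$, I would exhibit an explicit online Eve strategy: since every input word is in $L(\N)$, she can always answer Adam's challenges by committing to a suitable $\N$-run, with the $\overline{\N}$-track offering no useful alternative. For $(\Rightarrow)$, assuming some $w^{*} \notin L(\N)$, I would describe an Adam strategy that plays the letters of $w^{*}$ and uses his universal moves to drive Eve into a losing configuration: $w^{*} \in L(\A) = \Sigma^{*}$ forces Eve to win, yet no accepting $\N$-run on $w^{*}$ exists, and the interleaving prevents her from safely retreating into the $\overline{\N}$-track.

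The main obstacle is the design of this interleaving. It must simultaneously ensure $L(\A) = \Sigma^{*}$, keep $|\A|$ polynomial in $|\N|$, and produce the tight equivalence between $\EGFG(\A)$ and universality of $\N$. The subtle part is preventing Eve from trivializing the game by committing to the $\overline{\N}$-track early; this requires the $\overline{\N}$-track to be ``gated'' by Adam's universal choices that can be safely met only when $\N$ is in fact universal. I expect the $(\Rightarrow)$ direction, where one must rule out every mixed Eve strategy that alternates between the two tracks, to be the technically demanding step, and the right gating gadget to be the key design choice that makes the equivalence hold.
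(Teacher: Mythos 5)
Your choice of source problem (NFA universality) and the intended polarity of the reduction ($\A$ is $\EGFG$ iff $L(\N)=\Sigma^*$) match the paper, but the mechanism you sketch has a genuine gap that no ``interleaving gadget'' of the kind you describe can repair. Your $(\Leftarrow)$ direction argues that when $L(\N)=\Sigma^*$, Eve ``can always answer Adam's challenges by committing to a suitable $\N$-run''. This conflates universality of $L(\N)$ with GFGness of $\N$: an NFA recognising $\Sigma^*$ need not admit an online strategy for resolving its nondeterminism (e.g.\ an NFA whose initial choice is between a component for even-length words and one for odd-length words; Adam stops the word at the wrong parity). Any construction that, in the positive case, requires Eve to actually build an accepting run of $\N$ letter by letter will therefore fail, and since deciding whether such an online strategy exists is essentially the problem being reduced to, the ``gating gadget'' you defer to is carrying the entire weight of the proof and is not supplied. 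Your diagnosis of the naive $\N\lor\overline{\N}$ construction is also off: its defect is not that it is too permissive for Eve, but that it forces her either to resolve $\N$'s nondeterminism online or to surrender the path to Adam in $\overline{\N}$, so its $\EGFG$ status is not correlated with universality in the way you need.

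The paper's reduction avoids this trap by never asking Eve to simulate $\N$ at all. It builds $\B$ from the \emph{dual} $\overline{\N}$ --- so all of $\N$'s branching becomes universality, resolved by Adam and irrelevant to the $\EGFG$ question --- and prepends a single artificial existential guess: Eve must predict the second letter before seeing it, moving to a rejecting sink $\bot$ on a wrong guess and to the initial state of $\overline{\N}$ otherwise. Then $L(\B)=\Sigma^2\, L(\overline{\N})$, and the equivalence is immediate: if $L(\N)=\Sigma^*$ then $L(\B)=\emptyset$ and Eve wins her letter game \emph{vacuously}; if some $u\notin L(\N)$, Adam waits for Eve's guess, plays the other letter to send her to $\bot$, and then plays $u$, producing a word in $L(\B)$ on which her path rejects. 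The argument crucially exploits that $\B$ need not be $\AGFG$ --- a degree of freedom your symmetric two-track design does not use. I would rework your attempt around the idea that positive instances should make Eve's letter game \emph{trivially} winnable (empty language) rather than winnable by a clever simulation.
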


\ProofSketch
We reduce from NFA universality: starting from an NFA $\A$, we build an AFA $\B$ based on the dual of $\A$, with an additional non-GFG choice to be resolved by Eve. This AFA $\B$ is $\EGFG$ if and only if $L(\B)=\emptyset$, which happens if and only if $L(\A)=\Sigma^*$.  We crucially use the fact that $\B$ is not necessarily $\AGFG$. See Appendix \ref{ap:pspace} for a detailed construction.
\end{proof}

For B\"uchi automata, and so in particular for finite words, we can give an \exptime algorithm for this problem.

\begin{lemma}
\label{lem:EGFG_EXPTIME}
Deciding whether an ABW is $\EGFG$ is in \exptime.
\end{lemma}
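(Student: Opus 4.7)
The plan is to reduce deciding $\EGFG$-ness of an~ABW~$\A$ with $n$ states to deciding (ordinary) GFGness of a~nondeterministic B\"uchi automaton~$\B$ of size~$2^{O(n)}$, and then to invoke the \ptime algorithm for NBW GFGness based on the~$G_2$ game of~\cite{BK18}.

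First, we apply the Miyano--Hayashi construction to~$\A$, obtaining an~NBW~$\B$ with $L(\B)=L(\A)$ and $|Q_\B|\le 2^{O(n)}$. States of~$\B$ are pairs $(S,O)$ with $O\subseteq S\subseteq Q$, and a~single nondeterministic move of~$\B$ on letter~$a$ corresponds to picking a~function $\xi\colon Q\to\BP(Q)$ that resolves the~disjunctions in every $\delta(q,a)$. The~B\"uchi condition ``$O$ is emptied infinitely often'' is, by construction, equivalent to every branch of the~induced alternating run-DAG of~$\A$ being B\"uchi-accepting in~$\A$.

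The core of the~proof is the~equivalence ``$\A$ is $\EGFG$ iff $\B$ is GFG''. The direction $(\Leftarrow)$ is direct: a~winning online strategy $\sigma_B$ for Eve in~$\B$'s GFG game yields the~letter-game strategy $\sigma_L(q,u)=\sigma_B(u)(q)$ on~$\A$, since whenever $w\in L(\A)=L(\B)$ the~MH run via~$\sigma_B$ is accepting and therefore every branch of the~run-DAG---in particular the~letter-game path~$\rho$---is B\"uchi-accepting. For~$(\Rightarrow)$ we view the~letter game as a~parity game on the~product of~$Q$ with the~state space of a~deterministic parity recognizer of~$L(\A)$, invoked only abstractly (we never construct it). Positional determinacy of parity games then yields a~winning Eve strategy that depends solely on the~current $\A$-state and on a~deterministic function of the~prefix, which \emph{uniformises} her $\lor$-choices across all of~$Q$. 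Setting~$\sigma_B(u)(q')$ to be the~$\lor$-choice prescribed at~$q'$ after prefix~$u$, a~routine check shows that every branch of the~MH run-DAG induced by~$\sigma_B$ on a~word $w\in L(\A)$ arises as a~letter-game play against~$\sigma_L$, hence is B\"uchi-accepting.

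Combining the~equivalence with the~\ptime NBW GFGness check of~\cite{BK18} applied to~$\B$ yields the~claimed~\exptime procedure. The~main obstacle is the~$(\Rightarrow)$ direction of the~equivalence: a~letter-game winning strategy may \emph{a~priori} depend on the~full history of Adam's~$\land$-choices, while an~online strategy for~$\B$ only sees the~prefix. The~abstract appeal to positional determinacy on the~product with a~deterministic recognizer of~$L(\A)$ is what lets us extract the~required prefix-based strategy without ever computing that recognizer.
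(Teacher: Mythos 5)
Your proposal is correct and follows essentially the same route as the paper: apply the Miyano--Hayashi breakpoint construction to obtain an exponential-size NBW $\B$, show that $\A$ is $\EGFG$ iff $\B$ is GFG, and then run the \ptime $G_2$-based GFGness check of~\cite{BK18} on $\B$. The only difference is that you spell out the forward direction of the equivalence (via positional determinacy on the product with a deterministic recognizer), whereas the paper simply cites it as~\cite[Lemma~23]{BL19}.
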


\begin{proof}
It is shown in~\cite[Lemma~23]{BL19} that removing alternation from an ABW $\A$ using the breakpoint construction~\cite{MH84} yields an NBW such that if $\A$ is $\EGFG$ then $\B$ is GFG. Moreover, it is straightforward to show that the converse also holds, i.e., if $\B$ is GFG then $\A$ is $\EGFG$, since playing Eve's letter game in $\B$ is more difficult for Eve than playing it in $\A$.
This means that starting from an ABW $\A$, we can build an exponential size NBW $\B$ via breakpoint construction, and test whether $\B$ is GFG via the algorithm from~\cite{BK18}, in time polynomial with respect to $\B$. Overall, this yields an \exptime algorithm deciding whether $\A$ is $\EGFG$.
\end{proof}

In contrast, we will show in \cref{sec:deciding} that deciding GFGness for AFA and AWW is in \ptime, and conjecture that the same is true for APW of every fixed index.


\section{Determinisation of Alternating GFG Parity Automata}
\label{sec:Determinisation}

In this section we provide a procedure that, given an~alternating GFG parity automaton, produces an equivalent deterministic parity automaton with singly exponentially many states. To do so, we first provide an alternation-removal procedure for Rabin automata that preserves GFG status. Then, we apply this procedure to both the input automaton and its complement and use the GFG strategies in these two automata to determinise the input. Our proofs, in~\cref{ap:Determinisation}, rely on some analysis of when GFG strategies can use the history of the word, rather than the whole play, and on the memoryless determinacy of Rabin games.

Our method for going from alternating to nondeterministic automata is similar to that of Dax and Klaedtke~\cite{DK08}: they take a nondeterministic automaton that recognises the universally-accepting words in $(\boxes_A)^\omega$ and add nondeterminism that upon reading a letter $\letter\in \Sigma$ chooses a box in $\boxes_A$ over \letter. Yet in our approach, in order to guarantee that the outcome preserves GFGnesss, the intermediate automaton is deterministic.

\subsection{Alternation Removal in GFG Rabin Automata}


\begin{restatable}{theorem}{thmexpgfgdealt}
\label{thm:exp-gfg-dealt}
Consider an alternating Rabin (resp.\ parity) automaton $\A$ with $n$ states and index $k$. There exists a nondeterministic parity automaton $\boxA$ with $2^{O(nk \log nk)}$ (resp. $2^{O(n \log n)}$) states that is equivalent to $\A$ such that if $\A$ if GFG then $\boxA$ is also GFG.
\end{restatable}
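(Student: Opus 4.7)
The plan is to factor the construction into two steps and then address GFG preservation separately. First, I build a \emph{deterministic} parity automaton $\D_\A$ over the alphabet $\boxes_\A$ that recognises exactly the universally accepting sequences in $\boxes_\A^\omega$ in the sense of \cref{def:path-in-boxes}. The natural object computing this language is the universal automaton with state space $Q$ in which reading a box $b \in \boxes_{\A,\letter}$ from state $q$ branches conjunctively over all $q'$ with $(q,\letter,q') \in b$, carrying the priority (or Rabin pair) of the corresponding transition of $\A$. Dualising and applying classical Safra-style determinisation to the resulting nondeterministic parity (resp.\ Streett) automaton, then complementing back, yields $\D_\A$ with $2^{O(n \log n)}$ states in the parity case and $2^{O(nk \log nk)}$ states in the Rabin case, together with a polynomially-sized priority set.

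Second, I obtain $\boxA$ over $\Sigma$ from $\D_\A$ by inserting a single layer of nondeterminism: $\boxA$ shares the state set and priorities of $\D_\A$, and on letter $\letter$ its transition condition is a disjunction over boxes $b \in \boxes_{\A,\letter}$ of the (singleton) transition of $\D_\A$ on $b$. Correctness of $L(\boxA) = L(\A)$ follows directly from the definition of boxes: Eve's strategies in the model-checking game of $\A$ on $w$ correspond to sequences $\pi$ of boxes matching $w$, such a sequence is winning for Eve iff it is universally accepting, and by construction $\D_\A$ accepts exactly those $\pi$; hence there exists a choice of boxes that $\boxA$ can make to accept $w$ iff $w \in L(\A)$.

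The technical heart of the theorem is the GFG preservation step. Assume $\A$ is GFG and fix a winning strategy $\sigma$ for Eve in her letter game on $\A$. The key observation is that Eve's disjunctive choices on a letter $\letter$ at each currently reachable state of $\A$ can be packaged into a \emph{single} box $b \in \boxes_{\A,\letter}$ depending only on the history of letters read so far; this is where I exploit the positional determinacy of Rabin games to first convert $\sigma$, if needed, into a strategy that depends only on the input history rather than on Adam's internal conjunctive choices. The resulting history-to-box map, composed with the deterministic transitions of $\D_\A$, defines a candidate strategy for Eve in $\boxA$'s letter game. Since $\boxA$ has no universal branching, every play against this strategy traces out a single path of the generated box sequence; by \cref{def:path-in-boxes} this path corresponds to some Adam play in $\A$ against $\sigma$, and is therefore accepting in $\A$. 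Thus the box sequence is universally accepting, $\D_\A$ accepts it, and Eve wins her letter game on $\boxA$.

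The main obstacle I expect is this last reduction of $\sigma$ to a history-dependent box map, especially in the Rabin case, where achieving the appropriate positional form for Eve's letter-game strategy requires some care; the construction of $\boxA$ from $\D_\A$ is otherwise routine, and the determinisation in the first step is a direct invocation of known results. The detailed development is relegated to \cref{ap:Determinisation}.
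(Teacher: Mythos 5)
Your proposal follows essentially the same route as the paper: a deterministic parity automaton over $\boxes_\A$ for the universally accepting box sequences (obtained by determinising and complementing the nondeterministic dual-condition automaton that guesses a bad path), a single layer of nondeterminism over boxes to get $\boxA$, correctness via the correspondence between positional Eve strategies in the model-checking game and box sequences, and GFG preservation by turning a letter-game strategy into a history-to-box map. The one step you flag as delicate is indeed where the paper does its real work—it unfolds the letter game over the tree arena $\Sigma^*$ and composes with a deterministic automaton for the complement of $L(\A)$ so that the winning condition becomes Rabin and positional determinacy applies—but you correctly identify both the need for this reduction and the tool (Rabin positional determinacy) that resolves it.
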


In~\cref{sec:deciding}, where we discuss decision procedures, we will show that $\boxA$ is GFG \textit{exactly} when $\A$ is GFG. For now, the rest of this section is devoted to the proof of~\cref{thm:exp-gfg-dealt}, of which a detailed version can be found in~\cref{app:alt-rem-gfg-rabin}. 

\begin{restatable}{lemma}{lempropofd}
\label{lem:prop-of-d}
Consider an alternating Rabin (resp.\ parity) automaton $\A$ with $n$ states and index $k$. Then there exists a deterministic parity automaton $\B$ with $2^{O(nk \log nk)}$ (resp. $2^{O(n \log n)}$) states over the alphabet $\boxes_\A$ that recognises the set of universally-accepting words for $\A$. If $\A$ is a B\"uchi automaton, then $\B$ can also been taken as B\"uchi, and in general the parity index of the automaton $\B$ is linear in the number of transitions~of~$\A$.
\end{restatable}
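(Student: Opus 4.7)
The plan is to build $\B$ by first constructing a nondeterministic automaton over $\boxes_\A$ whose language is the \emph{complement} of the universally-accepting sequences, and then determinising and complementing. Concretely, I define $\N$ with state set $Q$ and initial state $\iota$; for each box $b\in\boxes_{\A,\letter}$ the transitions on reading $b$ are $q\to q'$ for every $(q,\letter,q')\in b$, carrying the priority $\alpha(q,\letter,q')$ in the parity case, or the same Rabin-pair membership as in $\A$ in the Rabin case. A run of $\N$ from $\iota$ on a sequence $\pi=b_0 b_1\ldots$ is then exactly a path of $\pi$ from $\iota$ in the sense of \cref{def:path-in-boxes}. Equipping $\N$ with the \emph{dual} condition (parity with shifted priorities, or Streett), we get that $\N$ accepts $\pi$ iff $\pi$ admits a rejecting path of $\A$, so the complement of $L(\N)$ is exactly the language we want $\B$ to recognise.

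The main work is to determinise $\N$ and complement. In the parity case, $\N$ is a nondeterministic parity automaton with $n$ states and index $k$; I would invoke Piterman's determinisation to obtain a deterministic parity automaton of $2^{O(n\log n)}$ states whose index is linear in $n$, and then swap priorities back to obtain $\B$. In the Rabin case, $\N$ carries a nondeterministic Streett condition with $k$ pairs, and a Safra-style determinisation of Streett produces a deterministic Rabin, hence parity, automaton of $2^{O(nk\log nk)}$ states; complementing yields $\B$, whose parity index is linear in the number of pairs and thus in the number of transitions of $\A$.

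For the B\"uchi case the general construction already suffices to get an exponential blow-up, but to obtain the sharper claim that $\B$ can be taken B\"uchi I would handle it separately: when $\A$ is alternating B\"uchi, $\N$ is a nondeterministic coB\"uchi automaton, so dually its complement is a universal B\"uchi automaton of $n$ states. The Miyano--Hayashi breakpoint construction then turns this universal B\"uchi automaton directly into a deterministic B\"uchi automaton of $2^{O(n)}$ states, giving the claimed tighter bound.

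The main obstacle I foresee is the bookkeeping of acceptance conditions through dualisation, determinisation, and complementation, and in particular matching the bound that the parity index of $\B$ stays linear in the number of transitions of $\A$. The Rabin case is the most delicate, since Rabin is not self-dual: one has to compose the Streett determinisation with a Rabin-to-parity reindexing while keeping both the state count and the final index within the stated bounds. A secondary subtlety is to confirm that restricting the notion of ``path of $\pi$'' to paths starting from $\iota$ (which is what the model-checking semantics requires) matches the language recognised by $\N$ from its initial state, so that $\B$ really captures universal acceptance with respect to the correct collection of paths.
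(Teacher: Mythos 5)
Your proposal is correct and follows essentially the same route as the paper: build a nondeterministic automaton over $\boxes_\A$ with the dual (Streett/parity) condition that guesses a rejecting path, then determinise and complement, with the same citations and bounds. The only cosmetic difference is in the B\"uchi sub-case, where the paper determinises the resulting NCW directly into a DCW and dualises, while you dualise first to a universal B\"uchi automaton and apply Miyano--Hayashi; both are standard and give the same $2^{O(n)}$ bound.
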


\ProofSketch
We construct the automaton $\B$ by determinising and complementing a nondeterministic Streett (resp.\ parity or coB\"uchi) automaton over the alphabet $\boxes_\A$ that recognises the complement of the set of universally-accepting words for $\A$, that is, an~automaton that guesses a path that is not accepting, and has the dual acceptance condition to $\A$. 
\end{proof}

\NotNeeded{
\begin{proof}
Notice that it is easy to construct a nondeterministic Streett (resp.\ parity) automaton $\S$ over the alphabet $\boxes_\A$ that recognises the complement of the set of universally-accepting words for $\A$---it is enough to guess a path that is not accepting, and have the acceptance condition that is the dual of $\A$'s condition. 
Formally, for an alternating Rabin (resp.\ parity) automaton $\A=\tuple{\Sigma, Q, \iota, \delta, \alpha}$, we define the nondeterministic Street (resp.\ parity) automaton $\S=\tuple{\boxes_\A, Q, \iota, \delta_\S, \overline{\alpha}}$, where $\overline{\alpha}$ is the dual of $\alpha$ and $\delta_\S$ is defined as follows.
For every states $q,q'\in Q$ and box $\beta\in\boxes_\A$, we have $q'\in\delta_\S(q,\beta)$ iff $\tuple{q,q'}\in\beta$.

Now, one can translate $\S$ to an equivalent deterministic parity automaton $\B'$ with $2^{O(nk \log nk)}$ states \cite{Pit07} (resp.\ $2^{O(n \log n)}$ states~\cite{CZ12,SV14}), and then complement the acceptance condition of $\B'$, getting the required automaton $\B$.

Since nondeterministic coB\"uchi automata can be determinised into deterministic coB\"uchi automata, if $\A$ is a B\"uchi automaton, so is $\B$.
\end{proof}
}


We now build the automaton $\boxA$ of \cref{thm:exp-gfg-dealt}. It is the same as the automaton $\B$ of \cref{lem:prop-of-d}, except that the alphabet is $\Sigma$ and the transition function is defined as follows: For every state $p$ of $\B$ and $\letter\in\Sigma$, we have $\delta_{\boxA}(p,\letter):= \bigcup_{\beta\in\boxes_{\tuple{\A,\letter}}} \delta_{\B}(p,\beta)$.

In other words, the automaton $\boxA$ reads a~letter $\letter$, nondeterministically guesses a~box $\beta\in\boxes_{\A,\letter}$, and follows the transition of $\B$ over $\beta$. Thus, the runs of $\boxA$ over a~word $w\in\Sigma^\omega$ are in bijection with sequences of boxes $(\beta_i)_{i\in\Nat}$ such that $\beta_i\in\boxes_{\A,w_i}$ for $i\in\Nat$.


Fix an infinite word $w\in\Sigma^\omega$. Our aim is to prove that $w\in L(\A)\Leftrightarrow w\in L(\boxA)$. 

\begin{restatable}{lemma}{lempositionalstrategiesandruns}
\label{lem:PositionalStrategiesAndRuns}
There exists a bijection between positional strategies of Eve in the acceptance game of $\A$ over $w$ and runs of $\boxA$ over $w$. Moreover, a strategy is winning if and only if the corresponding run is accepting.
Then $L(\A)=L(\boxA)$.
\end{restatable}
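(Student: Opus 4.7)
The plan is to exploit the fact that a positional Eve strategy on the one-step arena over a letter $\letter$ is exactly what a box in $\boxes_{\A,\letter}$ encodes. Fix a positional strategy $\sigma$ of Eve in the acceptance game of $\A$ on $w$ (i.e., in the synchronised product $w\times\A$). At each word index $i$, the restriction of $\sigma$ to the one-step arena over $w_i$ is an Eve strategy $\sigma_i$ resolving the disjunctions in each $\delta(q,w_i)$, and thus determines a box $\beta_i:=\beta(\A,w_i,\sigma_i)\in\boxes_{\A,w_i}$. By the definition of $\delta_{\boxA}$, the sequence $(\beta_i)_{i\in\Nat}$ is exactly a run of $\boxA$ on $w$. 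Conversely, every run of $\boxA$ on $w$ is such a sequence of boxes, and each $\beta_i\in\boxes_{\A,w_i}$ arises from (at least one, canonically chosen) Eve strategy on the one-step arena over $w_i$; assembling these choices across indices yields a positional Eve strategy in the acceptance game. This gives the stated bijection, identifying positional strategies with their induced box sequence.

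For the winning-to-accepting correspondence, I would unfold the definitions: a play of the acceptance game consistent with $\sigma$ generates a path $\rho=(q_i,w_i,q_{i+1})_{i\in\Nat}$ of $\A$ on $w$, and by the very definition of the box $\beta_i$, such plays are in bijection with paths of the box sequence $\pi=(\beta_i)_i$ in the sense of \cref{def:path-in-boxes}. Hence $\sigma$ is winning in the acceptance game iff every such $\rho$ is accepting in $\A$, i.e., iff $\pi$ is universally accepting. Now invoke the characterisation from \cref{lem:prop-of-d}: the deterministic automaton $\B$ (whose transition function underlies $\boxA$) accepts precisely the universally-accepting sequences of boxes. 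Thus $\pi$ is universally accepting iff the $\B$-run on $\pi$ is accepting, which is exactly the condition that the corresponding run of $\boxA$ on $w$ (which follows $\pi$) is accepting.

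The language equality then follows by positional determinacy. For parity acceptance, positional determinacy of parity games gives Eve a positional winning strategy whenever she has any winning strategy; for Rabin acceptance, Eve is the Rabin player and Emerson--Jutla positional determinacy applies on her side. Hence $w\in L(\A)$ iff Eve has a positional winning strategy in the acceptance game iff, by the bijection, there is an accepting run of $\boxA$ on $w$, iff $w\in L(\boxA)$.

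The main obstacle is making the bijection statement rigorous, since distinct Eve strategies on a one-step arena may yield the same box (they only differ on subformulas unreachable under their own choices). The cleanest way to handle this is to identify a positional strategy in the acceptance game with the family of boxes it induces at each word index; this is the level at which the bijection with runs of $\boxA$ is literally true, and the winning/accepting equivalence above refers only to boxes. The details are notationally heavier once the synchronised product is spelled out, but contain no substantive difficulty once one observes that both sides of the bijection are precisely $\omega$-sequences $(\beta_i)_i$ with $\beta_i\in\boxes_{\A,w_i}$.
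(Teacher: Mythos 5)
Your proposal is correct and follows essentially the same route as the paper: decompose a positional strategy into its per-letter one-step strategies, identify these with boxes (the paper's Proposition~\ref{prop:choice-to-strat}), match plays with paths of the box sequence so that winning corresponds to universal acceptance, and conclude via Lemma~\ref{lem:prop-of-d} and positional determinacy of Rabin/parity acceptance games. Your extra care about distinct strategies inducing the same box is a reasonable refinement of what the paper sweeps into Proposition~\ref{prop:choice-to-strat}, but it does not change the substance of the argument.
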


\NotNeeded{
\begin{proof}
Consider a run of $\boxA$ over $w$, and observe that it corresponds to a sequence of boxes $\beta_0,\ldots$. Notice that each box $\beta_i$ corresponds to Eve's choices in $\A$ over $w_i$, and therefore provides a positional strategy for Eve in the one-step arena $R_{w_i}\times \A$. The sequence of these choices provides a positional strategy for Eve in $R_w\times \A$. 

Dually, given a positional strategy for Eve in $R_w\times \A$, one can extract a sequence of strategies for Eve in the one-step arenas $R_{w_i}\times \A$, and each of them corresponds to a box $\beta_i$. \cref{prop:choice-to-strat} shows that each path in $\beta_0,\ldots$ corresponds to a play consistent with the constructed strategy and vice versa: each play gives rise to a path.

Now, a run is accepting if and only if the sequence of boxes is universally accepting, which means exactly that all the plays consistent with the corresponding strategy are winning.
\end{proof}
}


\begin{remark}
The above alternation-removal procedure fails for alternating Streett automata~$\A$: since Streett games are not positionally determined for Eve, the acceptance game of $\A$ over a word $w$ is not positionally determined for Eve.
\end{remark}

\begin{restatable}{lemma}{lemgfgpreservation}
\label{lem:GFG-preservation}
For an alternating $\EGFG$ Rabin automaton $\A$, the automaton $\boxA$ is~GFG.
\end{restatable}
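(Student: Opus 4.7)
My plan is to convert Eve's winning strategy in the letter game on $\A$, given by the $\EGFG$ hypothesis, into a winning strategy for her in the letter game on $\boxA$. First, I would observe that $\boxA$ is nondeterministic, so Adam trivially wins his letter game (any accepting run Eve can produce on a word $w$ automatically witnesses $w\in L(\boxA)$, preventing her from satisfying the winning condition $w\notin L(\boxA)$). Thus $\AGFG$ holds for $\boxA$ for free and the target reduces to $\EGFG$. A strategy for Eve in her letter game on $\boxA$ amounts, by the definition of $\boxA$ on top of $\B$ from \cref{lem:prop-of-d}, to an assignment $u\mapsto \beta_u\in\boxes_{\A,\letter(u)}$ of a box to each nonempty letter history $u$ (where $\letter(u)$ denotes the last letter of $u$), and it is winning iff the induced sequence $\pi=(\beta_{w_0\cdots w_i})_{i\in\Nat}$ is universally accepting in $\A$ whenever $w\in L(\A)$.

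Next, I would extract the desired boxes from Eve's winning strategy $\sigma$ in the letter game on $\A$. Informally, $\beta_u$ should consist of the disjunction resolutions that $\sigma$ prescribes at every state simultaneously after reading $u$. The hard part is that $\sigma$'s prescription at a given state may a priori depend on the full play history---in particular on Adam's past state-choices---while a box requires one fixed prescription per state. My plan to bridge this gap relies on the memoryless determinacy of Rabin games: for every word $w$ the acceptance game of $\A$ on $w$ is Rabin, so Eve admits a positional winning strategy there, and by \cref{lem:PositionalStrategiesAndRuns} such positional strategies correspond bijectively to sequences of boxes. A selection argument---the delicate technical point to be carried out carefully in the appendix, exploiting the analysis of how GFG strategies can depend only on the history of the word---chooses these positional strategies coherently across all $w$, yielding a single assignment $u\mapsto\beta_u$ depending only on the letter history.

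With the boxes in hand, I would close the argument by direct verification. Any path $\rho=(q_0,w_0,q_1)(q_1,w_1,q_2)\ldots$ of the box sequence $\pi$ is the projection of a play of $\A$'s letter game consistent with $\sigma$ in which Adam plays the word $w$ and resolves each one-step arena so as to follow $\rho$. Since $\sigma$ wins, either $w\notin L(\A)$ or $\rho$ is accepting in $\A$. In the first case the constructed run of $\boxA$ on $w$ is irrelevant for the winning condition, and in the second every path of $\pi$ is accepting, so $\pi$ is universally accepting and the run of $\boxA$ is accepting. Combined with the automatic $\AGFG$ observation, this proves $\boxA$ is GFG.

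The single real obstacle is the coherent extraction of the boxes $\beta_u$ from $\sigma$; once positional determinacy of Rabin games is invoked and the selection argument is in place, the rest is routine unfolding of definitions.
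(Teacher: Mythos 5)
Your overall skeleton is the same as the paper's: reduce to $\EGFG$ of $\boxA$ (the $\AGFG$-for-free observation for nondeterministic automata is correct), reformulate a strategy in Eve's letter game on $\boxA$ as a map $u\mapsto\beta_u\in\boxes_{\A}$, and verify that every path of the resulting box sequence is a play consistent with Eve's strategy in $\A$'s letter game, hence accepting when $w\in L(\A)$ (this last step is exactly \cref{lem:strat-to-win-boxes}). You also correctly isolate the crux: Eve's $\EGFG$ strategy for $\A$ may depend on the full play history, including Adam's conjunction resolutions, whereas a box assignment must depend only on the word read.

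However, the crux is precisely where your argument has a genuine gap. Invoking positional determinacy of the per-word acceptance games $R_w\times\A$ gives you, for each $w\in L(\A)$ separately, some universally accepting box sequence --- but that is just \cref{lem:PositionalStrategiesAndRuns}, i.e.\ $L(\A)=L(\boxA)$, and it uses neither the $\EGFG$ hypothesis nor produces anything coherent across words: two words sharing a prefix $u$ may demand incompatible boxes at $u$, and a single function $u\mapsto\beta_u$ working for all $w$ simultaneously is exactly the GFGness of $\boxA$ that you are trying to prove. The ``selection argument'' you defer is therefore not a routine technicality but the entire content of the lemma. The paper's resolution is different in kind: it passes to the \emph{expanded} letter game over the tree arena $R^\ast_{A,\Sigma}\times\A$, whose positions already record the word prefix, and proves (\cref{thm:ExpandedGamePositional}) that \emph{this game} is positionally determined for Eve --- which requires the additional idea of composing with a deterministic parity automaton for the complement of $L(\A)$ so that the disjunctive winning condition ``$w\notin L(\A)$ or the path is accepting'' becomes a single Rabin condition, to which \cref{prop:RabinPositionalDeterminacy} applies. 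A positional strategy in that game restricts, on each one-step sub-arena between prefixes $w$ and $w\letter$, to a box $\beta_{w\letter}$, yielding the required word-history-only function. Without this (or an equivalent device), your proof does not go through.
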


Intuitively, this is because the construction of $\boxA$ preserves the nondeterminism of $\A$. 

\NotNeeded{
\begin{proof}
Let $\stratE$ be a positional winning strategy for Eve in her expanded letter game for $\A$ (over the arena $R^\ast_{A,\Sigma}\times \A$). The proof is based on the construction of the function $\stratE'\colon \Sigma^+\rightarrow \boxes_\A$, see the paragraph before \cref{def:univ-acc-box}.

Consider the following way of resolving the nondeterminism of $\boxA$: after reading $w\in\Sigma^\ast$, when the next letter $\letter\in\Sigma$ is provided, the automaton moves to the state $\delta_\D(p,\beta_{w\letter})$ where $\beta_{w\letter}=\stratE'(w\letter)$. Consider an infinite word $w\in L(\A)$ and let $\beta_0,\ldots$ be the sequence of boxes used to construct the run of $\boxA$ over $w$. Lemma~\ref{lem:strat-to-win-boxes} implies that this sequence is universally accepting and therefore, the constructed run of $\D$ must also be accepting.
\end{proof}
}

\subsection{Single-Exponential Determinisation}
\label{ssec:exp-det-of-alt}

The aim of this section is to prove the following determinisation theorem; See~\cref{ap:ssec:exp-det-of-alt} for a detailed proof.

\begin{restatable}{theorem}{thmdet}
If $\A$ is an alternating parity GFG automaton then there exists a deterministic parity automaton $\D$ that recognises the same language and has size at most exponential in the size of $\A$. Moreover, the parity index of $\D$ is the same as that of $\A$.
\end{restatable}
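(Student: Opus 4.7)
The plan is to leverage the alternation-removal result of \cref{thm:exp-gfg-dealt} applied to both $\A$ and its dual $\bar\A$, yielding GFG nondeterministic parity automata $\boxA$ recognising $L(\A)$ and $\boxAco$ recognising $L(\A)^c$, both of singly exponential size, and to combine their GFG strategies to determinise.

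First, I consider the product arena $\boxA \times \boxAco$ as a two-player game in which Adam supplies a letter from $\Sigma$ at each turn and Eve resolves the nondeterminism of both components. Her winning condition is the disjunction ``the $\boxA$-run is accepting \emph{or} the $\boxAco$-run is accepting'', which forms a Rabin condition built from the two parity conditions. Eve wins this game by playing her two GFG strategies in parallel: on $w \in L(\A)$ the $\boxA$-component is forced to accept, while on $w \notin L(\A)$ the $\boxAco$-component is forced to accept. By memoryless determinacy of Rabin games for Eve, she has a positional winning strategy $\hat\sigma$ on the exponentially-large product arena.

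The deterministic automaton $\D$ is then defined with state space the product arena and transitions given by $\hat\sigma$. To obtain the correct parity index, I read acceptance not from $\boxA$'s possibly larger parity but from $\A$'s original parity condition: via the bijection of \cref{lem:PositionalStrategiesAndRuns} between runs of $\boxA$ and positional Eve-strategies in the acceptance game of $\A$, the $\boxA$-projection of $\hat\sigma$ supplies Eve's disjunction choices in $\A$, while the $\boxAco$-projection dually supplies the conjunction choices, together pinpointing a unique path in $\A$ along which the original parity condition of index $k$ is evaluated. Correctness follows from the Rabin winning condition: on $w\in L(\A)$ the $\boxAco$-run cannot accept, so the induced path in $\A$ must be accepting; on $w\notin L(\A)$ the roles are reversed.

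The main obstacle I expect is this last step of reconciling the indices. Showing that the two projections of $\hat\sigma$ really do give compatible resolutions of $\A$'s nondeterministic and universal choices, so that a single path in $\A$ is well-defined and so that $\A$'s parity along that path correctly reflects membership in $L(\A)$, will require a careful bookkeeping argument that leans on the analysis of when GFG strategies can depend on the word-history alone rather than on the whole play, a technical ingredient mentioned in the introduction of \cref{sec:Determinisation}.
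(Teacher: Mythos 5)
Your plan is sound and, with two small repairs, proves the theorem, but it takes a genuinely different route from the paper. The paper solves \emph{two} games, $G(\A)$ and $G'(\A)$, each coupling one box automaton with $\A$ itself: in $G(\A)$ Eve drives $\boxAco$ and resolves $\A$'s disjunctions while Adam resolves its conjunctions, and dually for Adam in $G'(\A)$; the two positional strategies are then zipped together on the state space $Q_\A\times Q_{\boxA}\times Q_{\boxAco}$, so the path of $\A$ is built \emph{directly} during the simulation and its parity is read off immediately. You instead solve a \emph{single} game on $\boxA\times\boxAco$ (essentially the game $G''$ used for the \exptime{} decision procedure in \cref{ssec:deciding-exptime}), extract one positional strategy, and \emph{decode} the path of $\A$ afterwards from the two box sequences. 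The decoding does work, and this is worth making explicit: a box of $\A$ over $\letter$ resolves all disjunctions of the formulas $\delta(\cdot,\letter)$, a box of $\bar\A$ over $\letter$ resolves all conjunctions, and playing the two against each other yields a unique transition that lies in \emph{both} boxes; hence on $w\in L(\A)$ the $\boxAco$\=/run cannot accept, the $\A$-box sequence is universally accepting, and the decoded path accepts in $\A$, while on $w\notin L(\A)$ the $\bar\A$-box sequence is universally accepting for $\bar\A$, so the decoded path accepts in $\bar\A$, i.e.\ rejects in $\A$. The repairs: (i) the state space of $\D$ must also carry the current state of $\A$ (triples $(q,p_1,p_2)$, still singly exponential), since evaluating $\A$'s transition priorities along the decoded path requires knowing $q$; and (ii) Eve's moves in your game should be explicit box choices rather than mere successor states of the box automata, so that the decoding is well defined. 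Finally, the word-history analysis you anticipate needing at the end is in fact already consumed inside \cref{thm:exp-gfg-dealt} (to establish that $\boxA$ and $\boxAco$ are GFG); the determinisation step itself needs only Rabin positional determinacy and the box--path correspondence of \cref{def:path-in-boxes}. Your version buys a single game to solve and reuses machinery from the decidability section; the paper's version buys an immediate, decoding-free definition of the induced $\A$-path.
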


\begin{remark}
\cref{thm:exp-gfg-dealt} and~\cite[Theorem~4]{BKKS13} together give an exponential deterministic parity (Rabin) automaton for $L(\A)$. However, the index of $\A$ might not be preserved. On the other hand, from~\cite[Theorem~19]{BL19} we know that there exists a~deterministic parity automaton equivalent to $\A$ with the same index, but it might have more than exponentially many states.
\end{remark}

Observe that \cref{thm:exp-gfg-dealt} can be applied both to $\A$ and its dual. Therefore, we can fix a~pair of nondeterministic GFG parity automata $\boxA$ and $\boxAco$ that recognise $L(\A)$ and $L(\A)^\mathrm{c}$ respectively and are both of size exponential in $\A$.
We use the automata $\A$, $\boxA$, and $\boxAco$ to construct two auxiliary games.\\

The game $G(\A)$ proceeds from a configuration consisting of a pair $(p,q)$ of states from $\boxAco$ and $\A$ respectively, starting from their initial states, as follows:
\begin{itemize}
\item Adam chooses a letter $\letter\in \Sigma$;
\item Eve chooses a transition $\trans{p}{\letter}{p'}$ in $\boxAco$;
\item Eve and Adam play on the one-step arena over $\letter$ from $q$ to a new state $q'$.
\end{itemize}

A play in $G(\A)$ consists of a run $\rho$ in $\boxAco$ and a path $\rho'$ in $\A$. It is winning for Eve if either $\rho$ is accepting in $\boxAco$ (in which case $w\notin L(\A)$), or $\rho'$ is accepting in $\A$.\\

If $\A$ is $\EGFG$ and $\boxAco$ is GFG, Eve has a winning strategy in $G(\A)$ consisting of building a run in $\boxAco$ using her GFG strategy in $\boxAco$ and a path in $\A$ using her $\EGFG$ strategy in $\A$. This guarantees that if $w\in L(\A)$ then the path in $\A$ is accepting, and otherwise the run in $\boxAco$ is accepting.

We then argue that as the winning condition of $G(\A)$ is a Rabin condition, Eve also has a winning strategy that is positional in $\A$, that is, which only depends on the history of the word and the current position. See~\cref{ap:Determinisation} for details.

\NotNeeded{
A more formal definition is given in the appendix.

\todo{Appendix from here} 
First consider the synchronised product $R_{A,\Sigma}\times \boxAco$, which is a~game with labels of the form $\Sigma\times \Gamma_{\boxAco}$, where $\Gamma_{\boxAco}$ is the parity condition of $\boxAco$. Now, we can treat the automaton $\A$ as an~automaton over the alphabet $\Sigma\times \Gamma_{\boxAco}$ that just ignores the second component of the given letter. Thus, we can define a~game $G'= \big(R_{A,\Sigma}\times \boxAco\big)\times \A$.

Notice that $G'$ is naturally divided into rounds, between two consecutive positions of the form $(v,p,q)$, where $v$ is the unique position of $R_{A,\Sigma}$, $p$ is a~state of $\boxAco$ and $q$ is a~state of $\A$. Such a~round, starting in $(v,p,q)$ consists of first Adam choosing a~letter $\letter$; then Eve resolving nondeterminism of $\boxAco$ from $p$ over $\letter$; and then both players playing the game corresponding to the transition condition $\delta(q,\letter)$ of $\A$.

Let the winning condition of $G'$ say that either the sequence of transitions of $\boxAco$ is accepting or the sequence of transitions of $\A$ is accepting. Since $\A$ is $\EGFG$ and $\boxAco$ is GFG, we know that Eve has a~winning strategy in $G'$: she just plays her GFG strategies in both automata and is guaranteed to win whether the word produced by Adam is in $L(\A)$ or $L(\boxAco)$.

As the winning condition of $G'$ is a disjunction of two Rabin conditions, Eve has a positional winning strategy. Fix such a strategy $\stratE$.
\todo{ to here?}
}


\begin{remark}
There is some magic here: both the GFG strategies of Eve in $\A$ and in $\boxAco$ may require exponential memory, yet, when she needs to satisfy the disjunction of the two conditions, no more memory is needed. In a sense, the states of $\A$ provide the memory for $\boxAco$ and the states of $\boxAco$ provide the memory for $\A$.
\end{remark}

The game $G'(\A)$ is similar, except that Adam is given control of $\boxA$ and Eve is in charge of letters. This time Adam wins a play consisting of a run of $\boxA$ and a path in $\A$ if either the path of $\A$ is rejecting  or the run of $\boxA$ is accepting.

Accordingly, if $\A$ is GFG, then he can win by using the GFG strategy in $\boxA$ and the $\AGFG$ strategy in $\A$. Then if $w\in L(\A)$, the run in $\boxA$ is accepting, and otherwise the path of $\A$ is rejecting.
As before, he also has a positional winning strategy in $G'(\A)$.\\

\NotNeeded{
\todo{begin into appendix}
Now do the same with $\A$ and $\boxA$ for Adam: define $G$ as $\big(R_{E,\Sigma}\times \overline{\boxA}\big)\times \A$, where $\overline{\boxA}$ is the automaton $\boxA$ where the transitions are turned from nondeterministic to universal, i.e,\ we replace $\lor$ with $\land$.

Again, in a~round of $G$ from a~position $(v,p,q)$: Eve plays a letter $\letter$; Adam resolves nondeterminism of $\boxA$; then they both resolve the choices in $\A$. Let Adam win $G$ if either the play of $\A$ is rejecting or the run of $\boxA$ is accepting. Again we can ensure that Adam has a~winning strategy in $G$, because both automata are GFG: he uses the GFG strategy of $\boxA$ and the $\AGFG$ strategy over $\A$. If the word given by Eve belongs to $L(\A)$ then Adam wins by producing an accepting run of $\boxA$, otherwise he wins by refuting an accepting run of $\A$. Let $\stratA$ be his positional winning strategy in that game.

\todo{end}
}

We are now ready to build the deterministic automaton from a GFG APW $\A$, using positional winning strategies $\stratE$ and $\stratA$ for Eve and Adam in $G(\A)$ and $G'(\A)$, respectively.
 
Let $\D$ be the automaton with states of the form $(q,p_1,p_2)$, with $q$ a~state of $\A$, $p$ a~state of $\boxA$ and $p'$ a state of $\boxAco$. 
A transition of $D$ over $\letter$ moves to $(q',p_1',p_2')$ such that moving from $(q,p_1)$ to $(q',p_1)$ is consistent with $\stratA$; and moving from $(q,p_2)$ to $(q',p_2')$ is consistent with $\stratE$.
The acceptance condition of $\D$ is inherited from $\A$.

\NotNeeded{
\todo{start appendix} 
When reading a~letter $\letter$ in such a~state, the following computations are performed:
\begin{enumerate}
\item We simulate the choices made by $\stratE$ in $G'(\A)$ upon obtaining $\letter$ from Adam. This way we know how to resolve nondeterminism of $\boxAco$ and what to do with disjunctions inside $\A$.
\item We simulate the choices made by $\stratA$ in $G(\A)$ upon obtaining $\letter$ from Eve. This way we know how to resolve nondeterminism in $\boxA$ and what to do with conjunctions of $\A$.
\item In the end we proceed to a new state of $\A$ and resolved nondeterminism of both $\boxA$ and $\boxAco$.
\end{enumerate}

Let the acceptance condition of $\D$ be inherited from $\A$.
\todo{end appendix}
}

\begin{restatable}{lemma}{lemdeteq}\label{lem:det-eq}
For a GFG APW $\A$ and $\D$ built as above, $L(\A)=L(\D)$.
\end{restatable}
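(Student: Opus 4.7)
The plan is to prove the two inclusions $L(\A)\subseteq L(\D)$ and $L(\D)\subseteq L(\A)$ by extracting, from the unique run of $\D$ on a word $w\in\Sigma^\omega$, a synchronised play in $G(\A)$ and in $G'(\A)$, and then invoking the winning conditions of $\stratE$ and $\stratA$ respectively.

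Fix $w\in\Sigma^\omega$ and write the unique run of $\D$ as $(q_0,p_1^0,p_2^0),(q_1,p_1^1,p_2^1),\ldots$. I would first observe that the three projections yield: a path $\rho=(q_i)_{i\in\Nat}$ of $\A$ on $w$, a run $\rho_1=(p_1^i)_{i\in\Nat}$ of $\boxA$ on $w$, and a run $\rho_2=(p_2^i)_{i\in\Nat}$ of $\boxAco$ on $w$. Because $\stratE$ and $\stratA$ are \emph{positional}, their prescriptions at round $i$ depend only on the current game positions $(p_2^i,q_i)$ and $(p_1^i,q_i)$. Consequently $(\rho,\rho_2)$ is exactly the play of $G(\A)$ in which Adam plays the letters of $w$ and resolves $\A$-conjunctions as $\stratA$ dictates at $(p_1^i,q_i)$, while Eve plays $\stratE$; and symmetrically $(\rho,\rho_1)$ is the play of $G'(\A)$ where Eve plays $w$ and resolves $\A$-disjunctions as $\stratE$ dictates at $(p_2^i,q_i)$, while Adam plays $\stratA$.

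For $L(\A)\subseteq L(\D)$, I would assume $w\in L(\A)$ and apply the winning guarantee of $\stratE$ to the above play in $G(\A)$: either $\rho_2$ is accepting in $\boxAco$, or $\rho$ is accepting in $\A$. The first possibility is ruled out, since \cref{thm:exp-gfg-dealt} applied to $\overline{\A}$ gives $L(\boxAco)=L(\overline\A)=L(\A)^{\mathrm c}$, so an accepting run of $\boxAco$ would contradict $w\in L(\A)$. Hence $\rho$ is accepting, and as $\D$ inherits its acceptance condition from the $\A$-component, $w\in L(\D)$.

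For the converse, take $w\in L(\D)$, so that $\rho$ is accepting in $\A$. Applying the winning guarantee of $\stratA$ to the corresponding play in $G'(\A)$ yields that either $\rho$ is rejecting in $\A$ or $\rho_1$ is accepting in $\boxA$; the first option contradicts $w\in L(\D)$, so $\rho_1$ is accepting, and $L(\boxA)=L(\A)$ gives $w\in L(\A)$. The main technical subtlety—and the step I would write most carefully—is the synchronisation in the second paragraph: verifying that the run of $\D$, built by independently following two positional strategies that come from different games, truly reproduces consistent plays of both $G(\A)$ and $G'(\A)$. This is exactly where positional determinacy of Rabin games enters, as it allows $\stratE$ and $\stratA$ to be combined without either of them needing to remember the other's history.
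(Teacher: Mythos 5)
Your proposal is correct and follows essentially the same route as the paper's proof: both directions hinge on reading the run of $\D$ as a play of $G(\A)$ consistent with $\stratE$ and a play of $G'(\A)$ consistent with $\stratA$, then discharging the $\boxAco$ (resp.\ $\boxA$) disjunct of the winning condition using $L(\boxAco)=L(\A)^{\mathrm c}$ (resp.\ $L(\boxA)=L(\A)$). The only cosmetic difference is that you prove $L(\D)\subseteq L(\A)$ directly where the paper argues the contrapositive, and you make explicit the synchronisation/positionality point that the paper leaves implicit.
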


\NotNeeded{
\todo{start appendix}

\begin{proof}
Take a word $w\in\Sigma^\omega$. First assume that $w\in L(\A)$. Eve cannot win a play of the game $G'$ with the letters played in $R_{A,\Sigma}$ coming from $w$ using by the first disjunct of her winning condition, since $L(\boxAco)=L(\bar\A)$. Thus, all the plays over $w$ consistent with her winning strategy $\stratE$ in $\G'$ must guarantee that the constructed path of $\A$ is accepting. Thus, the run of the automaton $\D$ over $w$ is accepting.

Now assume that $w\notin L(\A)$. Dually, no play of the game $G$ with the letters coming from $w$ can produce an accepting run of $\boxA$ over $w$. Thus, the strategy $\stratA$ guarantees that the sequence of visited states of $\A$ is rejecting. Thus, the run of $\D$ over $w$ must be rejecting.
\end{proof}

\todo{end appendix}
}

\begin{remark}
The above construction does not work for an alternating GFG Rabin automaton~$\A$, since we need to remove alternations from both $\A$ and its dual. Although we know how to remove alternations from $\A$ with a singly-exponential size blowup while preserving GFGness, we do not know how to do it to the dual of $\A$, which is a Streett automaton.
\end{remark}

\section{Deciding GFGness of Alternating Automata}
\label{sec:deciding}

We first use the development of the last section to show that deciding whether an APW is GFG is in \exptime. This matches the best known upper bound for the same problem on NPW. We then consider how to improve this upper bound by characterising GFGness with a polynomially solvable game. In particular, we show that if the token game $\Gt$ known to characterise GFGness for NBW, can be shown to also characterise GFGness for  \textit{nondeterministic} parity automata, as previously conjectured in \cite{BK18}, then it also characterises GFGness for \textit{alternating} parity automata. In the special case of AWW (or AFAs), we show that this token game indeed characterises GFGness, and can be decided polynomially.

\subsection{GFGness of Alternating Parity Automata is in \exptime}
\label{ssec:deciding-exptime}

The main result of this section is the following theorem; its proof is in~\cref{app:exptime-gfg-alt}.

\begin{theorem}
\label{thm:exp-time-gfg-alt}
There exists an~\exptime algorithm that takes as input an alternating parity automaton $\A$ and decides whether $\A$ is GFG.
\end{theorem}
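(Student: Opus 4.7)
The plan is to reduce the GFGness decision on $\A$ to two nondeterministic GFGness decisions on the box-automata $\boxA$ and $\boxAco$ produced by~\cref{thm:exp-gfg-dealt}, which can be handled in singly-exponential time by known algorithms for nondeterministic parity automata. Concretely, I would apply~\cref{thm:exp-gfg-dealt} to both $\A$ and its dual $\overline{\A}$ to obtain nondeterministic parity automata $\boxA$ and $\boxAco$ recognising $L(\A)$ and $L(\A)^{\mathrm{c}}$ respectively; both are of size $2^{O(|\A|\log |\A|)}$ and have parity index polynomial in $|\A|$, by~\cref{lem:prop-of-d}.

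The key step is to establish the equivalence: $\A$ is GFG iff both $\boxA$ and $\boxAco$ are GFG. The $(\Rightarrow)$ direction follows from~\cref{lem:GFG-preservation}: if $\A$ is $\EGFG$ then $\boxA$ is GFG; and since $\A$ is $\AGFG$ iff $\overline{\A}$ is $\EGFG$ (by the symmetry between the two letter games in~\cref{def:LetterGames}), applying the same lemma to $\overline{\A}$ gives that $\boxAco$ is GFG. For the $(\Leftarrow)$ direction, I would extract from a GFG strategy $\stratE$ of $\boxA$ an $\EGFG$ strategy of $\A$: a transition of $\boxA$ over a letter $\letter$ amounts to a choice of a box $b\in\boxes_{\A,\letter}$, which by definition encodes a positional strategy of Eve in the one-step arena of $\A$ over $\letter$; concatenating these per-letter strategies yields a strategy in Eve's letter game on $\A$, and~\cref{lem:PositionalStrategiesAndRuns} guarantees that when $w\in L(\A)$ the accepting run of $\boxA$ induced by $\stratE$ corresponds to a universally-accepting sequence of boxes, so every path Adam can force is accepting in $\A$. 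Symmetrically, $\boxAco$ being GFG yields that $\overline{\A}$ is $\EGFG$, i.e., $\A$ is $\AGFG$.

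The last step is to invoke the known \exptime algorithm for nondeterministic parity GFGness, whose runtime is of the form $N^{O(k)}$ on an input with $N$ states and parity index $k$. Plugging in $N=2^{O(|\A|\log|\A|)}$ and $k$ polynomial in $|\A|$ gives a total time of $2^{O(\mathrm{poly}(|\A|))}$, as required.

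The main obstacle will be the $(\Leftarrow)$ direction of the characterisation: one needs to verify carefully that the strategy obtained by stitching together the per-letter box-strategies of Eve really wins against every Adam play in the letter game, not just against Adam plays of some restricted form. Once~\cref{lem:PositionalStrategiesAndRuns} is invoked correctly, this should follow from the fact that the winning condition of Eve's letter game is satisfied whenever the corresponding run of $\boxA$ is accepting, but the bookkeeping of the correspondence across infinite plays requires some care.
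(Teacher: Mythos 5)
Your reduction of GFGness of $\A$ to the statement ``both $\boxA$ and $\boxAco$ are GFG'' is exactly the paper's first step (\cref{lem:GFG-preservation} together with \cref{lem:GFG-B-to-A}), and the obstacle you flag in the $(\Leftarrow)$ direction is not the real one---that direction is precisely \cref{lem:GFG-B-to-A} and goes through as you sketch it. The genuine gap is in your final step: you invoke ``the known \exptime{} algorithm for nondeterministic parity GFGness, whose runtime is of the form $N^{O(k)}$''. No such algorithm is known. Any algorithm running in time $N^{O(k)}$ (or, more generally, $f(k)\cdot N^{g(k)}$) would decide GFGness of nondeterministic parity automata of \emph{fixed} index in polynomial time, which is exactly the open problem that \cref{con:G2-to-GFG} is meant to resolve; it is currently settled only for the B\"uchi case and, by \cref{thm:G2cobuchi}, the coB\"uchi case. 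The known \exptime{} procedure for an NPW with $N$ states goes through determinisation and runs in time exponential in $N$; applied to $\boxA$ and $\boxAco$, which have $2^{O(|\A|\log|\A|)}$ states, it yields a doubly exponential bound, not \exptime.

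The paper circumvents this by \emph{not} deciding GFGness of $\boxA$ and $\boxAco$ separately. It solves a single game $G''$ in which Adam plays letters and Eve simultaneously builds one run of $\boxA$ and one run of $\boxAco$, winning iff at least one of the two runs is accepting. Since $L(\boxA)$ and $L(\boxAco)$ partition $\Sigma^\omega$, Eve wins $G''$ iff both automata are GFG; this is where the two automata certify membership and non-membership for each other, removing the need for any further determinisation. The arena of $G''$ is of size exponential in $|\A|$, and its winning condition is a disjunction of two parity conditions whose numbers of priorities are only polynomial in $|\A|$ (\cref{lem:prop-of-d}); composing with the deterministic parity automaton of \cref{lem:aut-for-disjunction} gives a parity game of exponential size with polynomially many priorities, which quasi-polynomial parity-game algorithms solve in \exptime. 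To keep your two-independent-checks structure you would need an NPW GFGness test that remains singly exponential in $|\A|$ when run on these exponential-size automata, and no such test is available.
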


A complete proof of this result is given in \cref{app:exptime-gfg-alt}. The idea is to construct the (exponential size) NPWs $\boxA$ and $\boxAco$ for $L(\A)$ and $L(\A)^\mathrm{c}$ respectively. We observe the following reciprocal of \cref{lem:GFG-preservation}.

\begin{lemma}
\label{lem:GFG-B-to-A}
If $\boxA$ is GFG then $\A$ is $\EGFG$.
\end{lemma}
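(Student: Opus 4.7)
The plan is to lift a GFG strategy of $\boxA$ back to an $\EGFG$ strategy on $\A$ by unpacking the box structure established in~\cref{lem:PositionalStrategiesAndRuns}. Recall that every transition of $\boxA$ over a letter $\letter$ from a state $p$ is of the form $p \mapsto \delta_\B(p,\beta)$ for some box $\beta\in\boxes_{\A,\letter}$, and that a box is precisely a positional strategy of Eve in the one-step arena over $\letter$. So any sequence of $\boxA$-transitions on a word $w=w_0 w_1\ldots$ is recorded by a sequence of boxes $(\beta_i)_{i\in\Nat}$ with $\beta_i\in\boxes_{\A,w_i}$.

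First, I would fix a GFG strategy $\sigma$ for $\boxA$ (it exists by assumption) and define the following strategy $\widetilde\sigma$ for Eve in her letter game on $\A$. Eve maintains, in addition to the current $\A$-state $q_i$, the state $p_i$ of $\boxA$ reached by $\sigma$ on the history read so far. When Adam plays $w_{i+1}$, Eve queries $\sigma$ to obtain the box $\beta_{i+1}\in\boxes_{\A,w_{i+1}}$ selected by $\sigma$, updates $p_{i+1}:=\delta_\B(p_i,\beta_{i+1})$, and resolves the disjunctions in the one-step arena over $w_{i+1}$ from $q_i$ according to the positional strategy encoded by $\beta_{i+1}$. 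Adam then resolves the conjunctions, producing some transition $(q_i,w_{i+1},q_{i+1})\in\beta_{i+1}$.

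Next, I would verify that $\widetilde\sigma$ is winning. Let $w=w_0 w_1\ldots$ be the word generated by an $\widetilde\sigma$-play and let $\rho=(q_0,w_0,q_1)(q_1,w_1,q_2)\ldots$ be the induced path in $\A$. If $w\notin L(\A)$, Eve wins by definition of the letter game. Otherwise, since $L(\boxA)=L(\A)$ by~\cref{lem:PositionalStrategiesAndRuns}, $w\in L(\boxA)$, so the run of $\boxA$ that $\sigma$ produces on $w$ is accepting. By the same lemma, the associated sequence of boxes $(\beta_i)_{i\in\Nat}$ is universally accepting. By construction, $(q_i,w_i,q_{i+1})\in\beta_i$ for every $i$, so $\rho$ is a path of $(\beta_i)_i$ in the sense of~\cref{def:path-in-boxes}, and therefore $\rho$ is accepting in $\A$. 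Hence Eve wins this play.

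The only step requiring care is ensuring that the state-tracking of $\boxA$ is compatible with online play in the letter game: that is, $\sigma$ must be queryable after each letter of the prefix, not the whole word. This is immediate because $\sigma$ is itself a GFG strategy for $\boxA$, which only consumes the word one letter at a time; thus $\widetilde\sigma$ also depends only on the prefix of $w$ read so far, and constitutes a legitimate strategy in Eve's letter game on $\A$. This establishes that $\A$ is $\EGFG$.
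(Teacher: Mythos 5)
Your proposal is correct and follows essentially the same route as the paper: both turn the GFG strategy of $\boxA$ into a prefix-indexed sequence of boxes, use each box as a positional strategy of Eve in the corresponding one-step arena, and conclude via the fact that for $w\in L(\A)=L(\boxA)$ the accepting run of $\boxA$ corresponds to a universally accepting box sequence, so the induced path is accepting. The only cosmetic remark is that the universal acceptance of the box sequence really comes from \cref{lem:prop-of-d} (the automaton $\B$ recognises exactly the universally accepting box words) rather than from the statement of \cref{lem:PositionalStrategiesAndRuns} itself, but the underlying fact is established in the paper and your argument is sound.
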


\begin{proof}
Assume that $\boxA$ is GFG and consider a~strategy witnessing this. Such a~strategy can be easily turned into a~function $\stratE'\colon \Sigma^+\rightarrow \boxes_\A$ that, given a~word $w\in L(\A)$ produces a~universally accepting word of boxes of $\A$. Now, due to the definition of a box, each such box defines a~positional strategy of Eve in the respective one-step game. This allows us to construct a~winning strategy of Eve in the letter game over $\A$.
\end{proof}

Thus, $\A$ is GFG if and only if both $\boxA$ and $\boxAco$ are GFG. To decide this, we consider a~game $G''$ where Adam plays letters and Eve produces runs of the automata $\boxA$ and $\boxAco$ in parallel. The winning condition of $G''$ requires that at least one of the constructed runs must be accepting.

Now, each sequence of letters given by Adam belongs either to the language of $\boxA$ or to $\boxAco$ and therefore, a winning strategy of Eve in $G''$ must comprise of two strategies witnessing GFGness of both $\boxA$ and $\boxAco$. Dually, if both $\boxA$ and $\boxAco$ are GFG then Eve wins $G''$ by playing the two strategies in parallel.

A careful analysis of the winning condition of $G''$ shows that solving it is in \exptime.

\subsection{Towards a Polynomial Procedure}
\label{ssec:G2-for-alternating}

While the letter games characterise whether an automaton is GFG, solving these games is not as easy as one could hope, as the winning condition depends on whether the played word is in the language. The naive solution is to use a deterministic automaton to recognise whether the played word is in the language; however the cost of determinisation is, in the case of alternating automata, doubly exponential. \cref{thm:exp-time-gfg-alt} already improves on this by giving a single exponential procedure.


The hope for further improving on this upper bound is to find an alternative characterisation of GFGness, based on polynomially solvable games. So far, this approach has been successful in the case of nondeterministic B\"uchi automata~\cite{BK18}: a nondeterministic B\"uchi automaton is GFG if and only if Eve wins the game $\Gt$ in which Adam chooses letters while Eve builds a run in the automaton, as in the letter game, but, in addition, Adam also has to build two runs, of which at least one should witness that the word is in the language. This game is polynomially solvable  as the arena is just the product of the alphabet and three copies of the automaton, and the winning condition is a simple Boolean combination of B\"uchi conditions. Asking Adam to just build one accepting run would make the game too easy for Eve who could use the information from Adam's run to build her own run, see \cite[Lemma 8]{BK18}. 

We describe below a version of the $\Gt$ game suited to alternating automata.

\begin{definition}[The two-token game]
Given an alternating parity automaton $\A$, we define the \emph{two-token game} $\Gt(\A)$. 
A configuration $(p,q_1,q_2)\in Q^3$ of $\Gt(\A)$ consists of three states of $\A$, one for Eve's token, and two for Adam's tokens. The initial configuration is $(\iota,\iota,\iota)$. A~turn starting in $(p,q_1,q_2)$ proceeds as follows:
\begin{itemize}
\item Adam picks a letter $\letter\in\Sigma$;
\item Eve and Adam play the one-step game over $\delta_\A(p,\letter)$ in $\A$ and build a transition $\trans{p}{\letter}{p'}$;
\item Eve and Adam play the one-step game over $\delta_{\bar{\A}}(q_1,\letter)$ in $\bar{\A}$ and build a transition $\trans{q_1}{\letter}{q_1'}$;
\item Eve and Adam play the one-step game over $\delta_{\bar{\A}}(q_2,\letter)$ in $\bar{\A}$ and build a transition $\trans{q_2}{\letter}{q_2'}$;
\item The new configuration is $(p',q_1',q_2')$.
\end{itemize}

A play consists of the resulting three infinite paths $(\rho_E,\rho_A,\rho_A')$ and is winning for Eve if either $\rho_E$ is accepting or $\rho_A$ and $\rho_A'$ are both rejecting.
\end{definition}

Notice that the roles of the players are swapped in the games from $q_1$ and $q_2$: it is Adam who resolves disjunctions and Eve resolves conjunctions. 

It is easy to encode the above game as a~game over a finite graph, with labels of the form $(Q\times\Sigma\times Q)^3$, representing the three transitions taken in a given turn.

\begin{remark}
\label{rem:k-token-nondet}
Notice that if $\A$ is a nondeterministic automaton, then this game is just the two-token game from~\cite{BK18} in which Adam picks a letter, Eve chooses a transition for her token and Adam chooses transitions for his two tokens. In the nondeterministic case, we will also use the game $G_k(\A)$, in which Adam has $k$ tokens instead of two, see~\cite[Definition~9]{BK18}.
\end{remark}

\begin{theorem}[{\cite[Corollary~21]{BK18}}]
\label{thm:G2-buchi}
For all NBW $\A$, Eve wins $\Gt(\A)$ if and only if $\A$ is GFG.
\end{theorem}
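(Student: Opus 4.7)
The plan is to reproduce the proof of~\cite{BK18}. The two directions of the equivalence have quite different flavours.

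The \emph{forward} direction ($\A$ GFG implies Eve wins $\Gt(\A)$) is immediate. Let $\sigma$ be a winning strategy of Eve in the letter game of $\A$. Eve plays $\sigma$ on her own token in $\Gt(\A)$, irrespective of how Adam plays his two tokens. If the word $w$ constructed by Adam lies in $L(\A)$, then $\sigma$ guarantees that $\rho_E$ is accepting and Eve wins. Otherwise $w \notin L(\A)$, and since $\A$ is nondeterministic no run of $\A$ over $w$ is accepting, so both of Adam's runs $\rho_A$ and $\rho_A'$ are rejecting. Either way Eve wins.

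For the \emph{converse} direction, the plan is to use, for each $k \geq 2$, the $k$-token variant $G_k(\A)$ of \cref{rem:k-token-nondet}, in which Adam controls $k$ tokens and Eve wins iff her own run is accepting or all $k$ of Adam's runs are rejecting. The target chain of implications is
\[
\text{Eve wins } \Gt(\A) \;\Longrightarrow\; \text{Eve wins } G_k(\A) \text{ for every } k \geq 2 \;\Longrightarrow\; \A \text{ is GFG}.
\]
The second implication is the conceptually easier step: passing to the limit $k\to\infty$, in which Adam effectively controls one token per nondeterministic choice in $\A$, Eve's winning strategy must produce an accepting run whenever \emph{any} run on the played word is accepting, which is exactly the winning condition of the letter game. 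Since each $G_k$ has a winning condition that is a disjunction of a B\"uchi condition with a conjunction of co\=/B\"uchi conditions (hence Rabin-like), it is positionally determined, and a compactness argument over the finitely many possible positional strategies lifts a family of winners to a single winning strategy in the letter game.

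The main obstacle is the first implication, passing from two tokens to $k$ tokens. The plan is to proceed inductively: given a positional winning strategy $\sigma_2$ in $G_2(\A)$, build a winning strategy in $G_{k+1}(\A)$ by running several simulated copies of $G_2$ in parallel, one for each consecutive pair of Adam's tokens, and coherently merging the moves each copy prescribes on Eve's token into a single sequence of actions on her real token. The delicate point is that different simulations may disagree on which transition to pick for Eve's token; the technical content of~\cite{BK18} is to show, using the positionality of $\sigma_2$ together with the B\"uchi-specific property that an accepting run can be ``rerouted'' at any state along an accepting transition, that Eve can switch between the simulations while still producing an accepting real run whenever at least one of Adam's $k+1$ runs fails to be rejecting. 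This combinatorial merging step is the heart of the argument; once it is in place, iterating the induction and then passing to the limit over $k$ closes the chain of implications above.
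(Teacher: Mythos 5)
First, note that the paper does not prove this statement itself: it is imported verbatim from~\cite{BK18} (Corollary~21 there), and the closest the paper comes to a proof is the recollection of the proof scheme at the start of \cref{sec:G2-coBuchi}. Your forward direction is correct and is exactly the standard argument. Your decomposition of the converse into ``$\Gt$ implies $G_k$ for all $k$'' and ``$G_k$ for all $k$ implies GFG'' also matches the structure of~\cite{BK18} (the first step is \cref{lem:ktokens} here). One misattribution: the $\Gt\Rightarrow G_k$ step is \emph{not} where the B\"uchi-specific rerouting lives --- that lemma holds for general acceptance conditions and is reused for coB\"uchi in this very paper; the B\"uchi-specificity sits entirely in the last step.

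The genuine gap is in your second implication. ``Passing to the limit $k\to\infty$'' does not yield the letter game: in $G_k$ Adam's tokens trace $k$ \emph{specific} runs, whereas membership of $w$ in $L(\A)$ may only be witnessed by an accepting run that no fixed finite family of on-line-moved tokens happens to follow --- the run tree of $\A$ on $w$ has unboundedly many branches, and there is no single $k$ that covers them all a priori. Likewise, your compactness argument over ``the finitely many positional strategies'' does not go through, because the arena of $G_k$ (hence the set of its positional strategies) grows with $k$, and a family of winning strategies, one per $G_k$, does not assemble into a letter-game strategy. The actual argument of~\cite{BK18}, as recalled in \cref{sec:G2-coBuchi}, is by contradiction: if $\A$ is not GFG, then by finite-memory determinacy of the letter game Adam has a finite-memory winning strategy $\tau$ that only produces words of $L(\A)$; the finite memory of $\tau$ bounds a number $N$ (depending on $\A$ and on $|M|$) such that $N$ virtual tokens, spread \emph{uniformly} over the successors at each nondeterministic choice, are guaranteed to contain an accepting run on every word produced by $\tau$ --- this uniform-spreading claim is the B\"uchi-specific heart of the proof, and is precisely what fails for coB\"uchi and forces the much more elaborate construction of \cref{sec:G2-coBuchi}. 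Eve then plays her winning strategy in $G_N(\A)$ against these $N$ virtual tokens to defeat $\tau$, a contradiction. Without this finite-memory bound and the spreading lemma, your sketch does not close.
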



\begin{conjecture}[\cite{BK18}]
\label{con:G2-to-GFG}
A nondeterministic parity automaton $\A$ is GFG if and only if Eve wins $\Gt(\A)$.
\end{conjecture}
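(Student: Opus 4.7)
The plan is to prove the two directions separately. For the easy direction, that $\A$ being GFG implies Eve wins $\Gt(\A)$, Eve plays her GFG strategy on her own token and ignores Adam's tokens entirely. If Adam plays a word $w \in L(\A)$, Eve's path is accepting by the GFG property; otherwise, since $\A$ is nondeterministic, no path on $w$ can be accepting, so in particular both of Adam's paths are rejecting. Either way, Eve wins.

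For the hard direction, I would argue by contraposition. Assume $\A$ is not GFG, so Adam has a winning strategy $\tau$ in the letter game. In $\Gt(\A)$, Adam selects letters using $\tau$ while treating Eve's transition choices for her own token as her letter-game responses; this already guarantees that her path is rejecting on a word $w \in L(\A)$. To actually win, Adam must additionally construct accepting paths on his two tokens for every such $w$. The crux is thus to show that two tokens always suffice to track an accepting path in an online fashion.

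My plan for this core step is an induction on the parity index. The Büchi case is \cref{thm:G2-buchi}, and the coBüchi case is handled in \cref{sec:G2-coBuchi}, so both base cases are in place. For the step from index $k{-}1$ to index $k$, I would decompose Adam's token-tracking task along the hierarchy of priorities: isolate strongly connected subautomata in which the top priority dominates acceptance (reducing locally to a Büchi or coBüchi instance) and apply the inductive hypothesis to the remainder, then glue the resulting token strategies together across transitions between components. A useful auxiliary step should be to re-express Adam's letter-game strategy $\tau$ as a family of ``residual'' strategies witnessing non-GFGness inside each component, so that the two tokens can be split between an outer layer tracking the top-priority structure and an inner layer handled inductively.

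The main obstacle I expect is justifying that two tokens remain sufficient across the induction. A priori, the complexity of online run-tracking grows with the parity index, and it is plausible that Adam would need to hedge between more than two candidate accepting paths as priorities accumulate. A related difficulty is that restricting $\A$ to a substructure may break GFGness or alter the winner of $\Gt$, so a purely local inductive step is unlikely to work. A successful proof will probably require a global invariant linking the priorities visited by all three tokens simultaneously, together with a merging argument showing that whenever a ``third hedge'' would be useful, it can be absorbed into one of the two existing tokens by exploiting positional determinacy of parity games — much in the spirit of the merging arguments used in the Büchi proof of \cite{BK18}, but lifted to arbitrary indices.
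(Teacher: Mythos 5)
The statement you are trying to prove is stated in the paper as a \emph{conjecture} (attributed to \cite{BK18}), not a theorem: the paper does not prove it, proves only the coB\"uchi case (\cref{thm:G2cobuchi}, the B\"uchi case being \cref{thm:G2-buchi} from \cite{BK18}), and explicitly states in the conclusion that the authors ``could not manage to prove the conjecture for nondeterministic automata with $3$ parity priorities'' and that ``new insights will be necessary to climb up the parity ladder.'' Your easy direction is fine, and your reduction of the hard direction to the claim that two of Adam's tokens suffice to track an accepting run online is the right framing. But the core of your argument --- the induction on the parity index --- is not a proof: the inductive step is described only as a plan (``isolate strongly connected subautomata\ldots apply the inductive hypothesis\ldots glue the resulting token strategies together''), and the obstacle you yourself flag, namely that two tokens remain sufficient as priorities accumulate and that restricting to substructures may change the winner of $\Gt$ or break GFGness, is precisely the open problem. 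You have restated the difficulty, not resolved it.

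It is also worth noting that the paper's proof of the coB\"uchi case does not follow the local decomposition you propose. It is a global construction: the automaton is first normalised ($G_2$-restriction, reachability labelling, acceptance tuning), then a safety game yields a partial deterministic transition function $\deltadet$ on a set $\Ssafe$ of safe states, a ``limit strategy'' $\sigmainf$ is extracted from the winning strategies in $G_k$ for all $k$, and finally Adam's finite-memory letter-game strategy is defeated by a token strategy with $N$ main tokens and $|\Qr|$ deterministic tokens organised around breakpoints. None of this is an induction on priorities, and the authors' remark that even three priorities resist their techniques is strong evidence that the priority-ladder induction you sketch does not go through without a genuinely new idea for the inductive step.
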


In this section we show that if this conjecture holds, then the above generalisation of $\Gt$ also characterises GFGness for alternating automata, in the sense that then an alternating parity automaton $\A$ is GFG if and only if Eve wins both $\Gt(\A)$ and $\Gt(\bar\A)$.

Before we move on, we argue that the game $\Gt$ is more tractable than both the letter game and the approach from \cref{ssec:deciding-exptime}, as expressed by the following proposition.

\begin{restatable}{proposition}{proDecideGPoly}
\label{pro:decide-G2-polynomial}
Given an APW $\A$ of size $n$ with a fixed number~$d$ of priorities over an~alphabet $\Sigma$, the game $\Gt(\A)$ can be solved in time complexity $O(n^4)$. (More precisely, in $O(d^2(n^{3}|\Sigma| 2^{d^2\log d})^{1+o(1)})$.)
\end{restatable}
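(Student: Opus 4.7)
The plan is to model $\Gt(\A)$ as a finite parity game and then invoke a modern parity game algorithm. I proceed in three stages.

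First I would construct the arena of $\Gt(\A)$. A position encodes a triple $(p,q_1,q_2)\in Q^3$ together with auxiliary information tracking the current stage of a turn: Adam's letter choice $\letter\in\Sigma$ and the current subformula being explored in each of the three one\=/step games over $\delta_\A(p,\letter)$, $\delta_{\bar\A}(q_1,\letter)$ and $\delta_{\bar\A}(q_2,\letter)$. Because these one\=/step games are played in sequence rather than in product, the auxiliary data contributes only an additive polynomial overhead bounded by the transition function length. Overall the arena has $O(n^3|\Sigma|)$ main configurations and total size polynomial in~$n$, with each edge labelled by the three priorities of the underlying transitions in $\A$ and $\bar\A$.

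Next I would reduce the compound winning condition to a single parity condition. Eve's condition---that $\rho_E$ is accepting, or else both $\rho_A$ and $\rho_A'$ are rejecting---is a positive Boolean combination of three parity conditions, each with $d$ priorities (dualisation shifts priorities but preserves the index). I would convert this Emerson\=/Lei condition into an equivalent deterministic parity automaton via a last\=/appearance\=/record construction adapted from the Zielonka tree of the formula, obtaining a memory of size $2^{O(d^2\log d)}$ and an equivalent parity condition with $O(d^2)$ priorities. Taking the synchronised product of the arena with this memory yields an equivalent parity game on $O\bigl(n^3|\Sigma|\cdot 2^{d^2\log d}\bigr)$ positions with $O(d^2)$ priorities.

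Finally I would invoke a quasi\=/polynomial parity game solver such as those of Calude et al.\ or Jurdzi\'nski--Lazi\'c, which decide a parity game with $N$ positions and $k$ priorities in time $O(k\cdot N^{1+o(1)})$. Substituting $N = n^3|\Sigma|\cdot 2^{d^2\log d}$ and $k=O(d^2)$ gives the stated complexity $O\bigl(d^2(n^3|\Sigma|2^{d^2\log d})^{1+o(1)}\bigr)$; when $d$ is fixed this simplifies to $O(n^4)$. The hard part will be the second step: obtaining the $2^{O(d^2\log d)}$ memory bound, rather than the $2^{O(d^3\log d)}$ blowup that a naive Muller\=/to\=/parity construction over the joint priority alphabet of size $d^3$ would produce. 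I would overcome this by exploiting the independence of the three parity tracks and maintaining three coupled last\=/appearance records in parallel, instead of a single record over the product alphabet.
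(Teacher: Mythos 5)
Your overall skeleton is exactly the paper's: bound the arena of $\Gt(\A)$ by $O(n^3|\Sigma|)$ (plus polynomial overhead for the sequential one\=/step games), compile the winning condition into a deterministic parity automaton with $2^{O(d^2\log d)}$ states and $O(d^2)$ priorities, take the synchronised product, and solve the resulting parity game with the Jurdzi\'nski--Lazi\'c algorithm in time $O(k\,m^{1+o(1)})$. The one place where you diverge is the construction of the condition automaton, and it is also the one place where your argument is not fully carried out. The paper takes a simpler route than your Zielonka\=/tree/LAR plan: the complement of a $d$\=/priority parity condition is recognised by a nondeterministic B\"uchi automaton with only $O(d)$ states (wait, guess the maximal odd priority recurring, reject if a larger one appears), so ``neither $\rho_A$ nor $\rho_A'$ satisfies parity'' has an NBW of size $O(d^2)$ by B\"uchi intersection, its disjunction with an NBW for ``$\rho_E$ satisfies parity'' stays at $O(d^2)$, and a single Piterman determinisation of this $O(d^2)$\=/state NBW immediately yields the $2^{O(d^2\log d)}$ states and $O(d^2)$ priorities. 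This sidesteps entirely the difficulty you flag --- avoiding the $2^{O(d^3\log d)}$ blowup of a Muller\=/to\=/parity translation over the product priority alphabet --- whereas your proposed fix via ``three coupled last\=/appearance records'' is only sketched, and it is not obvious how to recombine three independent records into a single parity condition without reintroducing a record over the joint alphabet. That said, this only threatens the parenthetical refined bound: since $d$ is fixed, even the naive $2^{O(d^3\log d)}$ memory is a constant and your argument already suffices for the headline $O(n^4)$ claim. If you want the precise bound, replace your second step by the NBW\=/determinisation route.
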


A~proof of this proposition boils down to a careful analysis of the size of $\Gt(\A)$ and ways to represent its winning condition, see \cref{app:G2-polynomial}.

The following lemma is direct: a GFG strategy of Eve in $\A$ can win $\Gt(\A)$ without even looking at the tokens moved by Adam, see \cref{app:GFG-to-G2}.

\begin{restatable}{lemma}{lemGfgToG}
\label{lem:gfg-to-G2}
If an alternating automaton $\A$ is GFG, then Eve wins both $\Gt(\A)$ and $\Gt(\bar \A)$.
\end{restatable}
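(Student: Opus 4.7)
The plan is to exhibit an explicit winning strategy for Eve in $\Gt(\A)$ built from the two strategies that $\A$'s GFGness provides: Eve's winning strategy $\sigma$ in her letter game on $\A$ (witnessing $\EGFG$) and Adam's winning strategy $\tau$ in his letter game on $\A$ (witnessing $\AGFG$). On her own token in $\Gt(\A)$, Eve simply plays $\sigma$; on each of Adam's two tokens, she plays $\tau$ independently. The key observation that justifies this is the role-swap remarked upon right after the definition of $\Gt$: on Adam's tokens, Eve resolves the $\land$'s of $\A$ while Adam chooses the letter and resolves the $\lor$'s of $\A$, which is precisely the distribution of choices in Adam's letter game on $\A$ (where Adam resolves $\land$'s against Eve's letters and $\lor$'s). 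Consequently, Eve can apply $\tau$ on each of Adam's tokens by treating the adversary's in-game moves there as Eve's moves in that letter game. Crucially, her moves on each token depend only on the history of that token, matching the informal remark preceding the lemma that Eve does not need to inspect Adam's tokens at all.

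To verify that the proposed strategy is winning, I would do a case split on whether $w\in L(\A)$, where $w$ is the word produced by Adam's letter choices during the play. If $w\in L(\A)$, then $\sigma$ being winning in Eve's letter game ensures that the path $\rho_E$ on Eve's token is accepting in $\A$, so Eve wins $\Gt(\A)$ by the first disjunct of its winning condition. If $w\notin L(\A)$, then $\tau$ being winning in Adam's letter game (whose winning condition is ``$w\in L(\A)$ or the path is rejecting'') forces each path constructed under $\tau$ to be rejecting in $\A$, so both $\rho_A$ and $\rho_A'$ reject and Eve wins by the second disjunct. In either case, Eve's combined strategy wins the play.

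For the second half of the statement, I would invoke that the notions $\EGFG$ and $\AGFG$ swap under dualisation of the automaton, so $\A$ being GFG is equivalent to $\bar\A$ being GFG; applying the argument above with $\bar\A$ in place of $\A$ then yields that Eve also wins $\Gt(\bar\A)$. The only subtle point in the whole argument is the role-swap bookkeeping on Adam's tokens, which is what makes Eve's use of $\tau$ legitimate; once this correspondence is explicitly set out, the remainder is a direct combination of the two GFG guarantees and no extra ingenuity is needed.
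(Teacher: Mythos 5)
Your proposal is correct and follows essentially the same route as the paper's proof: Eve plays her $\EGFG$ strategy on her own token and Adam's $\AGFG$ strategy independently on each of his two tokens (exploiting the role swap on those tokens), then the case split on $w\in L(\A)$ gives the two disjuncts of the winning condition, and duality handles $\Gt(\bar\A)$.
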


Recall that in \cref{sec:Determinisation} we construct from $\A$ a nondeterministic parity automaton $\boxA$ which is GFG if and only if the nondeterminism in $\A$ is GFG, see \cref{lem:GFG-preservation} and \cref{lem:GFG-B-to-A}.

We now show that if Eve wins $\Gt(\A)$, then she also wins $\Gt(\boxA)$. Then, if \cref{con:G2-to-GFG} holds, it follows that $\boxA$, and therefore also $\A$, is GFG.

\begin{restatable}{proposition}{proAltToNd}
\label{pro:alt-to-nd}
For an~alternating parity automaton $\A$, if Eve wins $\Gt(\A)$ then she also wins $\Gt(\boxA)$.
\end{restatable}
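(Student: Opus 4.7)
The plan is to lift Eve's winning strategy from $\Gt(\A)$ to $\Gt(\boxA)$ via the box correspondence of Lemma~\ref{lem:PositionalStrategiesAndRuns}. That lemma gives a bijection between runs of $\boxA$ over a word $w = w_0 w_1 \cdots$ and box sequences $\vec\beta = \beta_0 \beta_1 \cdots$ with $\beta_i \in \boxes_{\A, w_i}$, with a $\boxA$-run accepting iff its box sequence is universally accepting in $\A$ (every concrete $\A$-path through the boxes is accepting). Under this identification $\Gt(\boxA)$ can be reformulated as a box game: Adam plays a letter $\letter$; Eve plays a box $\beta \in \boxes_{\A, \letter}$; Adam plays $\beta_1, \beta_2 \in \boxes_{\A, \letter}$; and Eve wins iff $\vec\beta$ is universally accepting in $\A$, or at least one of $\vec\beta_1, \vec\beta_2$ has a concrete path rejecting in $\A$.

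Given a winning strategy $\sigma$ for Eve in $\Gt(\A)$, I construct $\sigma'$ in $\Gt(\boxA)$ by a parallel simulation. Eve maintains as memory a set $S \subseteq Q^3$ of triples $(p, q_1, q_2)$ of $\A$-configurations arising as concurrent token positions in plays of $\Gt(\A)$ consistent with $\sigma$ and compatible with the box history of $\Gt(\boxA)$ so far. Initially $S = \{(\iota,\iota,\iota)\}$. At each turn, given Adam's letter $\letter$, Eve extracts from $\sigma$ its disjunct choices at her $\A$-token from each source $p$ appearing in $S$, aggregates them into a box $\beta \in \boxes_{\A, \letter}$, and plays it; after Adam plays $\beta_1, \beta_2$, the set $S$ is updated to its successor triples under these three boxes. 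Correctness then amounts to showing that every play in $\Gt(\boxA)$ consistent with $\sigma'$, together with any triple of concrete paths $(\tilde\rho_E, \tilde\rho_A, \tilde\rho_A')$ through $\vec\beta, \vec\beta_1, \vec\beta_2$, corresponds to a $\sigma$-consistent play of $\Gt(\A)$, hence winning for Eve. If both $\vec\beta_1, \vec\beta_2$ are universally accepting, every $\tilde\rho_A, \tilde\rho_A'$ accepts in $\A$, so every $\tilde\rho_E$ through $\vec\beta$ must accept too and $\vec\beta$ is universally accepting; otherwise one of Adam's box sequences contains a rejecting path and Eve wins by the second disjunct.

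The main obstacle is the coherence of the box aggregation when several triples in $S$ share the same Eve source $p$ but $\sigma$ prescribes different disjunct choices at $\delta_\A(p, \letter)$, since a single box cannot record conflicting choices per source. Overcoming this requires either extracting from $\sigma$ a suitable \emph{uniform} variant whose disjunct choices at Eve's $\A$-token depend only on the source (plus a bounded memory reflecting $\sigma$'s relevant state) rather than on the concurrent adversarial positions, or embedding enough extra memory into $S$ to keep incompatible branches separate so that each concrete path through $\vec\beta$ can be matched with a single $\sigma$-branch. Carrying out this aggregation so that the simulation preserves $\sigma$-consistency, and the winning condition of $\Gt(\A)$ therefore transfers to the box game, is the technical heart of the proof.
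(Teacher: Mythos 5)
Your setup is right (the box reformulation of $\Gt(\boxA)$ via \cref{lem:PositionalStrategiesAndRuns}, and the intended simulation of $\Gt(\A)$), and you have correctly located the crux: a single box per turn must be produced, yet a winning strategy $\sigma$ in $\Gt(\A)$ may resolve the disjunction at $\delta_\A(p,\letter)$ differently in different $\sigma$-consistent plays that reach the same Eve-source $p$. But you stop exactly there: you name two possible escape routes (a ``uniform'' variant of $\sigma$, or extra memory in $S$) without carrying either out, so the argument has a genuine gap rather than a complete proof. Note also that the second route cannot work as stated: the union of conflicting disjunct choices at a common source $p$ is in general not an element of $\boxes_{\A,\letter}$ (boxes are by definition induced by a \emph{single} positional strategy on the one-step arena), so keeping incompatible branches separate in $S$ still leaves you with no legal move in $\Gt(\boxA)$.

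The missing idea, which is how the paper closes this gap, is to interpose a game $\Gpos(\A)$ in which Eve's token still lives in $\A$ (she resolves disjunctions, Adam conjunctions) while Adam plays \emph{pairs of boxes} for his two tokens, with the condition ``at least one of Adam's box sequences is universally accepting'' compiled into a deterministic parity automaton $\GM$ whose state is part of the game position. Going from $\Gt(\A)$ to $\Gpos(\A)$ needs no aggregation at all, since Eve's token is a single $\A$-path and Adam's boxes merely fix his disjunct choices in the $\bar\A$-copies, which $\sigma$ answers with concrete transitions. The payoff is that the winning condition of $\Gpos(\A)$ --- Eve's path accepting \emph{or} the $\GM$-run rejecting --- is a disjunction of parity conditions, i.e.\ a Rabin condition, so by \cref{prop:RabinPositionalDeterminacy} Eve wins $\Gpos(\A)$ positionally in $Q_\A\times Q_\GM$. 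Positionality is precisely the ``uniformity'' you were missing: for a fixed $\GM$-state $p$ and letter $\letter$, the strategy's disjunct choice at each $q\in Q_\A$ depends only on $(q,p)$, so the union over $q$ of these choices \emph{is} a box, and every concrete path through the resulting box sequence is realised by some $\sigma$-consistent play with the same box history, which is what the winning-condition transfer requires. Without this (or an equivalent uniformisation argument), your proof does not go through.
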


The proof, given in \cref{app:G2-to-G2box}, is very similar in spirit to the proof of \cref{lem:GFG-preservation}: we consider a positional winning strategy of Eve in an intermediate game, where she plays her token in a copy of $\A$, against Adam playing in two copies of $\boxA$.


As deciding $\Gt$ on alternating automata is also \ptime, proving \cref{con:G2-to-GFG} would also provide a \ptime algorithm for deciding the GFGness of APW. In~\Cref{sec:G2-coBuchi} we work towards this goal by proving that $\Gt$ characterises GFG for nondeterministic coB\"uchi automata.

While for now we fall short of deciding GFGness of APW in \ptime, our technical developments  suffice to decide GFGness for alternating weak automata (AWW) in \ptime.

\begin{corollary}
Deciding whether an AWW $\A$ is GFG is in \ptime.
\end{corollary}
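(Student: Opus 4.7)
The plan is to specialise the $\Gt$-characterisation scheme of \cref{ssec:G2-for-alternating} to the weak setting, where the prerequisite $\Gt$-conjecture for nondeterministic automata is already available via \cref{thm:G2-buchi}.

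The observation that makes this work is that the weak condition is a~restriction of B\"uchi and is self\=/dual. Concretely, an AWW $\A$ is a~B\"uchi automaton, and its dual $\bar\A$ can likewise be viewed as B\"uchi after a priority relabelling that preserves the parity (hence the even/odd status) of each transition---and therefore the language as well as the winning condition of $\Gt$. By the B\"uchi-preservation clause of \cref{lem:prop-of-d}, the associated nondeterministic automata $\boxA$ and $\boxAco$ are then both~NBW.

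The key step is to show that $\A$ is GFG if and only if Eve wins both $\Gt(\A)$ and $\Gt(\bar\A)$. One direction is immediate from \cref{lem:gfg-to-G2}. For the converse, assume Eve wins $\Gt(\A)$; by \cref{pro:alt-to-nd} she also wins $\Gt(\boxA)$; since $\boxA$ is an NBW, \cref{thm:G2-buchi} yields that $\boxA$ is GFG; and \cref{lem:GFG-B-to-A} then concludes that $\A$ is $\EGFG$. The symmetric argument applied with $\bar\A$ in place of $\A$ shows that Eve winning $\Gt(\bar\A)$ implies $\bar\A$ is $\EGFG$, i.e.\ $\A$ is $\AGFG$. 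Together these give that $\A$ is~GFG.

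For complexity, under the B\"uchi view both $\A$ and $\bar\A$ have a constant parity index, so by \cref{pro:decide-G2-polynomial} each of $\Gt(\A)$, $\Gt(\bar\A)$ is solvable in polynomial time; checking both gives a polynomial-time test for GFGness. There is no essential obstacle here: the content of the corollary is precisely that weakness places both $\A$ and $\bar\A$ in the regime where the $\Gt$-conjecture has already been established, and the general alternating lifting of \cref{ssec:G2-for-alternating} then does the rest.
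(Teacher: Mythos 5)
Your proof is correct and follows essentially the same route as the paper's: both reduce to the fact that $\boxA$ and $\boxAco$ are B\"uchi for weak $\A$, use \cref{pro:alt-to-nd} to transfer Eve's win in $\Gt(\A)$ and $\Gt(\bar\A)$ to $\Gt(\boxA)$ and $\Gt(\boxAco)$, invoke \cref{thm:G2-buchi} and \cref{lem:GFG-B-to-A} to conclude GFGness, and appeal to \cref{lem:gfg-to-G2} for the converse and to \cref{pro:decide-G2-polynomial} for the polynomial bound. Your write-up is merely a little more explicit about the dual relabelling and the citation of \cref{lem:GFG-B-to-A}.
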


\begin{proof}
Recall that if $\A$ is an AWW, then both $\boxA$ and $\boxAco$ are B\"uchi automata.

We can then show that Eve wins $\Gt(\A)$ and $\Gt(\bar\A)$ if and only if $\A$ is GFG. Indeed, from \cref{pro:alt-to-nd} if Eve wins $\Gt(A)$ and $\Gt(\bar \A)$, she wins $\Gt(\boxA)$ and $\Gt(\boxAco)$. From \cref{thm:G2-buchi}, $\boxA$ and $\boxAco$ are then GFG, and therefore so is the nondeterminism of $\A$ and $\bar \A$, i.e.,\ $\A$ is GFG. The other direction follows from \cref{lem:gfg-to-G2}.

We can then solve $\Gt(\A)$.
\end{proof}

This contrasts in particular with the \pspace-hardness from \cref{sec:Alternating-behaviour}, which holds even for weak automata, of deciding whether the nondeterminism of $\A$ is GFG.

\newcommand{\sigmamove}{\sigma_{\mathit{move}}}
\section{Deciding GFGness of NCW via Two-Token Games}
\label{sec:G2-coBuchi}

The following theorem constitutes a step towards proving \cref{con:G2-to-GFG}. It also provides a simplified  \ptime algorithm for deciding whether an NCW is GFG: it suffices to solve $\Gt$.

\begin{theorem}
\label{thm:G2cobuchi}
A~nondeterministic~coB\"uchi~automaton~$\A$~is~GFG~if~and~only~if~Eve~wins~$G_2(\A)$.
\end{theorem}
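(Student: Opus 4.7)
The forward direction (if $\A$ is GFG then Eve wins $\Gt(\A)$) is a direct adaptation of~\cref{lem:gfg-to-G2}: a GFG strategy of Eve in the letter game depends only on the input word, so the same strategy wins $\Gt(\A)$ while ignoring Adam's two tokens.

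For the reverse direction, the plan is to start with a winning strategy $\sigma$ of Eve in $\Gt(\A)$, which we may assume to be positional, since the winning condition of $\Gt(\A)$ is a Boolean combination of coB\"uchi conditions and hence a parity condition of bounded index. We construct from $\sigma$ a GFG strategy $\sigma^*$ in the letter game by having Eve simulate $\Gt(\A)$ against herself. Concretely, Eve maintains in memory a triple $(p,q_1,q_2)\in Q^3$, initially $(\iota,\iota,\iota)$; upon receiving a letter $\letter$ from Adam in the letter game, she uses $\sigma$ to choose her own transition $p\to p'$, and uses an auxiliary ``safe-maintenance'' strategy $\sigmamove$ to move the virtual tokens $q_1,q_2$.

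The crucial property that $\sigmamove$ must satisfy is that, for every $w\in L(\A)$, at least one of the virtual runs produced for $q_1$ or $q_2$ is accepting. Once this holds, correctness of $\sigma^*$ is immediate: if $w\in L(\A)$ and the run of $\sigma^*$ were rejecting, then the simulated play of $\Gt(\A)$ would have Eve's token rejecting while at least one of Adam's tokens is accepting, contradicting that $\sigma$ is winning. The design idea for $\sigmamove$ is to play one token greedily, avoiding rejecting transitions whenever possible, while the second token serves as a ``backup'' that restarts the greedy search from the current state whenever the first is forced into a rejecting transition, in the spirit of the safe-component analysis used in the NCW-GFG algorithm of~\cite{KS15}.

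The main obstacle will be to prove this key property of $\sigmamove$: namely, that two tokens playing ``greedy-plus-backup'' always witness acceptance for every $w\in L(\A)$. The argument should exploit the structure of coB\"uchi acceptance---that any accepting run admits a safe tail entirely free of rejecting transitions---to show that after finitely many forced resets the backup token synchronises with such a safe tail of some accepting run. This is precisely where the coB\"uchi condition is essential, and where the argument does not immediately extend to general parity automata, which is the content of~\cref{con:G2-to-GFG}.
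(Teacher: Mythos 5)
Your forward direction is fine and matches the paper. The reverse direction, however, has a genuine gap at exactly the point you flag as "the main obstacle": the claimed key property of $\sigmamove$ --- that \emph{two} virtual tokens, moved by a greedy-plus-backup rule, produce an accepting run for \emph{every} $w\in L(\A)$ --- is both stronger than what is needed and not achievable in general. A greedy token that locally avoids rejecting transitions can be repeatedly steered into positions where a rejecting transition is forced, and the backup token, restarting a greedy run from the single current state of the failed token, suffers the same fate; with only two tokens ping-ponging, neither ever stabilises on a safe tail. Nothing in the coB\"uchi structure forces the backup to "synchronise with a safe tail of some accepting run" after finitely many resets, because the safe tail may live at a state different from wherever the backup happens to restart. (A secondary issue: the winning condition of $\Gt(\A)$ for an NCW is a disjunction of a coB\"uchi condition with a conjunction of two B\"uchi conditions, which is not a Rabin condition, so positionality of Eve's winning strategy in $\Gt(\A)$ is not automatic either.)

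The paper's proof circumvents both problems in ways your plan omits. First, it does not construct a token strategy that works for all $w\in L(\A)$; it argues by contradiction, fixing a \emph{finite-memory} winning strategy $\stratA$ of Adam in the letter game, and only needs the tokens to succeed on words produced by $\stratA$. Second, instead of one backup token it uses $N$ \emph{main} tokens (with $N$ depending on the size of $\A$ and on the memory of $\stratA$), each of which, when active, plays the winning strategy of $G_{|S|}(\A)$ against $|S|$ \emph{deterministic} tokens spread over \emph{all} reachable states of a safe set $S$ and moved by a deterministic function $\deltadet$ extracted from a positional winning strategy in a safety game (not by a local greedy rule). The awaiting main tokens are parked using a "limit strategy" $\sigmainf$ built from Eve's winning strategies in $G_k(\A)$ for all $k$, so that each one is still in the winning region of $G_{|S|}$ when its turn comes. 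The number of times all deterministic tokens can die (breakpoints) is then bounded by $N$ via a pumping argument on $\stratA$'s finite memory: more than $N$ breakpoints would force $\stratA$ to produce a word outside $L(\A)$. This machinery --- the normalisation of $\A$, the safety-game analysis yielding $S$ and $\deltadet$, the passage from $G_2$ to $G_k$, and the finite-memory contradiction --- is the actual content of the proof, and your plan as stated does not supply a substitute for it.
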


We give only a proof sketch conveying the main ideas of the construction, leaving the detailed proof to \cref{ap:G2-coBuchi}.
The proof is inspired both by the construction from~\cite{BK18} for B\"uchi automata, and by techniques tailored to coB\"uchi automata from~\cite{KS15}.

It is straightforward that if an NCW $\A$ is GFG then Eve wins $G_2(\A)$ \cite{BK18}.  We thus assume that Eve wins $G_2(\A)$ and show that $\A$ is GFG.
\Subject{Global proof scheme}
Let us start by recalling the main proof scheme of \cite{BK18}, showing that for all NBW $\A$, if Eve wins $G_2(\A)$ then $\A$ is GFG:
\begin{itemize}
\item For every $k\in\Nat\setminus\{0,1\}$, Eve wins $G_2(\A)$ if and only if Eve wins $G_k(\A)$, a game where she has one token and Adam has $k$ tokens.
\item We assume, towards contradiction, that the automaton is not GFG, and we fix a finite\=/memory strategy $\stratA$ for Adam in the letter game of $\A$. This strategy chooses letters such that the produced word $w$ is always in $L(\A)$. Moreover, the finite memory of $\stratA$ guarantees additional structure on the run-DAG of $\A$ on $w$.
\item We describe a strategy $\sigmamove$ to move a fixed number $N$ of tokens in $\A$, such that any word produced by $\stratA$ will be accepted by one of the $N$ tokens.
\item Finally, we build a strategy $\stratE$ for Eve in the letter game of $\A$, moving $N$ virtual tokens in her memory according to $\sigmamove$, and playing her winning strategy $\sigma_N$ in $G_N(\A)$ against them. The play yielded by $\stratE$ playing against $\stratA$ will be winning for Eve, contradicting the fact that $\stratA$ is a winning strategy in the letter game.
\end{itemize}

\Subject{Switching to the coB\"uchi condition}
The goal is to use the same proof scheme. However the strategy $\sigmamove$ will be more involved. Indeed, for the B\"uchi condition, it was enough to take $\sigmamove$ to be a strategy that spread tokens evenly at each nondeterministic choice. This is no longer true for the coB\"uchi condition, and the main challenge here consists of building a strategy $\sigmamove$ with the same properties.
The following ideas are inspired by \cite{KS15}:
\begin{itemize}
\item We show that the automaton $\A$ can be taken in a form that guarantees properties related to the winning region of $G_2(\A)$ and the structure of the graph of accepting transitions.
\item We show that there is a subset $S$ of states of $\A$ and a deterministic transition function $\deltadet$ such that any word $w\in L(\A)$ is ultimately accepted from a state of $S$ via the run yielded by $\deltadet$, without any rejecting transition.
\end{itemize}

We also provide a new construction: we use the fact that Eve wins $G_k(\A)$ for every $k\in\Nat$ to build a ``limit strategy'' $\sigmainf$ in the letter game of $\A$. This strategy might build a rejecting run, but guarantees that for each state $p$ it reaches, and any number $k$ of tokens at reachable states $q_1\dots,q_k$, the position $(p;q_1,\dots ,q_k)$ is in the winning region of $G_k(\A)$.\\

We are now ready to build the strategy $\sigmamove$, which is the only missing piece to complete the proof.
We take a big number $N$ of tokens that depends on the size of $\A$ and of the size of the memory of Adam's strategy $\stratA$.
The strategy $\sigmamove$ moves these $N$ main tokens according to $\sigmainf$ until, one by one, they become \emph{active} and deviate from $\sigmainf$ to attempt to build an accepting run. To do so, the current active token will play $\sigma_{|S|}$, the winning strategy in $G_{|S|}$, against $|S|$ virtual \emph{deterministic tokens}. These $|S|$ tokens start from the states of $S$ that Adam could have reached, and move deterministically according to $\deltadet$. We use here the fact that $\sigmainf$ was built so that the active token is in a position to win $G_{|S|}(\A)$. If a deterministic token encounters a rejecting transition, it is considered \emph{dead}. If all deterministic tokens are dead, we reached a \emph{breakpoint}: a new main token becomes active, and deviates from $\sigmainf$ by starting to play $\sigma_{|S|}$ against $|S|$ new virtual deterministic tokens. On the other hand, if at least one of these deterministic tokens stays alive forever, then our currently active token will build an accepting run by correctness of $\sigma_{|S|}$. We show that this must happen eventually, as otherwise there are $N$ ``breakpoints'', contradicting the fact that the finite-memory strategy $\stratA$ only builds words in $L(\A)$. This means that the global strategy $\sigmamove$ is correct: one of the $N$ main tokens will always build an accepting run, providing the input word has been produced by $\stratA$.
The behaviour of tokens is illustrated in \cref{fig:tokens}.

\begin{figure}
\centering
\includegraphics[scale=.5]{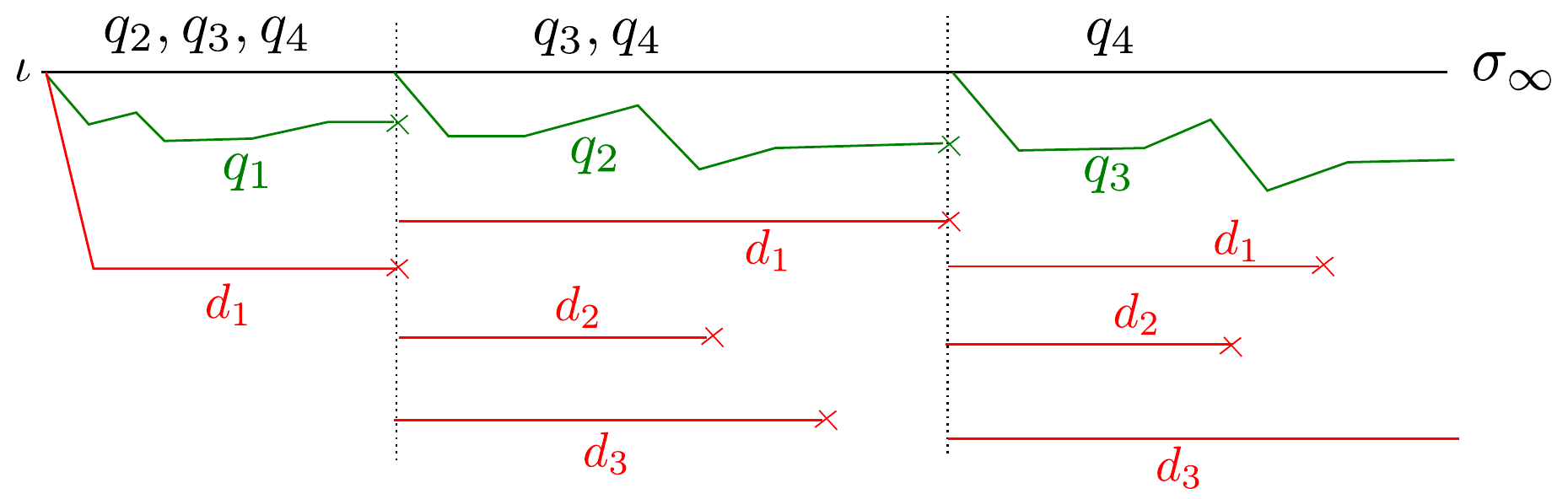}
\caption{An illustration of the behaviour of tokens in the memory structure of the strategy $\stratE$. Awaiting main tokens are represented in black, active main tokens in green, and alive deterministic tokens in red. Breakpoints are represented by dashed vertical lines.}
\label{fig:tokens}
\end{figure}

Finally, to win the letter game against $\tau$, Eve moves her token by simulating her strategy in $G_N(\A)$ against $N$ virtual tokens moving according to $\sigmamove$. Then, as $\tau$ plays a word in $L(\A)$, one of the $N$ virtual tokens is guaranteed to follow an accepting run, so Eve's token will also follow an accepting run. This contradicts the assumption that $\tau$ is a winning strategy for Adam, and proves that $G_2$ indeed characterises GFGness of NCW.

\section{Conclusions}
\label{sec:conclusions}

The results obtained in this work shed new light on where alternating GFG automata resemble nondeterministic ones, and where they differ.

In particular, we show that alternating parity GFG automata can be exponentially more succinct than any equivalent nondeterministic GFG automata, yet this succinctness does not become double exponential when compared to deterministic automata, answering a question from~\cite{BL19}. Some further succinctness problems are left open here, such as the possibility of a doubly exponential gap between GFG automata of stronger acceptance conditions and deterministic ones, as well as between $\EGFG$ parity automata and deterministic ones.

We also show that the interplay between the two players can be used to decide whether an automaton is GFG without deciding $\EGFG$ and $\AGFG$ separately.
In particular, the $\Gt$ characterisation of nondeterministic GFG automata conjectured in~\cite{BK18} (see \cref{con:G2-to-GFG}) suffices to recognise alternating GFG parity automata of fixed index in \ptime.
We provide further evidence to the conjecture, by proving it for the coB\"uchi condition, combining insights from~\cite{KS15} and~\cite{BK18}, and using some new techniques. We also note that the conjecture holds for the generalized-B\"uchi condition, via a proof that slightly adapts the one from~\cite{BK18} for the B\"uchi  case.
Still, we could not manage to prove the conjecture for nondeterministic automata with $3$ parity priorities, and we believe that new insights will be necessary to climb up the parity ladder.

\newpage

\bibliography{gfg}
\vspace{2cm}

\noindent{\huge{\textbf{Appendix}}}
\appendix

\section{Appendix of \cref{sec:Preliminaries}}
\label{ap:Preliminaries}

In this section of the appendix we provide the remaining technical definitions from \cref{sec:Preliminaries} that are used in the proofs.

\newcommand{\GL}{C} 
\Subject{Games}
A~\emph{$\Sigma$-arena} is a directed (finite or infinite) graph with nodes (positions) split into $E$\=/labelled positions of Eve and $A$-labelled positions of Adam, where the edges (transitions) are labelled by elements of $\Sigma\sqcup\{\noLab\}$. The role of $\noLab$ is to mark edges that have no influence on the winner of a play, e.g., edges allowing players to resolve some Boolean formula.

We represent such an arena as $R=(V,X,V_E,V_A)$, where $V$ is its set of positions; $X\subseteq V\times \big(\Sigma\sqcup\{\noLab\}\big)\times V$ its transitions; $V_E\subseteq V$ the $E$\=/positions; and $V_A = V\setminus V_E$ the $A$\=/positions.

Notice that the definition allows more than one transition between a pair of positions (such transitions needs to have distinct labels). We will require that each infinite path contains infinitely many $\Sigma$\=/labelled transitions. An arena might be rooted at an initial position $v_\iota\in V$. We say that a position $v$ is \emph{terminal} if there is no outgoing transition from~$v$ (i.e.\ no element of $X$ of the form $(v,\letter,v')$). If we don't say that an arena is \emph{partial} then it is assumed that there are no terminal positions.

If $R$ is a (partial) $\Sigma$\=/arena and $V'\subseteq V$ is a set of positions, then $R\restr_{V'}$ is the \emph{sub-arena} of $R$ defined as the restriction of $R$ to the positions in $V'$, namely for $P\in\{E,A\}$, the $P$\=/positions of $R\restr_{V'}$ are $V'_P:= V_P\cap V'$, and its transitions are $X':= X\cap (V'\times (\Sigma\cup\{\epsilon\})\times V')$.
We say that two (partial) $\Sigma$-arenas 
$R=(V,X,V_E,V_A)$ and $R'=(V',X',V'_E,V'_A)$ are \emph{isomorphic} if there exists a bijection $i\colon V\to V'$ that preserves the membership in $V_P$/$V'_P$, for $P\in\{E,A\}$, and sets of transitions $X$/$X'$.

A~\emph{partial play} in $R$ is a path in $R$, i.e.,\ an element $\pi=v_0e_0v_1e_1\ldots$ of $V\cdot\big(X\cdot V)^\ast\cup\big(V\cdot X\big)^\omega$, where for every $i$ we have $e_i=(v_i,\letter_i, v_{i+1})$. Such a partial play is said to \emph{begin} in $v_0$. A~partial play is a \emph{play} if either it is infinite or the last position $v_i$ is terminal. 

A \emph{game} is a $\Sigma$-arena together with a winning condition $W\subseteq \Sigma^\omega$. An infinite play $\pi$ is said to be winning for Eve in the game if the sequence of $\Sigma$-labels $(\letter_i)_{i\in\Nat}$ of the transitions along $\pi$ form a word in $W$. Else $\pi$ is winning for Adam. Games with some class $X$ of winning conditions (e.g.,\ the parity condition) are called $X$ games (e.g.,\ parity games).

A \emph{strategy} for Eve (resp.\ Adam) is a function $\tau\colon V\cdot \big(X\cdot V\big)^*\rightarrow X$ that maps a \emph{history} $v_0e_0v_1\ldots e_{i-1}v_i$, i.e. a finite prefix of a play in $R$, to a transition $e_i$ whenever $v_i$ belongs to $V_E$ (resp. to $V_A$). A partial play $v_0e_0v_1e_1\dots$ agrees with a strategy $\tau$ for Eve (Adam) if whenever $v_i\in V_E$ (resp. in $V_A$), we have $e_i=\tau(v_0e_0v_1\ldots e_{i-1}v_i)$. A strategy for Eve (Adam) is winning from a position $v\in V$ if all plays beginning in $v$ that agree with it are winning for Eve (Adam). We say that a player wins the game from a position $v\in V$ if they have a winning strategy from $v$.
If the game is rooted at $v_\iota$, we say that a player wins the game if they win from $v_\iota$.

A strategy is \emph{positional} if its value depends only on the last position, i.e.,\ $\tau(v_0e_0\cdots e_{i-1}v_i)$ depends only on $v_i$. In that case the strategy of a player $P$ can be represented as a function $\tau\colon V_P\to X$.

We also define the notion of \emph{strategy with memory $M$} for player $P$. This is a tuple $(\sigma, M, m_0, \upd)$ where $M$ is a~set of \emph{memory states}; $m_0$ is an~\emph{initial memory state}; $\upd\colon M\times X\to M$ is an \emph{update function}, and $\sigma\colon M\times V_P\to X$ is a~strategy deciding which move should be played, depending only on the current memory state and on the current position. Along a play, the memory starts with $m_0$, and is updated along every transition according to $\upd$. The general notion of strategy corresponds to $M=(V\cdot X)^*$, and positional strategies correspond to $M$ being a singleton.
A player has a \emph{finite\=/memory winning strategy} if there exists a winning strategy using a finite memory set $M$.

\begin{proposition}
\label{lem:unfolding}
Let $G$ and $G'$ be two $\Sigma$-games with the same winning condition, such that the unfoldings of $G$ and $G'$ are isomorphic. Then Eve has a winning strategy in $G$ if and only if she has a winning strategy in $G'$.
\end{proposition}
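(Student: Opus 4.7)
The plan is to exploit the fact that the winner of a game, and the existence of winning strategies, depends only on the tree of plays (the unfolding) together with the labelling on transitions, not on the structure of the underlying arena. I would first make precise the unfolding of a $\Sigma$-arena $R$ rooted at $v_\iota$: it is the tree $T(R)$ whose nodes are the finite partial plays $v_0 e_0 v_1 \cdots v_i$ starting at $v_\iota$, with $\Sigma\sqcup\{\noLab\}$-labelled edges given by the one-step extensions, and whose nodes inherit the player-ownership of their last position. An isomorphism $\phi$ between the unfoldings of $G$ and $G'$ is a bijection of nodes preserving ownership, edge-labels, and the root.

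The key observation is that strategies for Eve in $R$ are in natural bijection with choice functions on the $E$-nodes of $T(R)$: a history-based strategy $\tau\colon V\cdot(X\cdot V)^\ast\to X$ is literally a function assigning to each $E$-node of $T(R)$ an outgoing labelled edge. Given such a strategy $\tau$ in $G$, I would transport it to $G'$ by setting $\tau'(h') := \phi(\tau(\phi^{-1}(h')))$ for each history $h'$ in $G'$, which is well-defined because $\phi$ preserves ownership and edge-labels.

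Next I would verify that plays consistent with $\tau$ in $G$ are sent, node by node and edge by edge, to plays consistent with $\tau'$ in $G'$, and that the sequence of $\Sigma$-labels is preserved along this correspondence (the $\noLab$-edges map to $\noLab$-edges, and $\Sigma$-edges map to $\Sigma$-edges carrying the same letter, by the definition of isomorphism of $\Sigma$-arenas). Since the winning condition $W\subseteq\Sigma^\omega$ is the same for both games and depends only on the $\Sigma$-label sequence, a play is winning for Eve in $G$ if and only if its image is winning for Eve in $G'$. Hence $\tau$ is winning in $G$ if and only if $\tau'$ is winning in $G'$, and by symmetry (using $\phi^{-1}$) the equivalence goes the other way as well.

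I do not expect a serious obstacle here; the statement is essentially a sanity check. The only delicate point is bookkeeping around the $\noLab$-labels and the assumption that every infinite path contains infinitely many $\Sigma$-labelled transitions, so that the induced word in $\Sigma^\omega$ is genuinely infinite and the winning condition applies in the same way on both sides. Once the isomorphism is unpacked at the level of histories, the argument is purely a relabelling of strategies and plays.
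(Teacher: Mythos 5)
Your argument is correct: identifying histories of a game with nodes of its unfolding, transporting Eve's strategy along the isomorphism $\phi$, and observing that the induced $\Sigma$-label sequences (hence membership in the common winning condition $W$) are preserved is exactly the standard proof of this fact. The paper states this proposition without proof, treating it as folklore, so your write-up simply supplies the canonical argument it leaves implicit; the bookkeeping you flag about $\epsilon$-labelled edges and the guarantee of infinitely many $\Sigma$-labelled transitions on every infinite path is indeed the only point requiring care, and you handle it correctly.
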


\begin{proposition}[\cite{klarlund_progress_positional}]
\label{prop:RabinPositionalDeterminacy}
Rabin games are positionally determined for Eve. (If Eve has a~winning strategy then she has a positional winning strategy.)
\end{proposition}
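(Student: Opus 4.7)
The plan is to follow the progress measure approach of Klarlund cited in the paper. Given a Rabin game with pairs $\{(B_i, G_i)\}_{i=1}^{k}$ over an arena $R$, I would introduce a well-ordered set $M$ of \emph{ranks} and a notion of \emph{progress measure}: a function $\mu \colon V \to M \cup \{\top\}$ such that at every Eve-owned position $v \in V_E$ some outgoing transition $e = (v, \letter, v')$ satisfies a local ``no-regress'' condition relating $\mu(v)$ and $\mu(v')$ that depends on which $B_i$ and $G_i$ contain $e$, while at every Adam-owned position $v \in V_A$ \emph{every} outgoing transition satisfies the analogous condition. The positional strategy extracted from a valid $\mu$ simply picks at each Eve-position $v$ an edge witnessing the condition.

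The order on $M$ has to be designed so that any play consistent with the extracted strategy is winning. A natural choice is a lexicographic combination of a ``committed index'' $i \in \{1, \ldots, k\}$ together with a vector of finite counters bounded by $|V|$ tracking visits to $G_i$ while avoiding $B_i$: the counter component strictly decreases on edges in $G_i \setminus B_i$, may not increase on edges outside $B_i$, and an edge in $B_i$ forces a reset that strictly decreases a top-level index component in a well-founded ordering over $\{1, \ldots, k\} \times \Nat$. Well-foundedness of the top-level order forces the committed index to stabilise along any such play; once it stabilises at some $i$, the local condition guarantees that $B_i$ is never visited again (otherwise the index would switch once more) and that $G_i$ is visited infinitely often (otherwise the counter would stagnate at a strictly positive value, violating the condition). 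Hence the play satisfies the $i$-th Rabin pair.

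For the converse, I would obtain the \emph{minimal} progress measure by a standard fixed-point iteration on the complete lattice of functions $V \to M \cup \{\top\}$ ordered pointwise, where the update operator takes the minimum over successor values at Eve-positions, the maximum over successor values at Adam-positions, and incorporates the one-step penalty/reward induced by the labelling. An induction on the ranks shows that the minimal progress measure is finite at exactly those positions from which Eve has a winning strategy, so its induced positional strategy is a positional winning strategy on all of Eve's winning region.

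The main obstacle is the design of the order on $M$: it must simultaneously be well-founded, admit a monotone update operator for the fixed-point iteration, and have the property that a play on which $\mu$ is non-increasing is \emph{provably} winning for Eve. This is exactly where the asymmetry between Rabin and Streett enters---the analogous construction for Adam's goal (the Streett condition) would need to track \emph{all} pairs simultaneously, which destroys the well-founded top-level index argument and is consistent with the remark in \cref{sec:Determinisation} that Streett games are not positionally determined for Eve.
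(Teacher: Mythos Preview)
The paper does not give a proof of this proposition: it is stated with a citation to Klarlund and used as a black box (in \cref{thm:ExpandedGamePositional}, \cref{cl:Gpos-positional}, and the determinisation construction). There is therefore nothing in the paper to compare your argument against.

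As for your sketch itself, the overall shape---a well-founded rank space, a local progress condition that is existential at Eve's positions and universal at Adam's, extraction of a positional strategy from a least fixed point---is the right template. But the concrete order you describe does not do what you claim. You say the counter ``strictly decreases on edges in $G_i\setminus B_i$'' and then conclude that $G_i$ is visited infinitely often because ``otherwise the counter would stagnate at a strictly positive value, violating the condition.'' This is backwards on two counts: first, if the counter strictly decreases on $G_i$-edges and is bounded below, a play that visits $G_i$ infinitely often would force infinitely many strict decreases in a well-founded order, which is impossible; second, stagnation at a positive value does not by itself violate any local non-regress condition. The usual encoding is dual: the counter bounds the number of $B_i$-visits since the last $G_i$-visit, so $B_i$-edges force a strict increase and $G_i$-edges permit a reset. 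Then boundedness of the counter once the index has stabilised yields finitely many $B_i$-visits, and the local condition together with the reset semantics forces recurring $G_i$-visits. Your ``top-level index'' component is also underspecified: you need it to change only when the current pair is abandoned (a $B_i$-edge after the counter is saturated), and the well-founded order on indices must allow a fresh pair to be chosen without cycling---a linear order on $\{1,\ldots,k\}$ is not enough, one typically uses a tree of pairs (the ``index appearance record'' idea) or Zielonka's recursion. As written, the sketch would not go through.
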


\begin{definition}[Synchronised product]
\label{def:GameAutomataProd}
The \emph{synchronised product} $\pro{R}{\A}$ of a (partial) $\Sigma$-arena $R=(V,X,V_E,V_A)$ and an alternating automaton $\A=(\Sigma,Q,\iota,\delta,\alpha)$ with a set of transitions $T$ and labelling $\alpha: T \to \Gamma$ is a (partial) $\Sigma{\times}\Gamma$-arena defined as follows. Its set of positions is $(V\times Q) \cup (V\times Q\times\Sigma\times\widehat{\delta})$, and its transitions are defined by:

\begin{tabular}{ll}
$\big\langle (v,q), \noLab, (v',q,\letter,\delta(q,\letter))\big\rangle$
&for $(v,q)\in V\times Q$ and $\langle v, \letter, v'\rangle\in X$;\\
$\big\langle (v,q), \noLab, (v',q)\big\rangle$ 
& for $(v,q)\in V\times Q$ and $\langle v, \noLab, v'\rangle\in X$;\\
$\big\langle (v,q,\letter,b), \noLab, (v,q,\letter,b_i)\big\rangle$
&for $(v,q,\letter,b)\in V\times Q\times \Sigma\times \widehat{\delta}$\\
&with $b=b_1{\lor} b_2$ or $b=b_1{\land} b_2$ and $i=1,2$;\\
$\big\langle (v,q,\letter,q'), \big(\letter,\alpha(q,\letter,q')\big), (v,q')\big\rangle$
&for $(v,q,\letter,q')\in V\times Q\times \Sigma\times \widehat{\delta}$ with $q'\in Q$.
\end{tabular}

The positions belonging to Eve are of the form $(v,q)$ where $v\in V_E$ and of the form $(v,q,\letter,b_1{\lor}b_2)$. The remaining ones belong to Adam. If $R$ has an initial position $v_\iota$ then the initial position of the product is $(v_\iota,\iota)$.
\end{definition}

We implicitly assume that the arena only contains vertices that are reachable from $V\times Q$. (They need not be reachable from an initial position of $R$ and an initial state of $\A$, but from some position of $R$ and state of $\A$.)

We will sometimes consider longer products, like $(R\times \A_1)\times \A_2$, where $R$ is a~$\Sigma$\=/arena and both automata $\A_1$ and $\A_2$ are over the alphabet $\Sigma$. Assume that $\A_1$ and $\A_2$ have transitions labelled in sets $\Gamma_1$ and $\Gamma_2$ respectively. Notice that in that case the arena $R\times \A_1$ is formally a~$\Sigma\times\Gamma_1$\=/arena. Thus, to make the above formula precise, we treat the automaton $\A_2$ as an~automaton over the alphabet $\Sigma\times\Gamma_1$ and assume that it ignores the second component of the letters read.

\Subject{One-step arenas}
For a~letter $\letter\in\Sigma$, we denote by $R_\letter$ a~partial $\Sigma$\=/arena consisting of two vertices $v$ and $v'$ (it does not matter which of the players controls them), and one transition $\tuple{v, \letter, v'}$. Then, for an automaton $\A=(\Sigma,Q,\iota,\delta,\alpha)$, the product $R_\letter\times \A$ is a partial arena, in which the players should resolve their choices in the formulas $\delta(q,\letter)$ for all the possible states $q\in Q$. We call it the \emph{one\=/step arena} of $\A$ over $\letter$.
Such an arena contains one position of the form $(v,q)$ for each state $q\in Q$; a~set of non\=/terminal positions of the form $(v',q,\letter,\psi)$ for some $q\in Q$ and $\psi\in\widehat{\delta}$; and one terminal position of the form $(v',q)$ for each state $q\in Q$. (See \cref{fig:one-step-arena}.)

\begin{figure}
\centering
\begin{tikzpicture}

\draw (-0.5, 1) edge[obrace] node{$R_a$} (0.5, 1);
\draw (1, 1) edge[obrace] node{$\A$} (5.5, 1);
\draw (6, 1) edge[obrace] node{$R_a\times \A$} (13, 1);

\node[eve,  minimum size=22pt] (vi) at (0, 0) {$v$};
\node[adam, minimum size=22pt] (vt) at (0,-8) {$v'$};
\draw[trans] (vi) edge node {$a$} (vt);

\tikzstyle{flow} = [anchor=mid west, inner sep=0cm]
\newcommand{\x}{1}

\node[flow] at (\x, -0.0) {$Q=\{q_0, q_1, q_2\}$};

\node[flow] at (\x, -1.0) {$\delta(q_0,a)=(q_0{\land} q_1)\lor(q_1{\land}q_2)$};
\node[flow] at (\x, -1.5) {$\delta(q_1,a)=q_1 \land q_2$};
\node[flow] at (\x, -2.0) {$\delta(q_2,a)=q_1 \lor q_2$};

\node[flow] at (\x, -3.0) {$\alpha(q_0,a,q_0)=3$};
\node[flow] at (\x, -3.5) {$\alpha(q_0,a,q_1)=2$};
\node[flow] at (\x, -4.0) {$\alpha(q_0,a,q_2)=6$};

\node[flow] at (\x, -5.0) {$\alpha(q_1,a,q_1)=4$};
\node[flow] at (\x, -5.5) {$\alpha(q_1,a,q_2)=5$};

\node[flow] at (\x, -6.5) {$\alpha(q_2,a,q_1)=3$};
\node[flow] at (\x, -7.0) {$\alpha(q_2,a,q_2)=5$};


\foreach \x/\p in {0/eve, 1/adam, 2/adam} {
	\node[eve,  scale=0.7, minimum size=35pt] (v\x) at (7 + 2.5*\x, +0) {$v, q_{\x}$};
	\node[adam, scale=0.7, minimum size=30pt] (f\x) at (7 + 2.5*\x, -8) {$v', q_{\x}$};
}

\tikzstyle{Atree} = [adam, scale=0.7, inner sep=0cm, node distance=0cm and 0cm, align=center]
\tikzstyle{Etwo} = [eve, scale=0.7, inner sep=0cm, node distance=0cm and 0cm, align=center]

\node[Etwo] (w0) at ($(v0)+(-0,-2)$) {$v'$, $q_0$, $a$, \\ $(q_0\land q_1)\lor$\\$(q_1\land q_2)$};
	
\tikzstyle{Atree} = [adam, scale=0.7, inner sep=0cm, node distance=0cm and 0cm, align=center]
\tikzstyle{Etree} = [eve, scale=0.7, inner sep=0cm, node distance=0cm and 0cm, align=center]

\node[Atree] (wl0) at ($(v0)+(-0.7,-4)$) {$v'$, $q_0$, $a$,\\$q_0\land q_1$};

\node[Atree] (wr0) at ($(v0)+(+0.7,-4)$) {$v'$, $q_0$, $a$,\\$q_1\land q_2$};

\node[Atree] (wl1) at ($(v1)+(+0.0,-4)$) {$v'$, $q_1$, $a$,\\$q_1\land q_2$};

\node[Etree] (wr2) at ($(v2)+(-0,-4)$) {$v'$, $q_2$, $a$,\\$q_1\lor q_2$};

\tikzstyle{Afour} = [adam, scale=0.7, inner sep=0cm, node distance=0cm and 0cm, align=center]
\tikzstyle{Efour} = [eve, scale=0.7, inner sep=0cm, node distance=0cm and 0cm, align=center]

\node[Afour] (f00) at ($(v0)+(-1,-6)$) {$v'$, $q_0$,\\$a$, $q_0$};
\node[Afour] (f01) at ($(v0)+(+0,-6)$) {$v'$, $q_0$,\\$a$, $q_1$};
\node[Afour] (f02) at ($(v0)+(+1,-6)$) {$v'$, $q_0$,\\$a$, $q_2$};

\node[Afour] (f11) at ($(wl1)+(-0.25,-2)$) {$v'$, $q_1$,\\$a$, $q_1$};
\node[Afour] (f12) at ($(wl1)+(+0.75,-2)$) {$v'$, $q_1$,\\$a$, $q_2$};

\node[Afour] (f21) at ($(wr2)+(-0.5,-2)$) {$v'$, $q_2$,\\$a$, $q_1$};
\node[Afour] (f22) at ($(wr2)+(+0.5,-2)$) {$v'$, $q_2$,\\$a$, $q_2$};

\draw[trans] (v0) edge (w0);	
\draw[trans] (v1) edge (wl1);	
\draw[trans] (v2) edge (wr2);	

\draw[trans] (w0) edge (wl0);
\draw[trans] (w0) edge (wr0);

\draw[trans] (wl0) edge (f00);
\draw[trans] (wl0) edge (f01);
\draw[trans] (wr0) edge (f01);
\draw[trans] (wr0) edge (f02);

\draw[trans] (wl1) edge (f11);
\draw[trans] (wl1) edge (f12);

\draw[trans] (wr2) edge (f21);
\draw[trans] (wr2) edge (f22);

\tikzstyle{edtr} = [sloped, pos=0.5, anchor=south, scale=0.7]

\draw (f00.south) edge[trans] node[edtr, above] {$(a,3)$} (f0);
\draw (f01.south) edge[trans] node[edtr, below] {$(a,2)$} (f1);
\draw (f02.south) edge[trans] node[edtr, pos=0.15, below] {$(a,6)$} (f2);

\draw (f11.south) edge[trans] node[edtr, pos=0.7, below] {$(a,4)$} (f1);
\draw (f12.south) edge[trans] node[edtr, pos=0.7, above] {$(a,5)$} (f2);

\draw (f21.south) edge[trans] node[edtr, pos=0.61, above] {$(a,3)$} (f1);
\draw (f22.south) edge[trans] node[edtr, below] {$(a,5)$} (f2);
\end{tikzpicture}
\caption{A one-step arena over a letter $a\in\Sigma$, obtained as a product of a~simple arena $R_a$ with the alternating parity automaton $\A$. In this example $v$ is controlled by Eve and $v'$ by Adam. The transitions with no label are labelled by $\noLab$. Diamond\=/shaped positions belong to Eve and square\=/shaped positions belong to Adam.}
\label{fig:one-step-arena}
\end{figure}
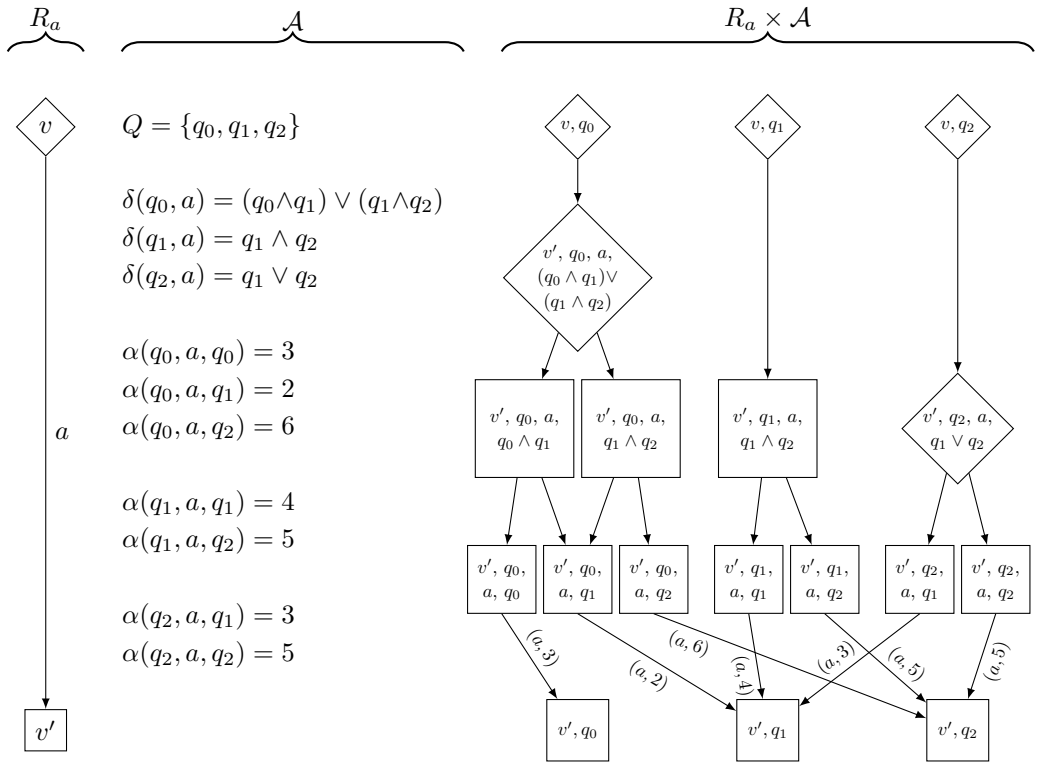

\Subject{Boxes}
\label{def:boxes}
In the later exposition, we will be interested in the combinatorial structure of possible strategies of Eve over one-step arenas $R_\letter\times \A$. For a positional strategy $\strat$ of Eve in a game of the form $R_\letter\times \A$, we define the \emph{box} of $\A$, $\letter$, and $\strat$, denoted by $\beta(\A,\letter,\strat)$, as the relation that is a subset of $Q\times \Sigma\times Q$ and contains a triple $(q,\letter,q')$ iff there exists a play in $R_\letter\times \A$ that is consistent with $\strat$, starting in $(v,q)$ and ending in $(v',q')$. We further define for every $\letter\in\Sigma$, the set $\boxes_{\tuple{\A,\letter}} = \{\beta(\A,\letter,\strat) \mid\text{$\strat$ is a positional strategy of Eve}\}$. Finally, let $\boxes_\A:=\bigcup_{\letter\in \Sigma}\boxes_{\tuple{\A,\letter}}$. Notice that $|\boxes_\A| \leq 2^{|Q\times \Sigma\times Q|}$.
When speaking of an arbitrary box, we mean any non-empty relation $\beta\subseteq Q\times\Sigma\times Q$ where all the letters $\letter$ appearing on the middle component are equal.

\cref{fig:boxes} represents $\boxes_{\tuple{\A,a}}$ for the automaton $\A$ of \cref{fig:one-step-arena}: Since there are two binary\=/choice positions of Eve in the corresponding one\=/step arena, there are four distinct positional strategies of Eve, which give the four possible boxes. They correspond to Eve choosing respectively LL, LR, RL, RR, where L stands for a left choice and R for a right choice in each of her two binary\=/choice positions.

\begin{proposition}
\label{prop:choice-to-strat}
Consider a letter $\letter$ and an automaton $\A$ with states $Q$ and transition function $\delta$. Then there is a bijection between $\boxes_{\tuple{\A,\letter}}$ and the positional strategies of Eve in the one-step arena of $\A$ and $\letter$.
\end{proposition}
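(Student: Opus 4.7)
My plan is to establish the bijection in two parts: surjectivity is free from the definition, while injectivity requires work. Indeed, $\boxes_{\tuple{\A,\letter}}$ is by definition the image of the map $F\colon\strat\mapsto\beta(\A,\letter,\strat)$, so every element of $\boxes_{\tuple{\A,\letter}}$ arises from some positional strategy of Eve.

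For injectivity, I would first unpack the structure of the one-step arena $R_\letter\times\A$. The only Eve positions where a genuine choice is made are those of the form $(v',q,\letter,\psi_1\lor\psi_2)$, one per disjunctive subformula occurring in any $\delta(q,\letter)$; the single outgoing $\noLab$-transition from each initial position $(v,q)$ leaves no real choice there. So a positional strategy $\strat$ amounts to selecting a disjunct at each disjunctive subformula. Unwinding the definition of a play then shows that, for every $q\in Q$, the section $\{q'\mid (q,\letter,q')\in\beta(\A,\letter,\strat)\}$ is exactly the set of atoms reachable from $\delta(q,\letter)$ when Eve follows $\strat$ and Adam explores all conjunctive branches.

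The heart of the argument is a structural induction on $\delta(q,\letter)$ showing that this reachable set uniquely determines $\strat$'s move at every disjunctive subformula visited along some play from $(v,q)$. The atomic case is vacuous; at a conjunction $\psi_1\land\psi_2$ the reachable set splits as the union of the reachable sets of the two conjuncts, and the inductive hypothesis applies to each side; at a disjunction $\psi_1\lor\psi_2$ exactly one subtree contributes, and together with the known shape of the formula this reveals Eve's chosen disjunct.

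The main obstacle I anticipate is that Eve's moves at disjunctive subformulas unreachable from any $(v,q)$ leave no trace in $\beta(\A,\letter,\strat)$, so two strategies differing only at such positions yield the same box. I plan to resolve this by working with the natural notion of positional strategy on the sub-arena of $R_\letter\times\A$ reachable from $V\times Q$, or equivalently by fixing a canonical choice (say the leftmost disjunct) at each unreachable Eve position. Under this convention, the inductive reconstruction of $\strat$ from $\beta(\A,\letter,\strat)$ furnishes an inverse to $F$, completing the bijection.
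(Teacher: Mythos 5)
The paper offers no proof of this proposition: it is stated as a bare observation immediately after the worked example of \cref{fig:boxes} (the LL/LR/RL/RR discussion), so there is no official argument to measure yours against. Your surjectivity remark is fine---$\boxes_{\tuple{\A,\letter}}$ is by definition the image of $\strat\mapsto\beta(\A,\letter,\strat)$---and you are right to worry about choices that leave no trace in the box. But your injectivity argument has a gap beyond the one you flag. The step ``at a disjunction exactly one subtree contributes, and \dots this reveals Eve's chosen disjunct'' is false in general: two distinct disjuncts can carry sub-strategies realising the same reachable set of atoms. Take $\delta(q,\letter)=(q_1\land q_2)\lor(q_2\land q_1)$. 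Whichever disjunct Eve picks at the root---a position visited by \emph{every} play from $(v,q)$, so your canonicalisation of unvisited positions does not help---the resulting box contains exactly $(q,\letter,q_1)$ and $(q,\letter,q_2)$. Thus two genuinely different positional strategies induce the same box, and no convention at unreachable or unvisited positions restores injectivity; the literal bijection claim fails for such formulas.

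The honest statement is that the correspondence becomes a bijection only after quotienting strategies by ``having the same set of consistent plays'' (equivalently, inducing the same box), or under a syntactic non-redundancy assumption on the transition conditions. This weaker form is also all the paper ever uses: in \cref{lem:PositionalStrategiesAndRuns} and in the construction of $\stratE'\colon\Sigma^+\to\boxes_\A$, what is needed is (i) that every box arises from some positional strategy, and (ii) that for each $q$ the section $\{q'\mid(q,\letter,q')\in\beta(\A,\letter,\strat)\}$ is exactly the set of atoms reachable under $\strat$ against all of Adam's resolutions of conjunctions---so that paths of a box sequence coincide with plays consistent with the corresponding strategy. Your structural induction does establish (ii) correctly in the atomic and conjunction cases; I would prove (i) and (ii) and either drop the word ``bijection'' or make the quotient explicit, rather than trying to reconstruct $\strat$ from its box.
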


\begin{figure}
\centering
\newcommand{\drawBox}[2]{
\foreach \x in {0,1,2} {
  \node[scale=0.8] at (#1+\x, #2+0.5) {$q_\x$};
  \node[dot] (u\x) at (#1+\x, #2+0) {};
  \node[dot] (l\x) at (#1+\x, #2-1) {};
  \node[scale=0.8] at (#1+\x, #2-1.5) {$q_\x$};
}
}

\begin{tikzpicture}
\drawBox{0}{0}
\draw[edgeBox] (u0) -- (l0);
\draw[edgeBox] (u0) -- (l1);

\draw[edgeBox] (u1) -- (l1);
\draw[edgeBox] (u1) -- (l2);

\draw[edgeBox] (u2) -- (l1);

\drawBox{3.5}{0}
\draw[edgeBox] (u0) -- (l0);
\draw[edgeBox] (u0) -- (l1);

\draw[edgeBox] (u1) -- (l1);
\draw[edgeBox] (u1) -- (l2);

\draw[edgeBox] (u2) -- (l2);

\drawBox{7}{0}
\draw[edgeBox] (u0) -- (l1);
\draw[edgeBox] (u0) -- (l2);

\draw[edgeBox] (u1) -- (l1);
\draw[edgeBox] (u1) -- (l2);

\draw[edgeBox] (u2) -- (l1);

\drawBox{10.5}{0}
\draw[edgeBox] (u0) -- (l1);
\draw[edgeBox] (u0) -- (l2);

\draw[edgeBox] (u1) -- (l1);
\draw[edgeBox] (u1) -- (l2);

\draw[edgeBox] (u2) -- (l2);
\end{tikzpicture}
\caption{The four possible boxes corresponding to Eve's choices in the one-step arena of \cref{fig:one-step-arena}. (All edges should be labelled with $a$, which we omit for better readability.)}
\label{fig:boxes}
\end{figure}

\Subject{Acceptance of a word by an automaton}
We define the acceptance directly in terms of the model-checking (acceptance/membership) game, which happens to be exactly the product of the automaton with a path-like arena describing the input word. More precisely, given a word $w\in\Sigma^\omega$, the \emph{model-checking game} is defined as the product $\pro{R_w}{\A}$, where the arena~$R_w$ consists of an infinite path $\omega$, of which all positions belong to Eve (although it does not matter); the transitions are of the form $\langle i, w_i, i{+}1\rangle$; the initial position is $0$; and the winning condition is based on the winning condition of $\A$ (the $\Sigma$-component of the labels is ignored). We say that $\A$ \emph{accepts} $w$ if Eve has a winning strategy in the model-checking game $R_w\times \A$. The language of an automaton $\A$, denoted by $L(\A)$, is the set of words that it accepts (recognises).

Notice that for each $i\in\Nat$, the sub-arena of $\pro{R_w}{\A}$ with positions in
\[\big( \{i\}\times Q\big) \cup \big(\{i{+}1\} \times  Q{\times}\Sigma{\times}\BP(Q)\big)\cup \big(\{i{+}1\}\times Q\big)\]
is isomorphic to the one-step arena of $\A$ over $w_i$.

\NotNeeded{
\begin{figure}
\centering
\begin{tikzpicture}
\draw (-1, 1) edge[obrace] node{$R_w$} (1, 1);

\draw (3, 1) edge[obrace] node{$R_w\times \A$} (9, 1);

\foreach \x in {0, 1, 2, 3} {
	\node[eve, scale=0.9] (l\x) at (0, -2* \x) {$\x$};
	
	\foreach \s in {0,...,2} {
		\node[eve, scale=0.7] (v\x\s) at (4 + 2*\s, -2 * \x) {$(\x, q_{\s})$};
	}
}

\foreach \x/\l in {0/a, 1/b, 2/b} {
	\evalInt{\y}{\x+1}
	\draw[trans] (l\x) edge node[scale=0.8] {$\l$} (l\y);
	\node[inner sep=1pt] at (6, -2*\x - 1) {{\it one-step arena over \l}};
}

\node[dots] at (0, -7) {$\vdots$};
\node[dots] at (4, -7) {$\vdots$};
\node[dots] at (8, -7) {$\vdots$};

\end{tikzpicture}
\caption{The model-checking game defined as the product of the arena $R_w$ representing a given word $w$ and the automaton $\A$ of \cref{fig:one-step-arena}. The one-step arena of \cref{fig:one-step-arena} depicts the steps that correspond to the letter $a$.}
\label{fig:model-cheking-game}
\end{figure}
}

\section{Appendix of \cref{sec:Alternating-behaviour}}
\label{ap:Alternating}

\begin{definition}[A formalisation of \cref{def:LetterGames}]
Let $R_{A,\Sigma}$ be the $\Sigma$-arena consisting of a single position $v$ that belongs to Adam and the set of transitions $X$ of the form $\tuple{v,\letter,v}$ for each letter $\letter\in \Sigma$ (see \cref{fig:letter-giving-game}). The arena $R_{E,\Sigma}$ is the same except that $v$ belongs to Eve. Notice that the products $R_{A,\Sigma}\times \A$ and $R_{E,\Sigma}\times \A$ are both labelled by $\Sigma\times \Gamma$, where $\Sigma$ is the alphabet of $\A$ and $\Gamma$ is $\A$'s labelling, on top of which its acceptance condition is defined. Thus, the winning condition of games defined on these arenas can depend on a sequence of labels of the form $(\letter_i,\gamma_i)_{i\in\Nat}$. Then, \emph{Eve's letter game} is played over  $R_{A,\Sigma}\times \A$, where Eve wins if:
\[\text{$(\letter_i)_{i\in\Nat}\notin L(\A)$ or the sequence $(\gamma_i)_{i\in\Nat}$ satisfies the acceptance condition of $\A$.}\]
Dually, \emph{Adam's letter game} is played over $R_{E,\Sigma}\times \A$, where Adam wins if:
\[\text{$(\letter_i)_{i\in\Nat}\in L(\A)$ or the sequence $(\gamma_i)_{i\in\Nat}$ violates the acceptance condition of $\A$.}\]
\end{definition}

\subsection{Proof of \cref{lem:Cn_family}}\label{ap:Cn}

From \cite{KS15}, there is a family $(\A_n)_{n\in\Nat}$ of GFG-NCWs with $n$ states over a fixed alphabet $\Sigma$, such that every DPW for $L_n = L(\A_n)$ is of size $2^{\Omega(n)}$. 
For every $n\in\Nat$, let $\B_n$ be the dual of $\A_n$, so $\B_n$ is a UBW accepting $\overline{L_n}$. 
We build an APW $\C_n$ over $\Sigma$ of size linear in $n$, 
by setting its initial state to move to the initial state of $\A_n$ when reading the letter $a\in\Sigma$ and to the initial state of $\B_n$ when reading the letter $b\in\Sigma$. 
The acceptance condition of $\C_n$ is a parity condition with priorities $\{0,1,2\}$: accepting transitions of $\A_n$ are assigned priority $0$, and accepting transitions of $\B_n$ priority $2$. Other transitions have priority $1$.

The automaton $C_n$ is represented below:

\begin{center}
\includegraphics[scale=.5]{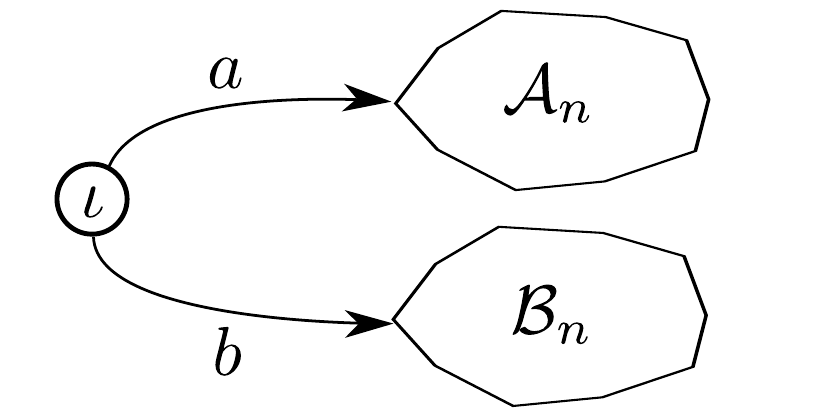}
\end{center}

Observe that $L(\C_n) = a L_n \cup b \overline{L_n}$, and that $\C_n$ is GFG: its initial state has only deterministic transitions, and over the $\A_n$ and $\B_n$ components, the strategy to resolve the nondeterminism and universality, respectively, follows the strategy to resolve the nondeterminism of $\A_n$, which is guaranteed due to $\A_n$'s GFGness.

Consider a GFG UPW $\E_n$ for $L(\C_n)$, and let $q$ be a state to which $\E_n$ moves when reading $a$, according to some strategy that witnesses $\E_n$'s GFGness. Then $\E_n^q$ is a GFG UPW for $L_n$.  Its dual is therefore a GFG NPW $\E'_n$ for $\overline{L_n}$.

Since $\A_n$ is a GFG NPW for $L_n$, by \cite[Thm 4]{BKKS13} we obtain a DPW for $L_n$ of size $|\A_n| |\E'_n|$. By choice of $L_n$, this DPW must be of size $2^{\Omega(n)}$, and since $\A_n$ is of size $n$, it follows that $\E'_n$, and hence $\E_n$, must be of size $2^{\Omega(n)}$. By a symmetric argument, every GFG NPW for $L(\C_n)$ must also be of size $2^{\Omega(n)}$.

\subsection{Proof of \cref{lem:EGFG_PSPACE}}\label{ap:pspace}

 Let $\A$ be an NFA over an alphabet $\Sigma=\{a,b\}$ and $\bar\A$ its dual. We want to check whether $L(\A)=\Sigma^*$.
We build an AFA $\B$, as depicted below, by first making Eve guess the second letter. If her guess is wrong, the automaton proceeds to a rejecting sink state $\bot$. Otherwise, it proceeds to the initial state of $\bar{A}$. The size of $\B$ is linear in the size of $\A$.
\begin{center}
\includegraphics[scale=.5]{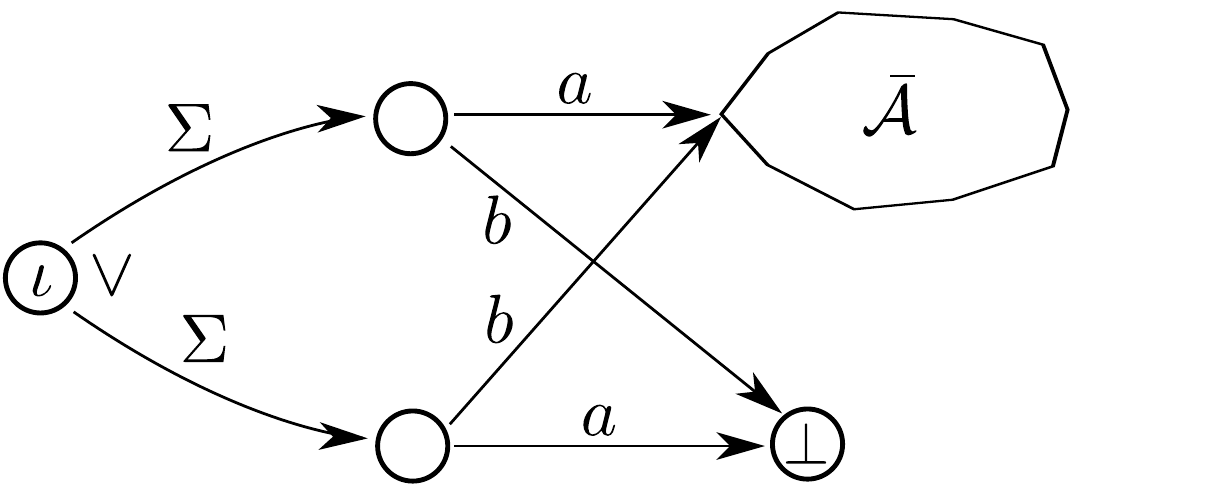}
\end{center}

If $(\bar{A})=\emptyset$, then $L(\B)=\emptyset$, so $\B$ is trivially $\EGFG$.
However, if there is some $u\in\bar{\A}$, then Adam has a winning strategy in Eve's letter game on $\B$. This strategy consists of playing $a$, then playing the letter that brings Eve to $\bot$, and finally playing $u$. The resulting word is in $L(\B)=\Sigma^2L(\A)$, so this witnesses that $\B$ is not $\EGFG$.
We obtain that $L(\A)=\Sigma^* \Leftrightarrow L(\bar{A})=\emptyset\Leftrightarrow\B\text{ is }\EGFG$, which is the wanted reduction.

\section{Appendix of \cref{sec:Determinisation}}
\label{ap:Determinisation}

This section provides the technical details of the determinisation procedure in~\cref{sec:Determinisation}.
We start with some technical analysis of the types of histories needed to win letter games.

\subsection{Good for Games Automata: Required Histories}
\label{ap:sec:GfgDefinitions}

We begin by considering an~\emph{expanded} letter game. This will allow us to use a form of positional determinacy in letter games.

\begin{figure}[h]
\centering
\begin{tikzpicture}
\draw (-1.5, 1) edge[obrace] node{$R_{A,\Sigma}$} (1.5, 1);
\draw (3, 1) edge[obrace] node{$R^\ast_{A,\Sigma}$} (12, 1);

\node[adam, minimum size = 25pt] (v) at (0, -1) {$v$};

\draw[trans] (v) edge[out= 60,in=120, looseness=8] node[above] {$a$} (v);
\draw[trans] (v) edge[out=-30,in= 30, looseness=8] node[right] {$b$} (v);

\node[adam, minimum size = 20pt] (v0) at (7.5, +0.5) {$\epsilon$};
\node[adam, minimum size = 20pt] (v0a) at (6.5, -0.5) {$a$};
\node[adam, minimum size = 20pt] (v0b) at (8.5, -0.5) {$b$};

\node[adam, minimum size = 20pt, scale=0.8] (v0aa) at (6.0, -1.5) {$aa$};
\node[adam, minimum size = 20pt, scale=0.8] (v0ab) at (7.0, -1.5) {$ab$};

\node[adam, minimum size = 20pt, scale=0.8] (v0ba) at (8.0, -1.5) {$ba$};
\node[adam, minimum size = 20pt, scale=0.8] (v0bb) at (9.0, -1.5) {$bb$};

\draw[trans] (v0) edge node[above left]  {$a$} (v0a);
\draw[trans] (v0) edge node[above right] {$b$} (v0b);

\draw[trans] (v0a) edge node[scale=0.8, yshift=2pt, left]  {$a$} (v0aa);
\draw[trans] (v0a) edge node[scale=0.8, yshift=2pt, right] {$b$} (v0ab);

\draw[trans] (v0b) edge node[scale=0.8, yshift=2pt, left]  {$a$} (v0ba);
\draw[trans] (v0b) edge node[scale=0.8, yshift=2pt, right] {$b$} (v0bb);

\newcommand{\dotez}[1]{
\node at (#1, -2.0) {$\vdots$};
}

\dotez{5.5}
\dotez{6.5}
\dotez{7.5}
\dotez{8.5}
\dotez{9.5}
\end{tikzpicture}
\caption{The arenas $R_{A,\Sigma}$ and $R^\ast_{A,\Sigma}$, allowing Adam to choose an arbitrary sequence of letters. We define Eve's letter game for an automaton $\A$ over the product of $R_{A,\Sigma} \times \A$, and her expanded letter game over $R^\ast_{A,\Sigma} \times \A$; 
She wins a play if the word generated by Adam is not in $L(\A)$ or the path generated by her (resolving $\A$'s nondeterminism) and by Adam (resolving $\A$'s universality) satisfies $\A$'s acceptance condition.}
\label{fig:letter-giving-game}
\end{figure}

\Subject{Expanded letter games}
The definition of the letter games (\cref{def:LetterGames}) has the important advantage of being defined over a finite-arena. Yet, as a result, these games generally do not allow for positional determinacy. 

We provide below an expanded variant of the letter game that will have same unfolding as the original one, while being defined over an infinite arena. This will allow Eve to have positional determinacy in these games for Rabin automata.

Let $R^\ast_{A,\Sigma}$ be the $\Sigma$-arena with the set of positions $V=\Sigma^\ast$, all belonging to Adam, and the set of transitions $X$ of the form $\langle w, a, w\cdot \letter\rangle$ for each word $w\in\Sigma^\ast$ and letter $\letter\in\Sigma$ (see \cref{fig:letter-giving-game}). The initial position of this arena is $\epsilon$. The arena $R^\ast_{E,\Sigma}$ is the same, except that all the positions belong to Eve. We define \emph{Eve's expanded letter game} over $R^\ast_{A,\Sigma}\times \A$ and \emph{Adam's expanded letter game} over $R^\ast_{E,\Sigma}\times \A$ with the same winning conditions as in their (non expanded) variants.
The following follows directly from \cref{lem:unfolding}.

\begin{proposition}
For every automaton $\A$, the expanded letter games for $\A$ have the same winners as the (standard) letter games for $\A$.
\end{proposition}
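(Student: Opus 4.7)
The plan is to invoke \cref{lem:unfolding} after checking that the standard and expanded letter games have isomorphic unfoldings. First, I would unpack the arena $R_{A,\Sigma}$: it has a single Adam-position $v$ with a self-loop labelled $\letter$ for every $\letter\in\Sigma$. Its unfolding from $v$ is thus the infinite tree whose nodes are the finite sequences of self-loop choices, i.e.\ finite words $w\in\Sigma^\ast$, where the node $w$ has an $\letter$-labelled child $w\cdot \letter$ for each $\letter\in\Sigma$, and all nodes belong to Adam. But this is precisely the arena $R^\ast_{A,\Sigma}$, which is already a tree and therefore equal to its own unfolding from $\epsilon$. Hence the unfoldings of $R_{A,\Sigma}$ from $v$ and of $R^\ast_{A,\Sigma}$ from $\epsilon$ are isomorphic as rooted $\Sigma$-arenas with position ownership preserved.

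Next, I would lift this isomorphism through the synchronised product with $\A$. The construction of $\pro{R}{\A}$ in \cref{def:GameAutomataProd} is entirely local: the positions, edges, edge-labels in $\Sigma\times\Gamma$, and ownership of positions in $\pro{R}{\A}$ are determined by the positions, edges and ownerships in $R$ together with the fixed data of $\A$. Consequently, an isomorphism of (unfoldings of) two $\Sigma$-arenas induces an isomorphism of (unfoldings of) their products with $\A$ that respects edge labels and position ownership. Applying this to the isomorphism from the previous step, the unfoldings of the arenas $R_{A,\Sigma}\times\A$ and $R^\ast_{A,\Sigma}\times\A$ from their respective initial positions are isomorphic.

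Finally, I would observe that Eve's letter game and Eve's expanded letter game are defined over these two arenas with the same winning condition, namely that the generated sequence $(\letter_i,\gamma_i)_{i\in\Nat}$ satisfies ``$(\letter_i)_i\notin L(\A)$ or $(\gamma_i)_i$ satisfies the acceptance condition of $\A$''. Hence \cref{lem:unfolding} applies and gives that Eve has a winning strategy in one of the games iff she has one in the other. Adam's letter games are handled symmetrically, using the analogous isomorphism between the unfoldings of $R_{E,\Sigma}$ and $R^\ast_{E,\Sigma}$. No serious obstacle is expected: the proof is a bookkeeping verification that unfolding the single position with self-loops yields the tree of finite words, after which \cref{lem:unfolding} does all the work.
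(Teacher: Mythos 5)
Your proof is correct and follows exactly the route the paper intends: the paper's own justification is the single sentence ``this follows directly from \cref{lem:unfolding}'', and your argument—identifying the unfolding of $R_{A,\Sigma}$ with $R^\ast_{A,\Sigma}$, lifting the isomorphism through the (local) synchronised product with $\A$, and then invoking \cref{lem:unfolding} for both players—is precisely the bookkeeping that claim leaves implicit.
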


\Subject{History Requirement}
Although  $R^\ast_{A,\Sigma}$ and $R^\ast_{E,\Sigma}$ are trees, 
the arenas of the expanded letter games are directed acyclic graphs as there can exist two distinct paths from the initial position to a~given position $(w,q)$. Thus, a priori, a winning strategy of a~player of such a game might need some history of a~play. However, as expressed by the following theorem, it is not the case.

\begin{theorem}\label{thm:ExpandedGamePositional}
If $\A$ is a Rabin (or parity) automaton then Eve's expanded letter game is positionally determined for Eve.
\end{theorem}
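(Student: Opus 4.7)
The plan is to reduce the statement to the known positional determinacy of Rabin games for Eve (Proposition~\ref{prop:RabinPositionalDeterminacy}), by augmenting the arena with a deterministic automaton that records whether the word played so far can be extended inside $L(\A)$.

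First, I would fix a deterministic Rabin automaton $\D = (\Sigma, P, p_\iota^{\D}, \delta_\D, \alpha_\D)$ for the complement language $L(\A)^{\mathrm{c}}$; such an automaton exists by the classical determinisation results for $\omega$\=/regular languages, applied to the (Streett) dual of $\A$. Let its Rabin condition consist of pairs $\{(B_i^\D, G_i^\D)\}_{i=1}^\ell$ and let the Rabin condition of $\A$ consist of pairs $\{(B_j^\A, G_j^\A)\}_{j=1}^k$.

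Next, I would consider the synchronised product $G' := (R^{\ast}_{A,\Sigma} \times \A) \times \D$, in the sense explained after Definition~\ref{def:GameAutomataProd}, treating $\D$ as a (deterministic) automaton over $\Sigma \times \Gamma_\A$ that ignores the $\Gamma_\A$\=/component of each label. Because $\D$ is deterministic, in every position $(w, q, p)$ of $G'$ the third component equals $\delta_\D^{\ast}(p_\iota^\D, w)$ and is entirely determined by $w$. This yields a canonical bijection $\Phi$ between the positions of $G'$ and the positions of $R^{\ast}_{A,\Sigma} \times \A$ that respects ownership of positions and commutes with transitions. I would equip $G'$ with the Rabin condition whose pairs are the disjoint union $\{(B_i^\D, G_i^\D)\}_i \cup \{(B_j^\A, G_j^\A)\}_j$, applied to the appropriate label components.

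A play of $G'$ produces a word $w \in \Sigma^\omega$, a path $\rho$ in $\A$ and the unique run $\rho'$ of $\D$ on $w$; by construction Eve wins this Rabin game iff $\rho'$ is accepting in $\D$ (equivalently $w \notin L(\A)$) or $\rho$ satisfies $\A$'s Rabin condition. Transported along $\Phi$ this is exactly the winning condition of Eve's expanded letter game. Proposition~\ref{prop:RabinPositionalDeterminacy} then furnishes a positional winning strategy $\sigma'$ for Eve in $G'$ from every position she wins. Transporting $\sigma'$ through $\Phi^{-1}$ gives a strategy $\sigma$ in $R^{\ast}_{A,\Sigma} \times \A$ whose value at each Eve\=/position $(w, q)$ or $(w, q, \letter, \psi)$ depends only on that position (the removed $\D$\=/component being a function of $w$), and that is winning whenever the corresponding position in $G'$ was winning for Eve, as required.

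The most delicate point will be the routine but careful bookkeeping around the intermediate $\noLab$\=/labelled positions used to resolve the Boolean formulas inside $\A$: the product with $\D$ should only advance $\D$ on genuine $\Sigma$\=/labelled transitions, so $\D$ stays still during these sub\=/formula moves and the bijection $\Phi$ respects them; one then has to check that a positional strategy in the enlarged arena genuinely descends to a positional strategy in the original one. Once this is laid out, the theorem follows immediately from Rabin positional determinacy, and the parity case is subsumed since parity is a special case of Rabin.
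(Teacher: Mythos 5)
Your proposal is correct and follows essentially the same route as the paper: determinise the complement of $L(\A)$, observe that the disjunction of that automaton's acceptance condition with $\A$'s Rabin condition is again a Rabin condition, invoke positional determinacy of Rabin games for Eve, and use the fact that the expanded arena is a tree so the deterministic automaton's state is a function of the position $w$. The only cosmetic difference is that the paper re-labels the transitions of $R^{\ast}_{A,\Sigma}$ with the priorities of the complement automaton directly (avoiding the explicit product and the collapsing bijection $\Phi$), which sidesteps the bookkeeping you flag at the end.
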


\begin{proof}
We will show that the winning condition of this game can be represented as a Rabin condition and invoke \cref{prop:RabinPositionalDeterminacy}. Let $\D'$ be a deterministic parity automaton recognising the complement of the language $L(\A)$, for a Rabin automaton $\A$ over the alphabet $\Sigma$. Let $\Omega\colon Q_{\D'}\to\Nat$ be the priority assignment of $\D'$ (without loss of generality we can assume that the states of $\D'$ bear priorities). 

Consider the arena $R'^\ast_{A,\Sigma}$ that is derived from the arena $R^\ast_{A,\Sigma}$ by adding to transitions priorities according to the deterministic of runs of $D'$, that is, by changing the labelling of every transition $\langle w, \letter, w\cdot \letter\rangle$ to $\langle w, (\Omega_{\D'}(q),\letter), w\cdot \letter\rangle$, where $\letter\in\Sigma$ and $q$ is the state of $\D'$ reached after reading the word $w$ from the initial state of $\D'$.

Consider the product $R'^\ast_{A,\Sigma}\times \A$, in which for $\A's$ transitions we ignore these additional labels. The labels of that product are now of the form $(\ell,\letter,\gamma)$, where $\ell\in\Nat$ is a priority of $\D'$, $\letter\in\Sigma$, and $\gamma\in\Gamma$ is a label of $\A$. Notice that when one forgets about the first coordinate of the label, the game is equal to $R^\ast_{A,\Sigma}\times \A$. Moreover, given a~sequence of labels $(\ell_i,\letter_i,\gamma_i)_{i\in\N}$, by the choice of $\D'$, we know that $(a_i)_{i\in\N}\notin L(\A)$ if and only if the sequence $(\ell_i)_{i\in\N}$ satisfies the parity condition.

Define the game $\G$ over $R'^\ast_{A,\Sigma}\times \A$, in which Eve wins a play labelled by $(\ell_i,\letter_i,\gamma_i)_{i\in\Nat}$ if
\[\text{$(\ell_i)_{i\in\Nat}$ satisfies the parity condition of $\D'$ or $(\gamma_i)_{i\in\Nat}$ satisfies the Rabin condition of $\A$.}\]
Notice that both disjuncts above can be written as Rabin conditions and therefore $\G$ is positionally determined for Eve. Moreover, the choice of $\D'$ guarantees that the new winning condition is equivalent to Eve's condition in her expanded letter game on $\A$---the same plays are winning for Eve in $\G$ and her expanded letter game on $\A$. Since the structure of the game is also preserved, it means that Eve's expanded letter game on $\A$ is positionally determined for Eve.
\end{proof}

\begin{remark}
Dually, Adam's expanded letter game for a Streett automaton is positionally determined for Adam.
\end{remark}

As a consequence  of \cref{thm:ExpandedGamePositional}, for alternating $\EGFG$ Rabin automata, a strategy for Eve to resolve the nondeterminism may ignore the history of the play, and only consider the history of the word read, as is the case for nondeterministic GFG automata.

We will now argue that Eve's positional strategy $\stratE$ in the expanded letter game on an alternating automaton $\A$ can be represented as a function $\stratE'\colon \Sigma^+\rightarrow \boxes_\A$ that assigns to each word $w\letter$ a box $\beta_{w\letter}\in\boxes_{\tuple{\A,\letter}}$. Indeed, let $(w\cdot\letter)\in\Sigma^+$ and let $V_{w\letter}$ be the set of positions of $R^\ast_{A,\Sigma}\times \A$ of the form $(w,q)$, $(w\cdot \letter, q,\letter,b)$, or $(w\cdot \letter, q)$ for $q\in Q$ and $b\in \BP(Q)$. Observe that the partial arena of $R^\ast_{A,\Sigma}\times \A$ restricted to $V_{w\letter}$ is isomorphic to the one-step arena $R_\letter\times \A$. Thus, $\stratE$ provides a positional strategy over this arena, which by \cref{prop:choice-to-strat} can be encoded as a box $\beta_{w\letter}$. More formally, let $\beta_{w\letter}$ contain $(q,a,q')$, if there is a~play consistent with $\stratE$ that visits both the positions $(w,q)$ and then $(w\letter,q')$.

Then, in the next lemma we show that if $\stratE$ is also winning, then the sequences of boxes $\beta_{w\letter}$ only has accepting paths.

\begin{definition}
\label{def:univ-acc-box}
Consider an automaton $\A$ with states $Q$ and initial state $\iota$, and an infinite word $u=\beta_0,\beta_1,\ldots\in(\boxes_\A)^\omega$. We say that a sequence of transitions $\rho=(q_i,\letter_i,q_{i+1})_{i\in\Nat}$ is a \emph{path of $u$} if $q_0=\iota$ and for every $i\in\Nat$, we have $(q_i,\letter_i q_{i+1}) \in \beta_i$. The word $u$ is \emph{universally accepting for $\A$} if each of its paths satisfies the acceptance condition of $\A$.
\end{definition}

\begin{lemma}
\label{lem:strat-to-win-boxes}
Given an alternating $\EGFG$ Rabin automaton $\A$, there is a positional strategy $\stratE$ in her expanded letter game on $\A$ such that
for every word $w\in L(\A)$ the sequence of boxes $u=\beta_0,\beta_1,\ldots\in(\boxes_\A)^\omega$ defined as $\beta_i=\stratE(w\restr_{i+1})$ is universally accepting for $\A$.
\end{lemma}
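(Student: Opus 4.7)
The plan is to take a positional winning strategy for Eve in her expanded letter game on $\A$ (which exists by \cref{thm:ExpandedGamePositional}, since $\A$ is a Rabin automaton and is $\EGFG$), and show that its restriction to each one-step arena, reinterpreted as a box, has exactly the desired property.

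First, fix a positional winning strategy $\stratE$ for Eve in the expanded letter game $R^\ast_{A,\Sigma}\times\A$. As already observed in the paragraphs preceding the lemma, for each nonempty word $w\letter\in\Sigma^+$, the sub-arena of $R^\ast_{A,\Sigma}\times \A$ restricted to the positions of the form $(w,q)$, $(w\letter,q,\letter,b)$ and $(w\letter,q)$ is isomorphic to the one-step arena $R_\letter\times\A$. The restriction of $\stratE$ to this sub-arena is then a positional strategy of Eve in the one-step arena and, via the bijection of \cref{prop:choice-to-strat}, it corresponds to a unique box $\beta_{w\letter}\in\boxes_{\tuple{\A,\letter}}$. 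Setting $\stratE'(w\letter):=\beta_{w\letter}$ gives the required function $\stratE'\colon \Sigma^+\to\boxes_\A$.

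Now fix an infinite word $w=w_0w_1\ldots\in L(\A)$ and consider the sequence $u=\beta_0,\beta_1,\ldots$ with $\beta_i=\stratE'(w\restr_{i+1})$. Take any path $\rho=(q_0,w_0,q_1),(q_1,w_1,q_2),\ldots$ of $u$ in the sense of \cref{def:univ-acc-box}, so $q_0=\iota$ and $(q_i,w_i,q_{i+1})\in\beta_i$ for every $i$. By the definition of the box $\beta_i$, each triple $(q_i,w_i,q_{i+1})$ is witnessed by a (finite) play in the one-step arena $R_{w_i}\times\A$ starting at $q_i$ and ending at $q_{i+1}$ that is consistent with the restriction of $\stratE$ at $w\restr_i$. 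Concatenating these partial plays produces an infinite play $\pi$ in $R^\ast_{A,\Sigma}\times\A$ in which Adam chooses the letters $w_0,w_1,\ldots$ and which is consistent with $\stratE$.

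Since $\stratE$ is winning for Eve and the word produced by Adam along $\pi$ is $w\in L(\A)$, the winning condition of the expanded letter game forces the sequence of $\A$-labels along $\pi$ to satisfy the Rabin acceptance condition. But this sequence of $\A$-labels is exactly the one induced by $\rho$, so $\rho$ is accepting in $\A$. As $\rho$ was an arbitrary path of $u$, the sequence $u$ is universally accepting, which concludes the proof. The only non-trivial ingredient is the positional determinacy of \cref{thm:ExpandedGamePositional}, which is what makes it legitimate to summarise Eve's behaviour at each step as a single box depending only on the prefix read so far.
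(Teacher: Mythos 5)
Your proof is correct and follows essentially the same route as the paper's: both take a positional winning strategy in the expanded letter game, observe that membership of $(q_i,w_i,q_{i+1})$ in the box $\beta_i$ is witnessed by a play fragment in the corresponding one-step sub-arena, and use positionality to glue these fragments into a single infinite play consistent with $\stratE$, so that the winning condition (with $w\in L(\A)$) forces the path to be accepting. Your version merely spells out the gluing argument more explicitly than the paper's terse one-paragraph proof.
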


\begin{proof}
Consider words $w$ and $u$ as above. Let $\rho=(q_i,\letter_i,q_{i+1})_{i\in\Nat}$ be a path of $u$. Since the strategy $\stratE$ is positional, the definition of $\stratE(w\restr_{i+1})$ implies that there exists a~single play of the expanded letter game that visits all the positions of the form $(w\restr_i,q_i)$ for $i\in\N$. Since $w\in L(\A)$, the winning condition of the expanded letter game guaranteees that the path $\rho$ must be accepting.
\end{proof}

Observe that the above arguments do not hold for alternating GFG Streett automata: Since Streett games are not positionally determined for Eve, Eve's expanded letter game for a Streett automaton is not positionally determined for Eve (an analogous of \cref{thm:ExpandedGamePositional} does not hold). Furthermore, we provide in \cref{fig:StreetGFGHistory} an example of an alternating GFG Streett automaton, in which Eve cannot resolve her nondeterminism only according to the history of the word read.

\begin{proposition}
Consider an $\EGFG$ alternating Streett automaton $\A$ with transition conditions in DNF. Then Eve might not have a strategy~$\stratE\colon \Sigma^+\rightarrow \boxes_\A$ satisfying Lemma~\ref{lem:strat-to-win-boxes}.
\end{proposition}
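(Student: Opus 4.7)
The plan is to prove this proposition by exhibiting an explicit counterexample, which will be the automaton depicted in \cref{fig:StreetGFGHistory}. The key conceptual observation is that a word-only box strategy $\stratE'\colon \Sigma^+\to\boxes_\A$ corresponds precisely to a positional strategy for Eve in her expanded letter game on $\A$: given the box $\stratE'(w\letter)$, Eve's move at position $(w,q)$ is determined independently of how the play reached $(w,q)$. It therefore suffices to construct an $\EGFG$ alternating Streett automaton whose expanded letter game is won by Eve but not positionally. The construction leverages the classical non-positional determinacy of Streett games with multiple pairs, where the winning player must alternate between satisfying different pairs.

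Concretely, I would design $\A$ so that Adam's universal (conjunctive) choice at an early step sends the play into one of two branches which subsequently funnel back into a common sub-structure. Because the state reached in this common region does not by itself record which branch Adam chose, a positional/box strategy must commit to the same disjunctive resolution regardless of Adam's earlier move; by contrast, a strategy with memory can observe Adam's conjunctive choice (which is recorded in the play history) and react accordingly. The Streett condition is then tailored so that the ``correct'' disjunctive resolution for Eve in the merged region depends on which branch Adam selected, making the two requirements incompatible for any positional response. Eve then still wins her letter game with finite memory by mirroring Adam's branch choice, which establishes that $\A$ is $\EGFG$.

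The proof then has three steps: first, to exhibit Eve's finite\=/memory winning strategy, which only needs to remember Adam's initial conjunctive choice, establishing $\EGFG$-ness; second, to take an arbitrary $\stratE'\colon \Sigma^+\to\boxes_\A$ and exhibit a word $w\in L(\A)$ together with a path of $(\stratE'(w\restr_{i+1}))_{i\in\Nat}$ that violates some Streett pair---Adam obtains this path by selecting the conjunct opposite to the branch favoured by Eve's committed box; third, to conclude by \cref{def:univ-acc-box} that $\stratE'$ cannot be universally accepting on $w$, contradicting the property required by Lemma~\ref{lem:strat-to-win-boxes}. The main technical obstacle lies in arranging the Streett pairs so that the acceptance status genuinely differs between the two branches in the long run: since Streett depends only on $\inf(\rho)$, the branch information must be preserved through transitions that recur infinitely often, which I would achieve by making the common sub-structure loop back through states that witness the original branching, so that the relevant Streett pairs can refer to infinitely often visited transitions distinguishing the two branches.
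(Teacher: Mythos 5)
Your overall plan coincides with the paper's: the proposition is proved there by exhibiting one concrete counterexample (the ASW of \cref{fig:StreetGFGHistory}, over a \emph{singleton} alphabet), in which Adam's conjunction at $q_0$ sends the play to $q_1$ or $q_2$, the branches merge at $q_3$, Eve then resolves a disjunction towards $q_4$ or $q_5$, and the Streett pairs link Adam's transitions $t_1,t_2$ to Eve's transitions $t_4,t_5$. Eve wins with the history-dependent strategy ``take $t_4$ iff the last state was $q_1$'', whereas a map $\Sigma^+\to\boxes_\A$ only sees the length of the word and cannot react to Adam's conjunctive choice. Your identification of word-only box strategies with positional strategies in the expanded letter game is exactly the reduction the paper relies on, and the singleton alphabet is the cleanest way to guarantee that the word prefix carries no information.

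The genuine gap is in your second step: you never fix the Streett pairs, and the Adam-response you sketch --- ``select the conjunct opposite to the branch favoured by Eve's committed box'' --- does not in general yield a rejecting path, because a box strategy may vary with time. If the pairs only encode one-directional implications of the form ``$t_i$ infinitely often $\Rightarrow$ $t_j$ infinitely often'', Eve can \emph{hedge}: the blind strategy that alternates between $t_4$ and $t_5$ from round to round makes both of Eve's transitions occur infinitely often on every path of the box sequence, so all such pairs are satisfied whatever Adam does, and your ``opposite conjunct'' reply produces an accepting path rather than a rejecting one. To defeat hedging, Eve's transitions must also appear on the ``bad'' side of some pair, i.e., the condition has to punish Eve for taking a disjunct that is not matched by Adam's corresponding conjunct (a bidirectional linkage such as ``$\mathrm{Inf}\,t_1$ iff $\mathrm{Inf}\,t_4$'' and ``$\mathrm{Inf}\,t_2$ iff $\mathrm{Inf}\,t_5$'', each direction contributing a Streett pair); and against such a condition Adam's winning reply to an arbitrary box strategy is not ``opposite at every round'' but, e.g., an eventually constant choice determined by which of $t_4,t_5$ Eve plays infinitely often. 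Since designing the pairs so that ``the acceptance status genuinely differs between the two branches in the long run'' is precisely what you flag as the main technical obstacle and leave open, the proposal does not yet constitute a proof; the remaining ingredients ($\EGFG$-ness via a finite-memory mirroring strategy, and the equivalence between box strategies satisfying Lemma~\ref{lem:strat-to-win-boxes} and positional winning strategies in the expanded letter game) are sound.
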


\begin{proof}
Consider the ASW depicted in \cref{fig:StreetGFGHistory}. It is $\EGFG$, as witnessed by the strategy that chooses the transition $t_4$ in $q_3$ if the last visited state was $q_1$ and $t_5$ otherwise. Yet, there is no strategy that only remembers the word read so far, as this only gives the length of the word, and cannot help in determining whether the path visited $q_1$ or $q_2$.
\end{proof}

\begin{figure}
\centering
\begin{tikzpicture}

   \node (b) at (0, 0.75) {};
	\node[state, scale=0.9] (q0) at (0, 0) {$q_0$};
	\node[state, scale=0.9] (q1) at ($(q0)+(-1,-1)$) {$q_1$};
	\node[state, scale=0.9] (q2) at ($(q0)+(+1,-1)$) {$q_2$};
	\node[state, scale=0.9] (q3) at ($(q0)+(0,-1.75)$) {$q_3$};
	\node[state, scale=0.9] (q4) at ($(q3)+(-1,-1)$) {$q_4$};
    \node[state, scale=0.9] (q5) at ($(q3)+(+1,-1)$) {$q_5$};

\node [bool] (b0) at ($(q0)+(0,-0.5)$) {$\land$};
\node [bool] (b3) at ($(q3)+(0,-0.5)$) {$\lor$};

\draw[trans] (b) edge (q0);
\draw[trans] (b0) edge node[scale=0.8, below] {$t_{1}$} (q1);
\draw[trans] (b0) edge node[scale=0.8, below] {$t_{2}$} (q2);
\draw[trans] (q1) edge  (q3);
\draw[trans] (q2) edge  (q3);
\draw[trans] (b3) edge node[scale=0.8, below] {$t_4$} (q4);
\draw[trans] (b3) edge node[scale=0.8, below] {$t_5$} (q5);
\draw[trans] (q4) edge[bend left=70] (q0);
\draw[trans] (q5) edge[bend right=70] (q0);

\node (AC) at (6,-0.5)  [align=left]{Aceptance condition:\\(Finitely often $t_1$ or Infinitely often $t_4$) and\\(Finitely often $t_2$ or Infinitely often $t_5$)};

\end{tikzpicture}
\caption{A GFG ASW over  a singleton alphabet, for which Eve's $\EGFG$ strategy cannot only remember the prefix of the word read so far, but also some history about the visited states.}
\label{fig:StreetGFGHistory}
\end{figure}

Interestingly, the question of whether Eve can resolve the nondeterminism in a class of alternating GFG automata with only the knowledge of the word read so far does not tightly correspond to whether the acceptance condition of this class is memoryless. For example, it does hold for the generalised-B\"uchi condition, though it is not memoryless.


\subsection{Alternation Removal in GFG Rabin Automata}
\label{app:alt-rem-gfg-rabin}

This section presents the proof of the following theorem:

\thmexpgfgdealt*


\lempropofd*

\begin{proof}
Notice that it is easy to construct a nondeterministic Streett (resp.\ parity) automaton $\S$ over the alphabet $\boxes_\A$ that recognises the complement of the set of universally-accepting words for $\A$---it is enough to guess a path that is not accepting, and have the acceptance condition that is the dual of $\A$'s condition. 
Formally, for an alternating Rabin (resp.\ parity) automaton $\A=\tuple{\Sigma, Q, \iota, \delta, \alpha}$, we define the nondeterministic Street (resp.\ parity) automaton $\S=\tuple{\boxes_\A, Q, \iota, \delta_\S, \overline{\alpha}}$, where $\overline{\alpha}$ is the dual of $\alpha$ and $\delta_\S$ is defined as follows.
For every states $q,q'\in Q$ and box $\beta\in\boxes_\A$, we have $q'\in\delta_\S(q,\beta)$ iff $\tuple{q,q'}\in\beta$.

Now, one can translate $\S$ to an equivalent deterministic parity automaton $\B'$ with $2^{O(nk \log nk)}$ states \cite{Pit07} (resp.\ $2^{O(n \log n)}$ states~\cite{CZ12,SV14}), and then complement the acceptance condition of $\B'$, getting the required automaton $\B$.

Since nondeterministic coB\"uchi automata can be determinised into deterministic coB\"uchi automata, if $\A$ is B\"uchi, so is $\B$. In general, the parity index of the automaton $\D$ is linear in the number of transitions~of~$\A$.
\end{proof}

We now proceed to the construction of the automaton $\boxA$ of \cref{thm:exp-gfg-dealt}. It is the same as the automaton $\B$ of \cref{lem:prop-of-d}, except that the alphabet is $\Sigma$ and the transition function is defined as follows: For every state $p$ of $\boxA$ and $\letter\in\Sigma$, we have $\delta_{\boxA}(p,\letter):= \cup_{\beta\in\boxes_{\tuple{\A,\letter}}} \delta_{\B}(p,\beta)$.

In other words, the automaton $\boxA$ reads a~letter $\letter$, nondeterministically guesses a~box $\beta\in\boxes_{\A,\letter}$, and follows the transition of $\B$ over $\beta$. Thus, the runs of $\boxA$ over a~word $w\in\Sigma^\omega$ are in bijection between sequences of boxes $(\beta_i)_{i\in\Nat}$ such that $\beta_i\in\boxes_{\A,w_i}$ for $i\in\N$.


Fix an infinite word $w\in\Sigma^\omega$. Our aim is to prove that $w\in L(\A)\Leftrightarrow w\in L(\boxA)$. 

\lempositionalstrategiesandruns*

\begin{proof}
Consider a run of $\boxA$ over $w$, and observe that it corresponds to a sequence of boxes $\beta_0,\ldots$. Notice that each box $\beta_i$ corresponds to Eve's choices in $\A$ over $w_i$, and therefore provides a positional strategy for Eve in the one-step arena $R_{w_i}\times \A$. The sequence of these choices provides a positional strategy for Eve in $R_w\times \A$. 

Dually, given a positional strategy for Eve in $R_w\times \A$, one can extract a sequence of strategies for Eve in the one-step arenas $R_{w_i}\times \A$, and each of them corresponds to a box $\beta_i$. \cref{prop:choice-to-strat} shows that each path in $\beta_0,\ldots$ corresponds to a play consistent with the constructed strategy and vice versa: each play gives rise to a path.

Now, a run is accepting if and only if the sequence of boxes is universally accepting, which means exactly that all the plays consistent with the corresponding strategy are winning.
\end{proof}

We now show that the automaton $\boxA$ is also GFG.

\NotNeeded{
\begin{remark}
The above alternation-removal procedure does not work for an alternating Streett automaton $\A$, even if ignoring the issue of GFGness: Since Streett games are not positionally determined for Eve, the acceptance game of $\A$ over a word $w$ is not positionally determined for Eve.
That is, a winning strategy for Eve in this game might make different choices at a position $\tuple{q,w_i}$, depending on the path leading to this position. Such a strategy does not correspond to a sequence of boxes, and therefore the analogous of \cref{lem:PositionalStrategiesAndRuns} does not hold.
\end{remark}
}

\lemgfgpreservation*

\begin{proof}
Let $\stratE$ be a positional winning strategy for Eve in her expanded letter game for $\A$ (over the arena $R^\ast_{A,\Sigma}\times \A$). The proof is based on the construction of the function $\stratE'\colon \Sigma^+\rightarrow \boxes_\A$, see the paragraph before \cref{def:univ-acc-box}.

Consider the following way of resolving the nondeterminism of $\boxA$: after reading $w\in\Sigma^\ast$, when the next letter $\letter\in\Sigma$ is provided, the automaton moves to the state $\delta_\B(p,\beta_{w\letter})$ where $\beta_{w\letter}=\stratE'(w\letter)$. Consider an infinite word $w\in L(\A)$ and let $\beta_0,\ldots$ be the sequence of boxes used to construct the run of $\boxA$ over $w$. Lemma~\ref{lem:strat-to-win-boxes} implies that this sequence is universally accepting and therefore, the constructed run of $\B$ must also be accepting.
\end{proof}

\subsection{Single-Exponential Determinisation of Alternating Parity GFG Automata}
\label{ap:ssec:exp-det-of-alt}

The aim of this section is to prove the following determinisation theorem.

\thmdet*


\NotNeeded{
Observe that \cref{thm:exp-gfg-dealt} can be applied both to the language $L(\A)$ and its complement. Therefore, we can fix a~pair of nondeterministic GFG parity automata $\boxA$ and $\boxAco$ that recognise $L(\A)$ and $L(\A)^\mathrm{c}$ respectively and are both of size exponential in $\A$.

We now use the automata $\A$, $\boxA$, and $\boxAco$ to construct two auxiliary games.

The game $G'(\A)$ proceeds from a configuration consisting of a pair $(p,q)$ of states from $\boxAco$ and $\A$ respectively, starting from their initial states, as follows:
\begin{itemize}
\item Adam chooses a letter $\letter\in \Sigma_{\A}$;
\item Eve chooses a transition $\trans{p}{\letter}{p'}$ in $\boxAco$;
\item Adam and Eve play on the one-step arena over $\letter$ from $q$ to a new state $q'$.
\end{itemize}

A play in $G'$ consists of a run $\rho$ in $\boxAco$ and a path $\rho'$ in $\A$. It is winning for Eve if either $\rho$ is accepting in $\boxAco$ (in which case $w\notin L(\A)$, or $\rho'$ is accepting in $\A$.

A more formal definition is given in the appendix.

If $\A$ is $\EGFG$ and $\boxAco$ is GFG, Eve has a winning strategy in $G'$ consisting of building a run in $\boxAco$ using her GFG strategy in $\boxAco$ and a path in $\A$ using her $\EGFG$ strategy in $\A$. Thsi guarantees that if $w\in L(\A)$ then the path in $\A$ is accepting, and otherwise the run in $\boxAco$ is accepting.

Furthermore, the winning condition of $G'$ is a disjunction of parity conditions, that is, a Rabin condition; Eve therefore also has a positional winning strategy.
}

First consider the synchronised product $R_{A,\Sigma}\times \boxAco$, which is a~game with labels of the form $\Sigma\times \Gamma_{\boxAco}$, where $\Gamma_{\boxAco}$ is the parity condition of $\boxAco$. Now, we can treat the automaton $\A$ as an~automaton over the alphabet $\Sigma\times \Gamma_{\boxAco}$ that just ignores the second component of the given letter. Thus, we can define a~game $G(\A)= \big(R_{A,\Sigma}\times \boxAco\big)\times \A$.

Notice that $G(\A)$ is naturally divided into rounds, between two consecutive positions of the form $(v,p,q)$, where $v$ is the unique position of $R_{A,\Sigma}$, $p$ is a~state of $\boxAco$ and $q$ is a~state of $\A$. Such a~round, starting in $(v,p,q)$ consists of first Adam choosing a~letter $\letter$; then Eve resolving nondeterminism of $\boxAco$ from $p$ over $\letter$; and then both players playing the game corresponding to the transition condition $\delta(q,\letter)$ of $\A$.

Let the winning condition of $G(\A)$ say that either the sequence of transitions of $\boxAco$ is accepting or the sequence of transitions of $\A$ is accepting. Since $\A$ is $\EGFG$ and $\boxAco$ is GFG, we know that Eve has a~winning strategy in $G(\A)$: she just plays her GFG strategies in both automata and is guaranteed to win whether the word produced by Adam is in $L(\A)$ or $L(\boxAco)$.

As the winning condition of $G(\A)$ is a disjunction of two Rabin conditions, Eve has a positional winning strategy. Fix such a strategy $\stratE$.

\NotNeeded{
\begin{remark}
There is some magic here, as both the GFG strategies of Eve in $\A$ and in $\boxAco$ may require exponential memory. Yet, when she needs to satisfy the disjunction of the two conditions, no more memory is needed. In a sense, the states of $\A$ provide the memory for $\boxAco$ and the states of $\boxAco$ provide the memory for $\A$.  (cf.\  
\cite[Theorem 4]{BKKS13}).
\end{remark}

The second auxiliary game $G'(A)$ is similar, except that Adam is given control of $\boxA$ and Eve is in charge of letters. That is, in $G'(A)$, the configurations are again pairs $(p,q)$ of states from $\boxA$ and $\A$ respectively, and the game proceeds, starting from the initial states, as follows:
\begin{itemize}
\item Eve chooses a letter $\letter\in \Sigma$;
\item Adam chooses a transition $\trans{p}{\letter}{p'}$ in $\boxA$;
\item Adam and Eve play on the one-step arena over $\letter$ from $q$ to a new state $q'$.
\end{itemize}

This time Adam wins a play consisting of a run $\rho$ of $\boxA$ and a path $\rho'$ in $\A$ if either the path is rejecting of $\A$ is rejecting  or the run of $\boxA$ is accepting.

Accordingly, if $\A$ is GFG, then he can win by using the GFG strategy in $\boxA$ and the $\AGFG$ strategy in $\A$. Then if $w\in L(\A)$, the run in $\boxA$ is accepting, and otherwise the path of $\A$ is rejecting.
As before, he also has a positional winning strategy in $G'$.

}

Now do the same with $\A$ and $\boxA$ for Adam: define $G'(\A)$ as $\big(R_{E,\Sigma}\times \overline{\boxA}\big)\times \A$, where $\overline{\boxA}$ is the automaton $\boxA$ where the transitions are turned from nondeterministic to universal, i.e,\ we replace $\lor$ with $\land$.

Again, in a~round of $G'(\A)$ from a~position $(v,p,q)$: Eve plays a letter $\letter$; Adam resolves nondeterminism of $\boxA$ (i.e., the universality in its dual); then they both resolve the choices in $\A$. Let Adam win $G$ if either the play of $\A$ is rejecting or the run of $\boxA$ is accepting. Again we can ensure that Adam has a~winning strategy in $G'(\A)$, because both automata are GFG: he uses the GFG strategy of $\boxA$ and the $\AGFG$ strategy over $\A$. If the word given by Eve belongs to $L(\A)$ then Adam wins by producing an accepting run of $\boxA$, otherwise he wins by refuting an accepting run of $\A$. Let $\stratA$ be his positional winning strategy in that game.

We are now ready to build the deterministic automaton from a GFG APW $\A$, using positional winning strategies $\stratE$ and $\stratA$ for Eve and Adam in $G'(\A)$ and $G(\A)$, respectively.
 
Let $\D$ be the automaton with states of the form $(q,p_1,p_2)$, with $q$ a~state of $\A$, $p$ a~state of $\boxA$ and $p'$ a state of $\boxAco$. 
A transition of $D$ over $\letter$ moves to $(q',p_1',p_2')$ such that $((q,p_1),(q',p_1))$ is consistent with $\stratA$ and $((q,p_2'),(q',p_2'))$ is consistent with $\stratE$. In other words, 
when reading a~letter $\letter$ in such a~state, the following computations are performed:
\begin{enumerate}
\item We simulate the choices made by $\stratE$ in $G'(\A)$ upon obtaining $\letter$ from Adam. This way we know how to resolve nondeterminism of $\boxAco$ and what to do with disjunctions inside~$\A$.
\item We simulate the choices made by $\stratA$ in $G(\A)$ upon obtaining $\letter$ from Eve. This way we know how to resolve nondeterminism in $\boxA$ and what to do with conjunctions of $\A$.
\item In the end we proceed to a new state of $\A$ and resolved nondeterminism of both $\boxA$ and $\boxAco$.
\end{enumerate}

The acceptance condition of $\D$ is inherited from $\A$.

\lemdeteq*

\begin{proof}
Take a word $w\in\Sigma^\omega$. First assume that $w\in L(\A)$. Eve cannot win a play of the game $G$ with the letters played in $R_{A,\Sigma}$ coming from $w$ using by the first disjunct of her winning condition, since $L(\boxAco)=L(\bar\A)$. Thus, all the plays over $w$ consistent with her winning strategy $\stratE$ in $\G'$ must guarantee that the constructed path of $\A$ is accepting. Thus, the run of the automaton $\D$ over $w$ is accepting.

Now assume that $w\notin L(\A)$. Dually, no play of the game $G'$ with the letters coming from $w$ can produce an accepting run of $\boxA$ over $w$. Thus, the strategy $\stratA$ guarantees that the sequence of visited states of $\A$ is rejecting. Thus, the run of $\D$ over $w$ must be rejecting.
\end{proof}


\section{Appendix of \cref{sec:deciding}}
\label{ap:deciding}

\subsection{Proof of \cref{thm:exp-time-gfg-alt}}
\label{app:exptime-gfg-alt}

Our aim is to provide an \exptime{} algorithm for deciding if a given alternating parity automaton is GFG.

Recall the construction of the two nondeterministic parity automata $\boxA$ and $\boxAco$ for $L(\A)$ and $L(\A)^c$ respectively, as defined in \cref{ssec:exp-det-of-alt}. We will use these automata to design a~game characterising the fact that $\A$ is both $\EGFG$ and $\AGFG$, i.e,\ $\A$ is just GFG.

Recall that the automata $\boxA$ and $\boxAco$ have exponential number of states in the number of states of $\A$. However, due to \cref{lem:prop-of-d} their parity index is linear in the number of transitions of $\A$. Consider the game $G''=\big(R_{A,\Sigma}\times \boxA\big)\times \boxAco$, i.e,\ the game where Adam plays a~letter and Eve replies with two boxes, one of $\A$ and the other of $\overline{\A}$. Let the winning condition of that game for Eve say that either of the runs of $\boxA$ or $\boxAco$ must be accepting.

\begin{lemma}
Eve has a~winning strategy in $G''$ if and only if $\A$ is GFG.
\end{lemma}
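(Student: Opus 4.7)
The plan is to prove both implications directly, using the properties of the nondeterministic automata $\boxA$ and $\boxAco$ established in~\cref{sec:Determinisation}.

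For the forward direction, suppose $\A$ is GFG. Then $\A$ is $\EGFG$, so by~\cref{lem:GFG-preservation} the nondeterministic parity automaton $\boxA$ is GFG, and similarly, since $\A$ is $\AGFG$, the automaton $\bar\A$ is $\EGFG$, so $\boxAco$ is GFG. Eve plays in $G''$ by running her two GFG strategies in $\boxA$ and $\boxAco$ in parallel (her memory in $G''$ just stores the memory required by each of these two strategies independently). For every infinite word $w$ that Adam produces, either $w\in L(\A)=L(\boxA)$, in which case her strategy in $\boxA$ yields an accepting run of $\boxA$, or $w\in L(\bar\A)=L(\boxAco)$, in which case her strategy in $\boxAco$ yields an accepting run of $\boxAco$. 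In either case the winning condition of $G''$ is satisfied.

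For the backward direction, suppose Eve has a winning strategy $\stratE$ in $G''$. Projecting $\stratE$ on the $\boxA$-component (and treating the $\boxAco$-component as auxiliary memory), we obtain a strategy $\stratE_1$ for Eve in the letter game of $\boxA$; analogously we obtain $\stratE_2$ for the letter game of $\boxAco$. The key observation is that $L(\boxA)$ and $L(\boxAco)$ partition $\Sigma^\omega$: if Adam plays a word $w\in L(\boxA)$, then $w\notin L(\boxAco)$, so the run of $\boxAco$ produced by $\stratE$ is rejecting, and the winning condition of $G''$ forces the run of $\boxA$ to be accepting. Hence $\stratE_1$ wins the letter game of $\boxA$, that is, $\boxA$ is GFG. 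Symmetrically, $\boxAco$ is GFG. Applying~\cref{lem:GFG-B-to-A} to $\boxA$ yields that $\A$ is $\EGFG$, and applying it to $\boxAco$ yields that $\bar\A$ is $\EGFG$, i.e., $\A$ is $\AGFG$. Thus $\A$ is GFG.

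I do not see any genuine obstacle here: both directions reduce to a bookkeeping argument that plays the GFG strategies of $\boxA$ and $\boxAco$ in parallel, combined with the complementarity of their languages. The only subtlety is in the backward direction, where one must be careful that Eve's strategies $\stratE_1$ and $\stratE_2$ are allowed to use the ``virtual'' run in the other automaton as internal memory; this is fine because GFG strategies are unrestricted in their memory.
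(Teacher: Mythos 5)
Your proof is correct and follows essentially the same route as the paper's: the forward direction combines the two GFG strategies of $\boxA$ and $\boxAco$ in parallel using the fact that every word lies in exactly one of their languages, and the backward direction projects Eve's winning strategy in $G''$ onto its two components and invokes the disjointness of $L(\boxA)$ and $L(\boxAco)$ together with \cref{lem:GFG-B-to-A}. Your remark that the projected strategies may keep the other automaton's run as internal memory makes explicit a point the paper leaves implicit, but the argument is the same.
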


\begin{proof}
Clearly if $\A$ is GFG then both $\boxA$ and $\boxAco$ are GFG as nondeterministic automata. Therefore, one can use strategies witnessing their GFGness to construct a single strategy for Eve in $G''$. This strategy must be winning, because each word proposed by Adam either belongs to $L(\boxA)=L(\A)$ or to $L(\boxAco)=L(\A)^c$.

Now assume that Eve has a~winning strategy in $G''$. This strategy consists of two components: one is a strategy in $R_{A,\Sigma}\times \boxA$ and the other in $R_{A,\Sigma}\times \boxAco$. By the fact that the languages of $\boxA$ and $\boxAco$ are disjoint, the above components are in fact winning strategies in the letter games for $\boxA$ and $\boxAco$ respectively. Thus, by \cref{lem:GFG-B-to-A} we know that $\A$ is both $\EGFG$ and $\AGFG$.
\end{proof}

What remains is to show how to solve the game $G''$ in \exptime{}. Let $n$ be the size of the automaton $\A$. Our aim is to turn it into a~parity game of size exponential in $n$ but with a~number of priorities polynomial in $n$. Then, by invoking for instance \cite{CJKLS17}, we know that such a~game can be solved in \exptime.

\begin{lemma}
\label{lem:aut-for-disjunction}
Let $\Gamma=\{0,\ldots,N\}$ be a~set of priorities. Then, there exists a~deterministic parity automaton of size exponential in $N$, with a~number of priorities polynomial in $N$ that recognises the language $L$ of words $w\in(\Gamma\times\Gamma)^\omega$ that satisfy the parity condition on at least one coordinate.
\end{lemma}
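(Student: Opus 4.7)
The plan is to recognise the language $L$ as a disjunction of two parity conditions -- one read on each of the two coordinates of $\Gamma\times\Gamma$ -- and hence as a Rabin language on a one-state automaton. A classical Rabin-to-parity translation will then turn this one-state Rabin automaton into a deterministic parity automaton with a factorial blow-up in states and a linear blow-up in priorities, giving the bounds promised by the lemma.

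Concretely, first I would encode the parity condition over priorities in $\Gamma=\{0,\ldots,N\}$ as a Rabin condition with $\lceil (N+1)/2\rceil$ pairs: for each even priority $p\in\Gamma$, form the pair $(B_p,G_p)$ where $G_p$ is the set of letters whose relevant coordinate equals $p$ and $B_p$ the set whose relevant coordinate exceeds $p$. Then an $\omega$-sequence on that coordinate satisfies the parity condition iff it satisfies this Rabin condition. Doing this for each of the two coordinates and taking the disjoint union of the two families of pairs produces a Rabin condition over $\Gamma\times\Gamma$ with $O(N)$ pairs that defines $L$, realised trivially by a one-state deterministic Rabin automaton over the alphabet $\Gamma\times\Gamma$.

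The second step is to apply the standard index (or latest) appearance record construction, which turns a deterministic Rabin automaton with $n$ states and $k$ pairs into an equivalent deterministic parity automaton with $O(n\cdot k!)$ states and $O(k)$ priorities: the state stores a permutation of the $k$ pair indices updated according to recency of visits to the $G$ and $B$ sets, and the output priority is read off the position of the most recently touched pair. With $n=1$ and $k=O(N)$, this yields $2^{O(N\log N)}$ states and $O(N)$ priorities, matching the quantitative claim of the lemma. The construction is entirely standard once the Rabin viewpoint is adopted; the only conceptual step is the initial observation that a disjunction of parity conditions over disjoint coordinates is captured by a Rabin acceptance on the product alphabet, so I do not anticipate any genuine obstacle beyond bookkeeping.
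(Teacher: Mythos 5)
Your proof is correct, but it takes a genuinely different route from the paper. The paper's argument builds a \emph{nondeterministic} B\"uchi automaton for $L$ (guess a coordinate and the eventual maximal even priority on it, then verify) with polynomially many states, and then invokes Piterman-style determinisation to obtain a deterministic parity automaton of exponential size with polynomially many priorities. You instead observe that $L$ is already a Rabin condition over the product alphabet --- each coordinate's parity condition is a Rabin condition with $O(N)$ pairs, and a disjunction of Rabin conditions is again Rabin --- so $L$ is accepted by a one-state deterministic Rabin automaton, to which the index appearance record construction applies, giving $2^{O(N\log N)}$ states and $O(N)$ priorities. Both encodings are standard and both satisfy the lemma's (deliberately loose) quantitative requirements; your route avoids determinisation of a nondeterministic automaton altogether, stays within deterministic transformations, and in fact yields slightly better bounds ($O(N)$ rather than $O(N^2)$ priorities, and a smaller state space). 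The paper's route has the minor advantage of reusing a determinisation procedure already cited elsewhere in the paper, but nothing in the surrounding argument depends on which of the two constructions is used.
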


\begin{proof}
It is a~rather standard construction. One possibility is to design a~nondeterministic B\"uchi automaton for $L$ with $N^2$ states. Then, the standard determinisation procedure \cite{Pit07} applied to this automaton gives a~deterministic parity automaton as in the statement.
\end{proof}

Therefore, we conclude the proof of \cref{thm:exp-time-gfg-alt} by taking a~product of the game $G''$ with the automaton from \cref{lem:aut-for-disjunction} and then solving the resulting parity game.

\subsection{Proof of \cref{pro:decide-G2-polynomial}}
\label{app:G2-polynomial}

\proDecideGPoly*

\begin{proof}
We start by constructing a deterministic parity automaton $\D$ of a fixed size that recognises whether a word over the alphabet $\{p_1, p_2, \ldots, p_d\}^3$, describing the priorities of the three paths $\rho_E$, $\rho_A$, $\rho_A'$, satisfies the condition ``either $\rho_E$ satisfies the parity condition or neither $\rho_A$ nor $\rho_A'$ satisfies the parity condition.'' 

For constructing $\D$, first take NBWs $\B_1$ and $\B_2$ that recognise that $\rho_A$ and $\rho_A'$ do not satisfy the parity condition, respectively. Both $\B_1$ and $\B_2$ are of size $O(d)$ and proceed by first waiting in an initial state with rejecting transitions until they guess the maximal odd priority to be seen infinitely often and when no higher priorities are seen; then their guess is rewarded with a B\"uchi transition while higher priorities lead to a rejecting sink.

Then construct an NBW $\B_3$ of size $O(d^2)$ that recognises ``neither $\rho_A$ nor $\rho_A'$ satisfies the parity condition'', as the B\"uchi intersection of $\B_1$ and $\B_2$.

Afterwards, construct an NBW $\B_4$  of size $O(d^2)$ for ``either $\rho_E$ satisfies the parity condition or neither $\rho_A$ nor $\rho_A'$ satisfies the parity condition'', as the disjunction of $\B_3$ and a nondeterministic B\"uchi automaton that recognises whether $\rho_E$ satisfies parity. 

Eventually, determinise $\B_4$ to get the NPW $\D$ of size $O(2^{d^2\log d})$ and $O(d^2)$ priorities \cite{Pit07}.

Solving $\Gt(\A)$ then reduces to solving the parity game $\G$ that results from the product between the arena of $\Gt(\A)$ and the automaton $\D$. 
Notice that $\G$ is of size $O(n^3 |\Sigma| 2^{d^2\log d})$ with $O(d^2)$ priorities. 
Jurdzi\'nski and Lazi\'c's quasi-polynomial algorithm for solving parity games operates in time $O(km^{1+o(1)})$, for $m$, $k$ the size of the game and the number of priorities respectively, when $k$ is in $o(\log m)$. We are in this case, so the overall time complexity of solving $\Gt(\A)$ is in $O(d^2(n^{3}|\Sigma| 2^{d^2\log d})^{1+o(1)})$.
\end{proof}

\subsection{Proof of \cref{lem:GFGtoG2}}
\label{app:GFG-to-G2}

\lemGfgToG*

\begin{proof}
Assume $\A$ is GFG, with strategies $\stratE$ and $\stratA$ witnessing respectively that the nondeterminism and universality of $\A$ are GFG. Eve's strategy $\stratE'$ is to play with her token as if she was playing her letter game with strategy $\stratE$ and to play with Adam's tokens as Adam would play in two disjoint copies of his letter game with strategy $\stratA$.
In other words, Eve resolves the nondeterminism for her token using the strategy witnessing that the nondeterminism in $\A$ is GFG and she resolves the universality for Adam's token according to the strategy witnessing that the universality of $\A$ is GFG.

We claim that this strategy is winning. Indeed, in a play $(\rho_E,\rho_A,\rho_A')$ that agrees with $\stratE'$, if the word is in $L(\A)$, then $\stratE$ guarantees $\rho_E$ is accepting while if the word is not accepting, then $\stratA$ guarantees that both $\rho_A$ and $\rho_A'$ are rejecting. 

Furthermore, if $\A$ if GFG, then so in $\bar \A$, and therefore Eve also wins $\Gt(\bar \A)$.
\end{proof}

\subsection{Proof of Proposition~\ref{pro:alt-to-nd}}
\label{app:G2-to-G2box}

\proAltToNd*
The rest of this section is devoted to a proof of this proposition. The proof relies on an additional intermediate game game $\Gpos(\A)$ in which Eve can win positionally.

First, let $\GM$ be a~deterministic parity automaton over the language $(\boxes_\A)^2$ that recognises the sequences of pairs of boxes where at least one sequence is universally accepting. The automaton $\GM$ allows us to turn the condition ``at least one of the sequences of boxes produced by Adam is universally accepting'' into a parity condition. Let $\delta_\GM\colon Q_\GM\times (\boxes_\A)^2\to Q_\GM$ be the transition function of $\GM$.

Now, the game $\Gpos(\A)$ is very similar to the game $\Gt(\boxA)$ except two differences. First, instead of the first copy of $\boxA$ controlled by Eve, we plug a copy of the automaton $\A$, where Eve controls nondeterminism and Adam controls universality. Second, we use~$\GM$ instead of the respective part of the winning condition of $\Gt(\boxA)$. Let the set of configurations of $\Gpos(\A)$ consist of $Q_\A\times Q_\GM$ and the initial configuration be $(\iota_\A,\iota_\GM)$.

In a turn starting in a configuration $(q,p)\in Q_\A\times Q_\GM$, the following choices are done:
\begin{itemize}
\item Adam chooses a letter $\letter\in \Sigma$;
\item Eve and Adam resolve the whole transition of $\A$ from $q$ reaching a~state $q'$;
\item Adam chooses two boxes $b_1$ and $b_2$ over $\letter$.
\end{itemize}
After such a turn, the new configuration is $(q',\delta_\GM\big(p,(b_1,b_2)\big)$.

%

%

A play of the above game provides is a pair of paths $(\rho_E,\rho_A)$ in $\A$ and $\GM$ respectively, and Eve wins if either $\rho_E$ is accepting or $\rho_A$ is rejecting.
Since the winning condition for Eve is a disjunction of two parity conditions, i.e., a Rabin condition, from~\Cref{prop:RabinPositionalDeterminacy} we obtain the following claim.

\begin{claim}
\label{cl:Gpos-positional}
If Eve wins $\Gpos(\A)$ then she has a positional winning strategy.
\end{claim}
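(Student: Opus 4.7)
The plan is to exhibit Eve's winning condition in $\Gpos(\A)$ as a Rabin condition on the edges of the game arena and then apply positional determinacy of Rabin games (\cref{prop:RabinPositionalDeterminacy}). Concretely, a play of $\Gpos(\A)$ produces two infinite sequences: a path $\rho_E$ through $\A$, resolved jointly by the two players, and a path $\rho_A$ through the deterministic automaton $\GM$, determined by Adam's box-pair choices. Eve wins the play iff $\rho_E$ satisfies $\A$'s parity condition or $\rho_A$ fails $\GM$'s parity condition. Each of these is a parity condition on the edges of the arena of $\Gpos(\A)$: the first reads off the $\A$-priority of each simulated transition in $\A$, and the second is the dual of $\GM$'s parity condition read off the $\GM$-edges.

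Next I would invoke the standard fact that any parity condition can be recoded as a Rabin condition --- one pair $(B_c,G_c)$ per even priority $c$, where $G_c$ is the set of edges of priority $c$ and $B_c$ the set of edges of priority strictly greater than $c$ --- and that the disjunction of two Rabin conditions is again Rabin, by concatenating the two lists of pairs. Applying this to the two parity conditions above yields a single Rabin condition on the edges of $\Gpos(\A)$ equivalent to Eve's winning condition.

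At this point the proof is essentially finished: the arena of $\Gpos(\A)$ is a finite graph equipped with a Rabin condition for Eve, so \cref{prop:RabinPositionalDeterminacy} immediately gives a positional winning strategy for Eve from every winning position. There is no real obstacle here; the only technical content is absorbed in \cref{prop:RabinPositionalDeterminacy} (Klarlund's theorem on positional determinacy of Rabin games for the existential player) and in the construction of $\GM$, while the argument above is pure bookkeeping to match the format of the winning condition to the hypothesis of that theorem.
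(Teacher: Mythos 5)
Your proposal is correct and matches the paper's argument: the winning condition of $\Gpos(\A)$ is a disjunction of two parity conditions (the $\A$-parity condition and the complement of $\GM$'s parity condition), hence a Rabin condition, and \cref{prop:RabinPositionalDeterminacy} then yields a positional winning strategy for Eve. The extra detail you give on recoding parity as Rabin and closing Rabin under disjunction is exactly the bookkeeping the paper leaves implicit.
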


The following two lemmata show how $\Gpos(\A)$ is related to both $\Gt(\A)$ and $\Gt(\boxA)$.

\begin{lemma}
\label{lem:gt-to-gpos}
If Eve wins $\Gt(A)$ then she also wins $\Gpos(\A)$.
\end{lemma}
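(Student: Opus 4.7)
The plan is to fix a positional winning strategy $\sigma$ for Eve in $\Gt(\A)$ and, using an internal simulation of $\Gt(\A)$, convert it into a (memoryful) winning strategy $\sigma'$ for Eve in $\Gpos(\A)$.

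First, Eve's winning condition in $\Gt(\A)$ is a Boolean combination of three parity conditions on $\rho_E$, $\rho_A$ and $\rho_A'$, hence a Rabin condition, so~\cref{prop:RabinPositionalDeterminacy} yields a positional winning strategy $\sigma$ for Eve.

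The strategy $\sigma'$ then maintains a virtual configuration $(p,q_1,q_2)\in Q^3$ of $\Gt(\A)$, initialised to $(\iota,\iota,\iota)$, with the invariant that Eve's current state $q$ in $\Gpos(\A)$ coincides with~$p$. When Adam plays letter $\letter$, Eve resolves her disjunctions in $\Gpos(\A)$'s $\A$ one-step game exactly as $\sigma$ would at the corresponding simulated $\Gt$-position, and Adam's conjunction moves in $\Gpos$ are replayed verbatim in the simulation, so the produced transition $p\to p'=q'$ is common to both. To update $q_1$ and $q_2$, I use the key observation that a box over $\letter$ is, by definition, a positional strategy of Eve in the one-step arena $R_\letter\times \A$, i.e.\ an assignment of choices at the disjunctions of $\delta_\A(-,\letter)$; since in $\Gt(\A)$'s $\bar{\A}$ one-step games Adam resolves precisely these disjunctions (as the remark following the definition of $\Gt$ makes explicit), the same object $b_i$ serves as a positional Adam-strategy there. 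Eve therefore interprets the boxes $b_1,b_2$ supplied by Adam in $\Gpos$ as Adam's positional strategies in the two simulated $\bar{\A}$ one-step games, resolves her own conjunction choices via $\sigma$, and reads off the resulting transitions $q_1\to q_1'$ and $q_2\to q_2'$ to update her memory.

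To conclude, I verify that $\sigma'$ wins. By construction, any play of $\Gpos(\A)$ consistent with $\sigma'$ induces a play of $\Gt(\A)$ consistent with $\sigma$ on the same word and with the same $\A$-path $\rho_E$. Since $\sigma$ wins $\Gt(\A)$, either $\rho_E$ is accepting (so Eve also wins the first disjunct in $\Gpos(\A)$) or both simulated paths $\rho_A$, $\rho_A'$ are rejecting. In the latter case, each simulated transition $q_1^{(i)}\to q_1^{(i+1)}$ arises from a one-step play consistent with Adam's positional strategy $b_1^{(i)}$, so the defining property of boxes gives $(q_1^{(i)},\letter_i,q_1^{(i+1)})\in b_1^{(i)}$; thus $\rho_A$ starting at $\iota$ is a non-accepting path of the box sequence $(b_1^{(i)})_i$, so that sequence is not universally accepting, and symmetrically for $(b_2^{(i)})_i$. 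Therefore the $\GM$-path of the $\Gpos$ play is rejecting in $\GM$, and Eve wins the second disjunct in $\Gpos(\A)$. The only delicate point of the argument is the identification of a box with a positional Adam-strategy in $\Gt(\A)$'s $\bar{\A}$ games, which depends on carefully tracking who resolves which Boolean connective between $\A$ and $\bar{\A}$; once that is in place, both the simulation and the translation of the winning condition are routine.
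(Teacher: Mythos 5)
Your construction is essentially the one the paper uses: simulate a play of $\Gt(\A)$ inside $\Gpos(\A)$, reinterpret each box supplied by Adam as a positional Adam\-/strategy in the corresponding one\-/step game on $\bar\A$ (via the bijection between boxes and positional one\-/step strategies), let Eve's $\Gt$\-/strategy answer those moves, and transfer the winning condition through the ``path of a box sequence'' correspondence. That part of the argument, including the bookkeeping of who resolves which connective and the final case analysis against $\GM$, is correct.

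However, your opening step is wrong: the winning condition of $\Gt(\A)$, namely ``$\rho_E$ accepting, or both $\rho_A$ and $\rho_A'$ rejecting'', is \emph{not} a Rabin condition for general parity (or even coB\"uchi) automata. The conjunction of two parity conditions is a Streett, not a Rabin, condition, and one can exhibit two plays losing for Eve whose union of infinitely\-/occurring transitions is winning for her (e.g.\ $\rho_A$ accepting in one, $\rho_A'$ accepting in the other, with the unions tipping both to rejecting), which violates the closure property characterising positional determinacy for Eve. So you cannot invoke \cref{prop:RabinPositionalDeterminacy} to get a positional $\sigma$ in $\Gt(\A)$ (it does apply to $\Gpos(\A)$, whose condition is a disjunction of two parity conditions --- that is precisely why the paper introduces $\Gpos$). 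As written, your strategy $\sigma'$, whose memory is only the current simulated configuration $(p,q_1,q_2)$, is therefore not well defined. The repair is immediate and is what the paper does: store the entire simulated history of the $\Gt(\A)$\-/play in the memory of $\sigma'$ and feed it to an arbitrary (memoryful) winning strategy $\sigma$. With that change, the rest of your argument goes through unchanged.
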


\begin{proof}
Assume $\stratE$ is a winning strategy for Eve in $\Gt(\A)$. During a play of $\Gt(\A)$ the players construct three paths $(\rho_E,\rho_A,\rho_A')$ of the automaton $\A$. We call them the \emph{paths} of that play. Similarly, during a~play of $\Gpos(\A)$ the players construct three sequences $(\rho,\pi,\pi')$, where $\rho$ is a~path of $\A$, while $\pi$ and $\pi'$ are two sequences of boxes of $\A$. We will say that a play of $\Gt(\A)$ with paths $(\rho_E,\rho_A,\rho_A')$ is \emph{consistent} with a~play of $\Gpos(\A)$ with sequences $(\rho,\pi,\pi')$ if $\rho=\rho_E$ and $\rho$ is a~path in $\pi$, and $\rho'$ is a~path in $\pi'$, see \cref{def:path-in-boxes}.

We can now define Eve's strategy $\stratE'$ in $\Gpos(\A)$ as follows. During a play of $\Gpos(\A)$ with sequences $(\rho,\pi,\pi')$, Eve simulates a~play of $\Gt(\A)$ with paths $(\rho,\rho_A,\rho_A')$ that are consistent with $(\rho,\pi,\pi')$. We will now show how Eve can preserve this invariant. Consider a turn of $\Gpos(\A)$ starting in $(q,p)$ ($q$ is the last state of the path $\rho$) and assume that the simulated play of $\Gt(\A)$ ended in a~configuration $(q,q_1,q_2)$.

The turn of $\Gpos(\A)$ starts with Adam choosing a letter $\letter\in\Sigma$. Assume that in the simulated play of $\Gt(\A)$ Adam has also chosen $\letter$. Based on that, the strategy $\stratE$ knows how to resolve disjunctions in $\delta_\A(q,\letter)$ against any choices made by Adam. Assume that $\stratE'$ plays in exactly the same was in the copy of $\A$ in $\Gpos(\A)$. This gives a~transition $(q,\letter,q')$ that is taken in both games. Now, in $\Gpos(\A)$ Adam provides two boxes $b_1$ and $b_2$ of $\boxA$ --- both represent positional strategies of Eve in one-step game over $\letter$. Let Eve assume that in the simulated play of $\Gt(\A)$, Adam was resolving nondeterminism of the last two copies of $\A$ in the way given by $b_1$ and $b_2$ respectively. The strategy $\stratE$ gives a way to resolve universality in these copies, which leads to two transitions $(q_1,\letter,q_1')$ and $(q_2,\letter,q_2')$. Extend the paths $(\rho,\rho_A,\rho_A')$ with the transitions $(q,\letter,q')$, $(q_1,\letter,q_1')$, and $(q_2,\letter,q_2')$ defined above. This way, we managed to preserve the invariant. This concludes the definition of the strategy $\stratE'$.

We will now prove that the strategy $\stratE'$ is winning in $\Gpos(\A)$. Consider an~infinite play in $\Gpos(\A)$ that agrees with $\stratE'$ and its sequences are $(\rho,\pi,\pi')$. By the invariant, there must exist a play of $\Gt(\A)$ that is consistent with $\stratE$ and has paths $(\rho,\rho_A,\rho_A')$ that are consistent with $(\rho,\pi,\pi')$. Now assume that in the considered play of $\Gpos(\A)$, the sequence of states of $\GM$ is parity accepting (otherwise Eve wins the play). We need to prove that $\rho$ is accepting in that case. But the construction of $\GM$ guarantees that in that case at least one of the sequences of boxes $\pi$ or $\pi'$ is universally accepting. W.l.o.g. assume that $\pi$ is universally accepting. Since $\rho_A$ is a~path of $\pi$, it implies that $\rho_A$ is accepting in $\A$. But in that case, the winning condition of $\Gt(\A)$ guarantees that $\rho$ must be accepting in $\A$. We conclude that if Eve wins $\Gt(\A)$, she wins $\Gpos(\A)$.
\end{proof}

Under the assumption of \cref{pro:alt-to-nd} that Eve wins $\Gt(\A)$, \cref{lem:gt-to-gpos}  together with \cref{cl:Gpos-positional} imply that Eve has a~positional winning strategy in $\Gpos(\A)$. The following lemma concludes the proof of \cref{pro:alt-to-nd}.

\begin{lemma}
\label{lem:pos-gpos-to-gt-box}
If Eve has a~positional winning strategy in $\Gpos(\A)$ then Eve wins $\Gt(\boxA)$.
\end{lemma}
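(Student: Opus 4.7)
The plan is to build Eve's winning strategy in $\Gt(\boxA)$ from the positional winning strategy $\sigma$ in $\Gpos(\A)$ by a simulation argument in which Eve keeps as memory only the current state of $\GM$.

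The first step is to extract, for each pair $(p,\letter)\in Q_\GM\times \Sigma$, a box $\beta_\sigma(p,\letter)\in\boxes_{\A,\letter}$. In $\Gpos(\A)$, after Adam picks a letter $\letter$ from a configuration at $\GM$-state $p$, the players play inside an embedded copy of the one-step arena $R_\letter\times \A$. Since $\sigma$ is positional and the positions of this embedded sub-arena are tagged with $p$ and $\letter$, $\sigma$'s restriction to it defines a single positional strategy of Eve on $R_\letter\times \A$, uniform over all possible starting $\A$-states. By \cref{prop:choice-to-strat}, this strategy corresponds to the desired box $\beta_\sigma(p,\letter)$.

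With these boxes in hand, Eve plays in $\Gt(\boxA)$ as follows: she stores as memory the current $\GM$-state $p$ (initialised to $\iota_\GM$); when Adam plays a letter $\letter$, she takes in $\boxA$ the transition corresponding to the box $\beta_\sigma(p,\letter)$; and once Adam reveals his two boxes $\beta_1,\beta_2$ (his two transitions in $\boxA$), she updates $p\leftarrow \delta_\GM\bigl(p,(\beta_1,\beta_2)\bigr)$.

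Correctness is then proved as follows. Consider a play consistent with Eve's strategy, yielding letters $(\letter_i)$, Eve's box sequence $(\beta_E^i)=(\beta_\sigma(p_i,\letter_i))$, Adam's box sequences $(\beta_1^i),(\beta_2^i)$, and the induced $\GM$-run $(p_i)$. If $(p_i)$ is rejecting in $\GM$, then by construction of $\GM$ neither of Adam's box sequences is universally accepting, and by \cref{lem:PositionalStrategiesAndRuns} neither of Adam's $\boxA$-runs is accepting, so Eve wins via the second disjunct. Otherwise $(p_i)$ is accepting, and I would show that $(\beta_E^i)$ is universally accepting by fixing an arbitrary path $\rho=(q_i,\letter_i,q_{i+1})_{i\in\Nat}$ through $(\beta_E^i)$ with $q_0=\iota_\A$ and exhibiting a matching play of $\Gpos(\A)$ in which Adam plays the same letters and the same boxes, while in each embedded one-step game he resolves his choices so as to force the transition $(q_i,\letter_i,q_{i+1})$ — this is possible precisely because this transition lies in $\beta_\sigma(p_i,\letter_i)$, meaning it is witnessed by some Adam play against $\sigma$. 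The resulting play is consistent with $\sigma$, its $\GM$-path is the accepting $(p_i)$, and its $\A$-path is $\rho$; since $\sigma$ is winning, $\rho$ must be accepting in $\A$, and by \cref{lem:PositionalStrategiesAndRuns} Eve's run in $\boxA$ is accepting. The only genuine subtlety lies in the first step — establishing that a single box $\beta_\sigma(p,\letter)$ captures $\sigma$'s behaviour uniformly over all possible starting $\A$-states of the round — after which the simulation goes through cleanly.
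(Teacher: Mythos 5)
Your proposal is correct and follows essentially the same route as the paper's proof: you extract, for each $\GM$-state $p$ and letter $\letter$, a single box by taking the union over all starting $\A$-states of the positional strategy's choices in the embedded one-step arena, play that box in $\boxA$ while tracking the $\GM$-state in memory, and verify correctness by lifting each path of Eve's box sequence to a play of $\Gpos(\A)$ consistent with $\sigma$ against the same letters and Adam-boxes. Your treatment of the two subtleties (uniformity of the extracted box over starting states, and Adam's ability to force any transition of the box in the one-step game) matches, and indeed slightly elaborates, the paper's argument.
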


\begin{proof}
Let $\stratE$ be a~positional winning strategy of Eve in $\Gpos(\A)$. We assume that it is defined in all the positions of $\Gpos(\A)$, not only those accessible from the initial position. We now construct a winning strategy $\stratE'$ for Eve in $\Gt(\boxA)$. The structure of $\stratE'$ is obtained directly from $\stratE$ by just storing the state of $\GM$ in the memory of $\stratE'$.

More formally, let $\stratE'$ store in its memory a~state $p$ of $\GM$. At each turn starting in a configuration $(q,q_1,q_2)$ of $\Gt(\boxA)$,
\begin{itemize}
\item Adam chooses a letter $\letter$;
\item Eve chooses a box $b$ over $\letter$ such that each $\trans{q}{\letter}{q'}\in b$ is consistent with $\stratE$ from the position $(q,p)$ for each $q\in Q$ --- since $\stratE$ is positional, for each $q\in Q_\A$ it provides a $q$\=/box and their union $b_1$ is a box;
\item Adam chooses boxes $b_1$ and $b_2$ respectively over $a$ for his two tokens;
\item Eve updates her memory state to $\delta_\GM(p,(b_1,b_2))$.
\end{itemize}

Consider a play that agrees with the above defined strategy $\stratE'$ and provides three sequences of boxes $(\pi_E,\pi_A,\pi_A')$. For every path $\rho$ in $\pi_E$, there is a play of $\Gpos(\A)$ that is consistent with $\stratE$ and gives sequences $(\rho,\pi_A,\pi_A')$. If either $\pi_A$ or $\pi_A'$ is universally accepting, also $\rho$ must be accepting. Then $\pi_E$ is also universally accepting. This means that $\stratE'$ is a winning strategy in $\Gt(\boxA)$.
\end{proof}

This concludes the proof of \cref{pro:alt-to-nd}.

\section{Appendix of \cref{sec:G2-coBuchi}}
\label{ap:G2-coBuchi}

In this section we show that a nondeterministic coB\"uchi automaton $\A$ is GFG if and only if Eve wins $G_2(\A)$. This constitutes a step towards \cref{con:G2-to-GFG}.

Let us fix an NCW $\A=(\Sigma,Q,\iota,\delta,\alpha)$ with a transition function $\delta\colon Q\times\Sigma\to 2^Q$ and a set of accepting transitions $\alpha\subseteq Q\times\Sigma\times Q$ .

We use the following standard extensions of the transition function $\delta$:
\begin{itemize}
\item $\overline{\delta}:2^Q\times\Sigma\to2^Q$ is defined by $\overline{\delta}(X,a)=\bigcup_{p\in X} \delta(p,a)$.
\item $\delta^*:\Sigma^*\to 2^Q$ is defined by induction: $\delta^*(\varepsilon)=\{\iota\}$, and $\delta^*(ua)=\overline{\delta}(\delta^*(u),a)$.
\end{itemize}

First, let us recall results on $G_2(\A)$ from \cite{BK18} that are valid for coB\"uchi automata.

\begin{lemma}[\cite{BK18}]\label{lem:GFGtoG2} 
If $\A$ is GFG then Eve wins $G_2(\A)$. 
\end{lemma}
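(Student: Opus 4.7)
The plan is to show that Eve can win $G_2(\A)$ using only her GFG strategy on her own token, while completely ignoring Adam's two tokens. Since $\A$ is nondeterministic and coB\"uchi, the game $G_2(\A)$ reduces to: Adam plays a letter $\letter$, Eve chooses a successor for her token $p$, and Adam chooses successors for his two tokens $q_1,q_2$ (there are no conjunctions to resolve since $\A$ is nondeterministic).

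First I would fix a GFG strategy $\stratE^{\mathrm{GFG}}$ for Eve in her letter game on $\A$. Recall that such a strategy, given any word $w$ produced on the fly, resolves the nondeterministic choices so as to produce an accepting run whenever $w\in L(\A)$. I define Eve's strategy $\stratE$ in $G_2(\A)$ as follows: at each round, upon Adam's letter $\letter$, Eve consults $\stratE^{\mathrm{GFG}}$ on the letter history read so far to choose the successor of her own token $p$; she makes no choices regarding $q_1$ and $q_2$, since in a nondeterministic automaton the transitions from Adam's tokens are resolved entirely by Adam.

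Next I would verify that $\stratE$ is winning. Consider any play $(\rho_E,\rho_A,\rho_A')$ consistent with $\stratE$, and let $w$ be the word produced by Adam's letter choices. Split on whether $w\in L(\A)$: if $w\in L(\A)$, then the guarantee of $\stratE^{\mathrm{GFG}}$ ensures that $\rho_E$ is an accepting run of $\A$ on $w$, so Eve wins by the first disjunct of the winning condition. If $w\notin L(\A)$, then no run of the nondeterministic automaton $\A$ on $w$ is accepting; in particular both $\rho_A$ and $\rho_A'$ (which are runs of $\A$ on $w$) must be rejecting, and Eve wins by the second disjunct.

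There is no real obstacle here, since the work of GFGness is entirely absorbed by the strategy $\stratE^{\mathrm{GFG}}$ and the nondeterministic nature of $\A$ makes Adam's tokens trivially dealt with. The only point to be careful about is to note that $\stratE^{\mathrm{GFG}}$ only needs the history of letters (not of Adam's token moves), which is available to Eve in $G_2(\A)$; this matches the standard definition of a GFG resolver.
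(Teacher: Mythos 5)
Your proof is correct and follows essentially the same approach as the paper: Eve simply runs her GFG resolver on her own token (using only the letter history, which is all a GFG strategy needs) and wins by the case split on whether $w\in L(\A)$, since for a nondeterministic automaton a word outside the language admits no accepting run, so Adam's two runs are automatically rejecting. The paper states and proves this for the more general alternating two\-/token game, where Eve additionally uses the $\AGFG$ strategy to resolve the universality on Adam's tokens; in the nondeterministic case that part is vacuous and the argument reduces exactly to yours.
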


\begin{lemma}[\cite{BK18}]\label{lem:ktokens}
If Eve wins $G_2(\A)$ then for all $k\in\mathbb N$, Eve wins $G_k(\A)$.
\end{lemma}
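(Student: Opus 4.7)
The plan is to prove this by induction on $k$, showing that winning $G_k(\A)$ implies winning $G_{k+1}(\A)$. The base case $k=2$ is the hypothesis, and the case $k\leq 1$ is trivial (with fewer tokens Adam has strictly less power). The core idea is a simulation argument: Eve composes her winning strategy in $G_2$ with her (inductively given) winning strategy in $G_k$ to handle $k+1$ Adam tokens simultaneously.

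More precisely, assume Eve has a winning strategy $\sigma_k$ in $G_k(\A)$ and a winning strategy $\sigma_2$ in $G_2(\A)$. In $G_{k+1}(\A)$ Adam controls $k+1$ tokens $T_1,\dots,T_{k+1}$ and Eve controls one token~$E$. Eve's strategy $\sigma_{k+1}$ will maintain in her memory an auxiliary \emph{intermediate} token $I$ and run two simulations in parallel: an internal $G_k$-play in which her real token $E$ plays Eve's role according to $\sigma_k$ against the $k$ Adam tokens $T_1,\dots,T_{k-1},I$; and an internal $G_2$-play in which $I$ plays Eve's role according to $\sigma_2$ against the two Adam tokens $T_k,T_{k+1}$. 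Upon receiving a letter $a$ from Adam in $G_{k+1}$, Eve first uses $\sigma_2$ to move $I$, which supplies the missing transition needed by $\sigma_k$, and then uses $\sigma_k$ to move her real token $E$; Adam's real moves of $T_1,\dots,T_{k+1}$ feed directly into the two simulations.

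The correctness argument is a short case analysis. Suppose that the resulting play in $G_{k+1}(\A)$ is such that at least one of $T_1,\dots,T_{k+1}$ follows an accepting path; we must show that $E$ is accepting. If some $T_i$ with $i\leq k-1$ is accepting, then in the internal $G_k$-play one of Adam's tokens is accepting, so by $\sigma_k$ winning $G_k(\A)$ the token $E$ (Eve's token there) is accepting. Otherwise some $T_j$ with $j\in\{k,k+1\}$ is accepting, so by $\sigma_2$ winning $G_2(\A)$ the intermediate token $I$ (Eve's token in the $G_2$-play) is accepting; but $I$ is one of Adam's tokens in the internal $G_k$-play, so again $\sigma_k$ guarantees that $E$ is accepting. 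Either way $E$ is accepting, so $\sigma_{k+1}$ is winning.

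The only real subtlety is verifying that the composition is well-defined as a genuine strategy in $G_{k+1}(\A)$: at each round both internal plays advance by exactly one letter (the letter chosen by Adam in $G_{k+1}$), and the move of $I$ needed as input to $\sigma_k$ is produced by $\sigma_2$ \emph{before} $\sigma_k$ is queried for $E$'s move, so there is no circular dependency. This interleaving is the only point requiring care; the rest is bookkeeping. Note that the argument is entirely condition-agnostic, which is precisely why it is invoked here for the coB\"uchi case but originally stated and used for B\"uchi in~\cite{BK18}.
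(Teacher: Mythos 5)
Your proof is correct and is essentially the standard composition argument from~\cite{BK18}, which the paper cites for this lemma without reproducing a proof: an intermediate token plays $\sigma_2$ against two of Adam's tokens while the real token plays $\sigma_k$ against the remaining tokens plus the intermediate one, and the two-case analysis on which of Adam's tokens is accepting goes through exactly as you write it. The point you flag about the order of queries (move $I$ via $\sigma_2$ before querying $\sigma_k$) is indeed the only delicate step, and your handling of it is sound.
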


From now on, we assume that Eve wins $G_2(\A)$, and aim at proving that $\A$ is GFG.

\subsection{Normalisation}

Along the proof, we will change $\A$ to a different automaton $\A'$, while ensuring that if $\A'$ is GFG then so is $\A$. We will then show that $\A'$ is indeed GFG, reaching the required result.

We will use the following operations to change $\A$:
\begin{itemize}
\item \emph{$G_2$-restriction}: We ``clean'' $\A$ to only have states from which Eve wins the $G_2$ game, namely restrict it to the states $Q' =  \{q \St $ Eve wins $G_2(\A^q) \}$, and remove the remaining states and transitions involving them.

\item \emph{Reachability labelling}: 
States occupied by tokens at a given time in a token game on $\A$ are always ``co-reachable'', namely there is a word on which $\A$ can reach all of them. Therefore, augmenting the states of $\A$ with the set of currently reachable states may intuitively help in analysing token games on $\A$. We accordingly define the NCW $\A'=(\Sigma,Q',\iotar,\delta',\alpha')$ that behaves like $\A$, but with an additional component storing the set of states reachable on the prefix read so far. (Cf.\ the ``augmented subset construction'' of \cite{BK12}.)

That is, we define the components of $\A'$ as follows: $Q'=\{(p,X)\in Q\times 2^Q\mid p\in X\}$, $\delta':Q'\times\Sigma\to 2^{Q'}$ defined for every 	
state $(p,X)$ and letter $a\in\Sigma$ by $\delta'((p,X),a)=\{(p',X')\mid p'\in\delta(p,a), X'=\overline{\delta}(X,a)\}$, and the accepting transitions are 	
$\alpha'=\{\trans{(p,X)}{a}{(q,X')}\mid a\in\Sigma, \trans{p}{a}{q}\in\alpha, X'=\overline{\delta}(X,a)\}$.
Notice that the second component evolves deterministically, and that $L(\A')=L(\A)$.
	
\item \emph{Acceptance tuning}: For a coB\"uchi automaton $\A$, we want to normalise $\A$ to only have accepting transitions that can be used infinitely often in an accepting run. We therefore define the accepting transitions of $\A'$ to be the ones that are accepting in $\A$ and are part of a maximal strongly connected component of accepting transitions of $\A$. In other words, all transitions $\trans{p}{a}{q}$ for which there is no path of accepting transitions from $q$ to $p$ in $\A$ are made rejecting. We will call these latter transitions \emph{SCC-changing}.
\end{itemize}

We first show that the $G_2$-restriction allows Eve to win $G_2$ from every state. Notice that it is not trivial from the definition, as her $G_2$ winning strategy in $\A$ might upfront visit states that are then removed in the $G_2$-restriction.

\begin{lemma}
\label{lem:G2RestrictionPreserve}
Consider a nondeterministic coB\"uchi automaton $\A$ and the automaton $\A'$ that is derived from $\A$ by $G_2$-restriction. Then i) if Eve wins $G_2(\A)$ then she wins $G_2(\A'^{q'})$ for every state $q'$ of $\A'$, and ii) if $\A'$ is GFG then so is $\A$.
\end{lemma}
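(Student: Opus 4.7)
The plan is to use a standard ``closed under strategy'' argument for part (i) and a direct simulation for part (ii). For part (i), fix $q'\in Q'$ and take any winning strategy $\sigma$ of Eve in $G_2(\A^{q'})$. The key observation is that every state $p$ to which Eve's token moves along a play consistent with $\sigma$ must itself lie in $Q'$: whenever Eve's token moves from $q$ to $p$ on a letter $a$, Adam is free to \emph{mirror} her choice and move both of his tokens to the same $p$, bringing the play to the configuration $(p,p,p)$, which is the initial configuration of $G_2(\A^p)$. Since $\sigma$ remains winning from this point, Eve wins $G_2(\A^p)$, so $p\in Q'$. In particular, $\sigma$ only ever picks transitions that are preserved by the $G_2$-restriction.

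Then I would check that $\sigma$, viewed in $G_2(\A'^{q'})$, remains winning. Adam's legal moves in $G_2(\A'^{q'})$ form a subset of his legal moves in $G_2(\A^{q'})$ (his tokens are forced to stay in $Q'$), while every response of $\sigma$ is legal in $\A'$ by the previous paragraph. Consequently, every play in $G_2(\A'^{q'})$ that is consistent with $\sigma$ can be copied verbatim into $G_2(\A^{q'})$, where it is won by Eve; since the winning condition on each play depends only on the sequence of transitions (and $\A'$ inherits its acceptance from $\A$), Eve wins in $\A'$ as well. A small sanity check is that $\delta'(q,a)\neq\emptyset$ for $q\in Q'$ and every $a\in\Sigma$, which is immediate from the same mirroring argument: after Adam plays $a$, $\sigma$ must supply at least one next state, and that state lies in $Q'$.

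For part (ii), assume $\A'$ is GFG. Because Eve wins $G_2(\A)$ we have $\iota\in Q'$, so $\A'$ is well-defined with initial state $\iota$; fix a winning strategy $\sigma$ for Eve in the letter game on $\A'$. Define a strategy $\sigma'$ in the letter game on $\A$ by copying $\sigma$: after reading any prefix, $\sigma'$ picks the same transition as $\sigma$, which is a transition of $\A$ whose target lies in $Q'$, hence a legal move in $\A$. On any input word $w$, the run produced by $\sigma'$ equals (as a sequence of transitions) the run produced by $\sigma$ in $\A'$; if $w\in L(\A)=L(\A')$ then the latter is accepting in $\A'$, and since the accepting transitions of $\A'$ are exactly the accepting transitions of $\A$ that it retained, the same run is accepting in $\A$. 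Hence $\sigma'$ wins the letter game on $\A$, so $\A$ is GFG. The only delicate point in the whole proof is the mirroring argument in part (i); the coB\"uchi condition plays no special role here, as the claim is a generic closure property of winning regions in perfect-information games.
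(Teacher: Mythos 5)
Your part~(i) has a genuine gap at its key step. The mirroring argument only works when Adam's two tokens currently sit at the same state as Eve's token: to ``move both of his tokens to the same $p$'' Adam needs $p\in\delta(q_1,a)\cap\delta(q_2,a)$, which is guaranteed only when $q_1=q_2=q$. So what you actually prove is that Eve's token stays in $Q'$ along plays in which Adam has mirrored her from the very start, keeping the configuration diagonal. It does \emph{not} follow that an arbitrary winning strategy $\sigma$ of $G_2(\A^{q'})$ keeps Eve's token in $Q'$ along all consistent plays: when Adam's real tokens are at states $q_1,q_2$ from which no accepting continuation threatens her, $\sigma$ may safely wander outside $Q'$, and then its move is illegal in $\A'$. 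This is precisely the subtlety the paper flags (``her $G_2$ winning strategy in $\A$ might upfront visit states that are then removed''). The paper's fix is to pass to $G_4(\A)$ (using the $G_2\Rightarrow G_k$ lemma): Eve simulates a winning $G_4$-strategy in which two of Adam's \emph{virtual} tokens permanently shadow her own token while the other two are his real tokens. The shadowing tokens are then always in position to punish any departure from $Q'$, whatever the real tokens do, so the projected $G_2$-strategy never leaves $Q'$; and projecting away the shadows still yields a winning $G_2$-strategy because their runs coincide with Eve's. Some such device is needed; plain mirroring inside $G_2$ does not suffice.

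Part~(ii) also has a gap: you invoke $L(\A)=L(\A')$ as if it were immediate, but only $L(\A')\subseteq L(\A)$ is trivial. The converse inclusion is the real content of part~(ii) and is exactly where the hypothesis that Eve wins the token games is used. The paper proves it by contradiction via $G_3(\A)$: given $w\in L(\A)\setminus L(\A')$, Adam plays $w$, lets two tokens shadow Eve's token (to punish her, as above, if she ever leaves $Q'$) and runs a third token along an accepting run of $\A$ on $w$; since Eve cannot accept $w$ while staying in $Q'$, Adam wins $G_3(\A)$, contradicting the $k$-token lemma. Once $L(\A)=L(\A')$ is established, your transfer of the letter-game strategy is fine. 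Your closing remark that the lemma is a ``generic closure property of winning regions'' is therefore misleading: restricting a game to the set of states whose \emph{diagonal} configurations are winning is not harmless in general, and both halves of the lemma genuinely rely on the $k$-token machinery together with prefix-independence of the acceptance condition.
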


The proof of this lemma relies on the games $G_k(\A)$ as defined in \cref{rem:k-token-nondet}. We will additionally use the explicit shape of Eves strategies in these games, as in the definition of a strategy in memory from \cref{sec:Preliminaries}. For the sake of simplicity, we base that on the representation of the game $G_2(\A)$ (and analogously $G_k(\A)$) as in~\cite[Lemma~11]{BK18}. This means, that Eve's strategy in $G_k(\A)$ can be represented as $\stratE\colon M \times (\Sigma \times Q^{k+1})  \to Q$, where $M$ is the memory used by $\stratE$.

\begin{proof}\
\begin{enumerate}
\item[i)] Consider a state $p\in Q'$. By the definition of $Q'$, Eve wins $G_2(\A)$ from $(p;p,p)$, and by \cref{lem:ktokens} she also wins $G_4(\A)$ from $(p;p,p,p,p)$. Let $\stratE_4$ be her winning strategy that uses some memory structure $M_4$ as above.

We define a strategy $\stratE_2\colon M_4 \times (\Sigma \times Q^3)  \to Q$ of Eve in $G_2(\A)$, in which she plays like $\stratE_4$, assuming that the first two tokens of Adam in the 4-token game follow her token. The memory structure $M_4$ is maintained accordingly.
That is, when Eve's memory is $m$, her token is in state $p$, Adam's tokens are in states $q_1$ and $q_2$, and he chooses the letter $\letter$, we have $\stratE_2(m, \letter, p;  q_1, q_2)=\stratE_4(m,\letter, p; p, p, q_1, q_2)$. Observe that $\stratE_2$ is a winning strategy, since $\stratE_4$ is.

Now, it must be that all states visited by Eve's token in a play consistent with $\stratE_2$ are in $Q'$, as otherwise a play consistent with $\stratE_4$ can reach a position $(p; p, p, q_1, q_2)$, where  $q\in Q\setminus Q'$, from which Adam can win against $\stratE_4$, by diverting to a winning strategy of him in $G_2(\A)$ from $(q;q,q)$ with his first two tokens.

So $\stratE_2$ is actually a winning strategy in $G_2(\A')$ from $(p;p,p)$, as it never goes outside of $Q'$ (and this strategy even works when Adam is allowed to visit $Q$ and not only $Q'$).

\item[ii)]  It is enough to show that $L(\A)=L(\A')$, as then a winning strategy $\stratE\colon\Sigma^*\to Q'$ of Eve in the letter game for $\A'$ is also a winning strategy for her in the letter game for $\A$. (Every word generated by Adam is either not in their language, or followed by an accepting run of Eve in $\A'$, which is also an accepting run for her in $\A$.)
			
Assume toward contradiction that exists a word $w\in L(\A')\setminus L(\A)$.
We give a winning strategy for Adam in $G_3(\A)$, thereby contradicting \cref{lem:ktokens}.
Adam will play the word $w$, and make his first two tokens follow Eve's token, while the third token will follow an accepting run for $w$.
If Eve stays in $Q'$, then by the choice of $w$ she cannot build an accepting run, and Adam will win the play.
So Eve is forced to leave $Q'$ at some point, and the game reaches a position $(p;p,p,q)$ with $p\notin Q'$. By the definition of $Q'$, this means that Adam has a winning strategy $\stratA$ in $G_2(\A)$ from $(p;p,p)$. Adam can therefore stop playing $w$, and win by playing $\stratA$ against Eve with his first two tokens, while doing arbitrary choices with the third token. 
\end{enumerate}
\end{proof}

We continue with showing that reachability labelling and acceptance tuning do not change the $G_2$ winner, and if they produce a GFG automaton then so was the original one.

\begin{lemma}
\label{lem:OperationsPreserve}
Consider a nondeterministic automaton $\A$ and the automaton $\A'$ that is derived from $\A$ by acceptance tuning or reachability labelling. Then i) if Eve wins $G_2(\A^q)$ for every state $q$ of $\A$ then she wins $G_2(\A'^{q'})$ for every state $q'$ of $\A'$, and ii) if $\A'$ is GFG then so is $\A$.
\end{lemma}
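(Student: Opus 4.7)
The plan is to handle the two operations separately, and for each operation to establish both directions at once by proving a single unifying fact: the transition graph of $\A'$ naturally maps to the transition graph of $\A$, and this map is a bijection between \emph{accepting} paths. Once this is in place, both $G_2$ winning strategies and letter-game strategies transfer essentially verbatim, since the two games depend only on which paths are accepting, not on any other difference between $\A$ and $\A'$.

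For \textbf{acceptance tuning}, the transition graphs of $\A$ and $\A'$ coincide, so I only need to argue $L$-equivalence and acceptance-equivalence of runs. The inclusion $\alpha' \subseteq \alpha$ gives one direction for free. For the converse, I would use the coB\"uchi condition: an accepting run $\rho$ in $\A$ uses only finitely many rejecting transitions of $\A$, so from some index on every transition of $\rho$ is accepting in $\A$. Any transition $\trans{p}{a}{q}$ appearing infinitely often in this suffix has both $p$ and $q$ visited infinitely often, and there is a subpath of $\rho$ consisting entirely of accepting-in-$\A$ transitions leading from $q$ back to $p$; hence $\trans{p}{a}{q}$ is not SCC-changing and is therefore accepting in $\A'$. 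Transitions used only finitely often are of no consequence to coB\"uchi acceptance. Thus $\rho$ uses only finitely many rejecting-in-$\A'$ transitions and is accepting in $\A'$. Since $\A$ and $\A'$ have the same arena and the same set of accepting paths, the winning conditions of $G_2(\A^q)$ and $G_2(\A'^q)$ coincide for every $q$, giving (i), and the winning condition of the letter game is likewise unchanged, giving (ii).

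For \textbf{reachability labelling}, the second component evolves deterministically, so for every starting state $q' = (p, X) \in Q'$ the projection $\pi\colon (r, Y) \mapsto r$ is a bijection between paths of $\A'^{q'}$ and paths of $\A^p$, and a transition of $\A'$ is accepting iff its $\pi$-image is accepting in $\A$. For (i), from a starting configuration $(q';q',q')$ in $G_2(\A'^{q'})$, Eve simulates her winning strategy in $G_2(\A^p)$ on the $\pi$-projections (which is well-defined since all three tokens share the deterministic second component throughout the play); the projected play is a play of $G_2(\A^p)$ consistent with her winning strategy, and acceptance of the involved paths is preserved by $\pi$, so the lifted strategy wins. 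For (ii), a letter-game strategy $\sigma'\colon \Sigma^* \to Q'$ for Eve in $\A'$ starting from $\iotar = (\iota,\{\iota\})$ is turned into a strategy $\sigma = \pi \circ \sigma'$ for $\A$; since $L(\A') = L(\A)$ and the path produced by $\sigma$ is the $\pi$-image of the path produced by $\sigma'$, and $\pi$ preserves acceptance, $\sigma$ wins Eve's letter game for $\A$.

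The main obstacle is the single non-routine step, namely showing $L(\A) \subseteq L(\A')$ for acceptance tuning, which is really the claim that every accepting run of a coB\"uchi automaton eventually enters an accepting SCC. Everything else reduces to bookkeeping about projections and the observation that $G_2$ and the letter game are determined by which paths are accepting. Note that I will only use the coB\"uchi structure in that one step; the rest of the argument would work for any acceptance condition for which the tuning operation preserves $L$ and the set of accepting runs.
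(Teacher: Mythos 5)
Your proof is correct and follows essentially the same route as the paper: for acceptance tuning, observe that the two automata have identical runs with identical acceptance (you spell out, via the infinitely-occurring transitions of the accepting suffix, the claim the paper merely asserts, namely that an accepting coB\"uchi run eventually avoids SCC-changing transitions); for reachability labelling, exploit the deterministic second component to project and lift strategies. No gaps.
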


\begin{proof}\
\begin{description}	
\item[Acceptance tuning]
Every run $r$ of $\A$ is also a run of $\A'$, and vice versa, and $r$ is accepting in $\A$ iff it is accepting in $\A'$. This is because any accepting run of $\A$ must eventually stay within a SCC of accepting transitions, and therefore avoid SCC-changing transitions. Therefore, the two required properties directly follow.

\item[Reachability labelling]\
\begin{enumerate}
\item[i)] Eve can simply use her winning strategy in $G_2(\A)$ --- the extra component of $\A'$ does not play any role in the acceptance condition, and evolves deterministically.

	
\item[ii)] Consider a winning strategy $\stratE'\colon\Sigma^*\to Q\times 2^Q$ of Eve in the letter game for $\A'$. Then Eve can win the letter game for $\A$, by using the strategy $\stratE\colon\Sigma^*\to Q$ that is derived from $\stratE'$ by ignoring the second component of the image. (Since the second component, consisting of the reachable states, evolves deterministically, Eve can compute it in the memory of her strategy.)	
\end{enumerate}
\end{description}
\end{proof}

\paragraph*{The automaton $\Ar$}

We continue with considering the automaton $\Ar=(\Sigma,\Qr,\iotar,\delr,\alpr)$ that is derived from $\A$ by first performing $G_2$-restriction, then reachability labelling, and finally acceptance tuning. We aim to show that it is GFG, which will show by \cref{lem:G2RestrictionPreserve,lem:OperationsPreserve} that $\A$ is GFG.

Notice that since Eve wins $G_2(\A)$, we have in particular $\iotar\in \Qr$.
In the sequel, we will use $\reach{q}$ to denote a state of $\Ar$ of the form $(q,X)$ with $q\in Q$ and $X\subseteq Q$.
The second component of a state in $\Qr$ (the $X$ in $(q,X)$ ) is deterministically determined (it is the subset construction on the part of $\A$ that was not removed in the normalisation). For a finite word $u$, we shall use $\delrr(u)$ to denote the component $X$ of a state $(q,X)$ reached by $\Ar$ reading $u$.

\subsection{Safety Game and Deterministic Runs}

We analyse the different regions of the automaton $\Ar$ with respect to states being ``safe'' for Eve and states from which she can have some ``partially deterministic'' choices. 

The following constructions and arguments refine the corresponding ones from~\cite{KS15}.

\begin{definition}
Consider an NCW $\C$ with a set $P$ of states. We define the \emph{safety game} $\Gsafe(\C)$ on $\C$ as in~\cite{KS15}: The game is played on $P^2$, and a turn from a configuration $(p,q)$ is played as follows:
\begin{itemize}
\item Adam chooses a letter $\letter\in\Sigma$,
\item Eve chooses a transition $\trans{p}{\letter}{p'}$
\item Adam chooses a transition $\trans{q}\letter{q'}$
\end{itemize}
If the transition $\trans{q}\letter{q'}$ chosen by Adam is rejecting then Eve wins the game immediately.
If $\trans{p}{\letter}{p'}$ is rejecting and $\trans{q}\letter{q'}$ is accepting, Eve loses the game immediately. Otherwise, the game moves to the position $(p',q')$ and a new round starts. Eve wins any infinite play. 
\end{definition}

The above game can again be defined as $(R_{A,\Sigma}\times \C)\times \overline{\C}$ with an appropriate winning condition.

Notice that $\Gsafe(\Ar)$ is a safety game for Eve, and in particular if she wins the game, she can do it with a positional strategy.

We will denote by $\Wsafe\subseteq \Qr^2$ the winning region of Eve in $\Gsafe(\Ar)$.
We show next that for every state in $\Ar$, there is a corresponding safe state sharing the same reachability\=/component. The proof is analogous to the proof of~\cite[Lemma~53 in Appendix~E.5]{KS15}, except that we additionally need to keep track of the component $X$ in the states in $\Qr$.

\begin{lemma}
\label{lem:safewin}
For all $(q,X)\in \Qr$, there exists $p\in Q$ such that $((p,X),(q,X))\in \Wsafe$.
\end{lemma}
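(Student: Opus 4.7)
The plan is to use Eve's winning strategies in $G_2(\Ar^{q'})$, which are available from every state $q'$ of $\Ar$ by \cref{lem:G2RestrictionPreserve} and \cref{lem:OperationsPreserve}, to construct a suitable $p$. The key observation is that the second component of a state in $\Qr$ evolves deterministically with the input word, so every state reached by $\Ar$ on a word $u$ has second component exactly $\delrr(u)$. Hence the matching of the $X$-components in the sought pair $((p,X),(q,X))$ is forced automatically, and the genuine content of the lemma is to find an appropriate~$p$.

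First I would fix a finite word $u$ witnessing reachability of $(q,X)$ in $\Ar$, so in particular $\delrr(u)=X$. Consider a play of $G_2(\Ar)$ from $(\iotar;\iotar,\iotar)$ in which Adam plays the letters of $u$ and steers both of his tokens to $(q,X)$; this is possible because $(q,X)$ lies on some run of $\Ar$ on $u$. Under Eve's winning strategy $\sigma$ her token ends up in some state $(p,X)$, where the second component is necessarily $X$ by the deterministic evolution recalled above. I then define a candidate safety strategy $\sigma_{\mathit{safe}}$ from $((p,X),(q,X))$ by simulating $\sigma$ from the $G_2$-position $((p,X);(q,X),(q,X))$, pretending that Adam's two $G_2$-tokens always mirror the transition he chose in the safety game.

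The main obstacle is to verify that $\sigma_{\mathit{safe}}$ is genuinely winning, since a single failing turn of the safety game does not by itself refute $\sigma$: the coB\"uchi condition tolerates finitely many rejecting transitions. I would overcome this by a contradiction argument using positional determinacy of safety games. If the lemma fails, then Adam has positional winning strategies in $\Gsafe(\Ar)$ from $((p',X'),(q',X'))$ for every candidate $p'$ at every reachable $(q',X')$, and these can be glued into a $G_2$ strategy for Adam: after each safety failure by Eve, Adam applies a fresh safety strategy from the Eve-reached state while keeping his two tokens inside maximal accepting SCCs, which is possible by acceptance tuning since the transitions Adam takes at each failure are accepting and hence stay within such an SCC. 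Finiteness of $\Qr$ ensures that such resets can be iterated indefinitely, producing a $G_2$-play in which both of Adam's runs are accepting while Eve's run contains infinitely many rejecting transitions, contradicting the fact that $\sigma$ is winning. This cycling argument is essentially the one of \cite[Lemma~53]{KS15}, adapted by the fact that the reachability components on the two sides of the safety game coincide throughout because they are functionally determined by the word read.
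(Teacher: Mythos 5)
Your core idea---derive a contradiction by letting Adam iterate winning safety strategies so that Eve's run accumulates infinitely many rejecting transitions while Adam's run stays accepting, contradicting Eve's win in the token game---is exactly the paper's approach. But the execution has a genuine gap in the quantifiers. The negation of the lemma only says: there exists \emph{one} state $(q,X)$ such that $((p,X),(q,X))\notin\Wsafe$ for all $p$. You instead assert that Adam wins $\Gsafe$ ``from $((p',X'),(q',X'))$ for every candidate $p'$ at every reachable $(q',X')$'', and your gluing step relies on this: after a safety failure Eve sits at some arbitrary $(p_1,X_1)$ and Adam at some $(q_1,X_1)$, and you restart a safety strategy ``from the Eve-reached state''. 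Nothing in the contradiction hypothesis guarantees that $((p_1,X_1),(q_1,X_1))$ is losing for Eve, so the reset is unjustified. Merely ``keeping Adam's tokens inside maximal accepting SCCs'' does not repair this, since membership in an accepting SCC says nothing about membership in $\Wsafe$.

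The missing step, which is how the paper closes the loop, is that after each safety phase Adam must steer his token \emph{back to the specific witness state} $(q,X)$ along a path of accepting transitions---possible precisely because acceptance tuning guarantees that the accepting transitions taken during the safety phase stay within the accepting SCC of $(q,X)$, so $(q,X)$ remains reachable from Adam's current state via accepting transitions. Since the second component evolves deterministically with the word, Eve's token then sits at some $(p',X)$ with the \emph{same} $X$, so the new position has the form $((p',X),(q,X))$, which is again losing for Eve by the hypothesis; only then can Adam reiterate. (The paper runs this in $G_1(\Ar)$ rather than $G_2$, which is immaterial.) Your opening construction---extracting $p$ from Eve's $G_2$ strategy after Adam plays a word $u$ reaching $(q,X)$---is abandoned by you and plays no role; it also silently assumes $(q,X)$ is reachable, which the paper's argument does not need.
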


\begin{proof}
Assume toward contradiction that there is a state $\reach{q}=(q,X)\in\Qr$ such that for all $p\in Q$, $((p,X),(q,X))\notin \Wsafe$. Since Eve wins $\Gt(\A)$ from each state, she also wins $G_1(\A)$ from $(\reach{q};\reach{q})$. We shall build a winning strategy $\stratA$ for Adam in $G_1(\Ar)$ from $(\reach{q};\reach{q})$, to obtain contradiction.

By the assumption on $\reach{q}$, we have $(\reach{q},\reach{q})\notin\Wsafe$. The strategy $\stratA$ of Adam will start by playing in order to win $\Gsafe$ from $(\reach{q},\reach{q})$.
This means that $\stratA$ guarantees to build a partial play $\trans{(\reach{q},\reach{q})}{u_1}{((p_1,X_1),(q_1,X_1))}$, where a rejecting transition has been seen only on Eve's moves.

Since $\reach{q}$ and $(q_1,X_1)$ are in the same SCC (accepting transitions do not change SCC), 
Adam can now play a word $v_1$ allowing his token to go back to $\reach{q}$.
By the acceptance tuning of $\Ar$, Adam can ensure that the partial run along $v_1$ sees only accepting transitions.

The play therefore reaches a position $((p_1',X), (q,X))$. Notice that the second component $X$ is the same, as it evolves deterministically according to the input word read so far (here $u_1v_1$).
By the assumption, we again have $((p_1',X), (q,X))\notin \Wsafe$. Adam can therefore reiterate the previous strategy: first play in order to win $\Gsafe$ from there, forcing Eve to witness a rejecting transition; then go back to $\reach{q}$ with his token, without seeing any rejecting transition on the loop. This reaches a position $((p_2',X), (q,X))$. Repeating this strategy ad infinitum constitutes the winning strategy $\stratA$, as Eve will be forced to see infinitely many rejecting transitions, while Adam will not see any.
\end{proof}

Let us define $\Ssafe\subseteq Q_r$ by $\Ssafe=\{ (p,X) \St ((p,X),(p,X)) \in\Wsafe\}$. 

From Lemma \ref{lem:safewin}, we deduce the following:
\begin{lemma}\label{lem:safepos}
For all $(q,X)\in Q_r$, there exists $p\in Q$ such that $((p,X),(q,X))\in\Wsafe$ and $(p,X)\in \Ssafe$.
\end{lemma}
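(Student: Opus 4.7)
The plan is to derive Lemma~\ref{lem:safepos} from Lemma~\ref{lem:safewin} by establishing a transitivity property of the ``safe for'' relation and then applying a pigeonhole argument in the finite state space $\Qr$.

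Concretely, I would first observe that because the reachability component evolves deterministically in $\Ar$, any play of $\Gsafe(\Ar)$ starting from a pair of states of the form $((p,X),(q,X))$ stays, at every round, in positions whose two components share the same set label $X_i$ obtained by reading the prefix played so far. Thus it makes sense to restrict attention to the binary relation $\preceq_X$ on $\{p : (p,X)\in\Qr\}$ defined by $p\preceq_X q$ iff $((p,X),(q,X))\in\Wsafe$. The first key step is to prove that $\preceq_X$ is transitive: given positional winning strategies $\sigma_{ab}$ from $((a,X),(b,X))$ and $\sigma_{bc}$ from $((b,X),(c,X))$, Eve wins from $((a,X),(c,X))$ by maintaining a virtual $b$-token, responding to Adam's letter $\letter$ first by applying $\sigma_{ab}$ (giving Eve's move $a\!\to\!a'$), then observing Adam's move $c\!\to\!c'$, simulating $\sigma_{bc}$'s response as a virtual move $b\!\to\!b'$, and updating the virtual token. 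Whenever $c\!\to\!c'$ is accepting, correctness of $\sigma_{bc}$ forces $b\!\to\!b'$ to be accepting, and then correctness of $\sigma_{ab}$ forces $a\!\to\!a'$ to be accepting. Hence Eve never violates the safety requirement in the $(a,c)$ game.

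The second key step uses Lemma~\ref{lem:safewin} repeatedly. Start with $(q,X)\in\Qr$ and apply Lemma~\ref{lem:safewin} to obtain $p_1$ with $p_1\preceq_X q$. Applying Lemma~\ref{lem:safewin} again to $(p_1,X)\in\Qr$ yields $p_2$ with $p_2\preceq_X p_1$, and iterating produces an infinite sequence $p_1,p_2,p_3,\ldots$ of states in the finite set $\{p : (p,X)\in\Qr\}$, all with the same $X$-component. By the pigeonhole principle there are indices $i<j$ with $p_i=p_j$. The chain $p_j\preceq_X p_{j-1}\preceq_X\cdots\preceq_X p_i$ combined with transitivity gives $p_j\preceq_X p_i$, that is, $((p_i,X),(p_i,X))\in\Wsafe$, so $(p_i,X)\in\Ssafe$. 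The chain $p_i\preceq_X p_{i-1}\preceq_X\cdots\preceq_X p_1\preceq_X q$ gives, again by transitivity, $((p_i,X),(q,X))\in\Wsafe$. Taking $p=p_i$ finishes the proof.

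I expect the only slightly delicate point to be the transitivity step, because the safety game winning condition couples Eve's and Adam's transitions through their acceptance labels; one has to carefully check the case analysis on whether Adam's transition is accepting or rejecting and make sure that the virtual $b$-token behaves correctly under $\sigma_{bc}$ even when Eve is in reality playing from a different left state $a$. Once transitivity is established, the pigeonhole argument is essentially immediate and gives the statement uniformly for all $(q,X)\in\Qr$.
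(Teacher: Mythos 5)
Your proposal is correct and follows essentially the same route as the paper: iterate Lemma~\ref{lem:safewin} to build the sequence $p_1,p_2,\ldots$, find a repetition by pigeonhole, and conclude via transitivity of $\Wsafe$. The only difference is that the paper simply cites~\cite{KS15} for the transitivity of $\Wsafe$, whereas you prove it directly by composing the two positional strategies through a virtual middle token — a correct and more self-contained version of the same argument.
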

\begin{proof}
Let $(q,X)\in Q_r$. By Lemma \ref{lem:safewin}, there exists $p_1\in Q$ such that $((p_1,X),(q,X))\in \Wsafe$. Again, there exists $p_2\in Q$ such that $((p_2,X),(p_1,X))\in \Wsafe$.
Iterating this construction builds a sequence $p_1,p_2,p_3,\dots$. Since $Q$ is finite, there exists $i<j$ such that $p_i=p_j$.
As it is shown in \cite{KS15} that $\Wsafe$ is transitive, we obtain that $p_i\in \Ssafe$ and $((p_i,X),(q,X))\in \Wsafe$. Therefore, taking $p=p_i$ concludes the proof.
\end{proof}
Notice that due to \cref{lem:safepos}, we have in particular that the initial state of $\Ar$, namely $(\iota, \{\iota\})$, is in $\Ssafe$, since $\iota$ is the only state of $\A$ that belongs to a state of $\Ar$ in which the second component is $\{\iota\}$.

We continue with another refinement of a result from \cite{KS15}:
\begin{lemma}\label{lem:safedet}
There exists a partial deterministic transition function $\deltadet\colon\Ssafe\times\Sigma\to \Ssafe$, where $\deltadet\subseteq \delr$, such that for all $w\in L(\Ar)$, there is a decomposition $w=uv$ and a state $s\in Q$, such that $(s,\delrr(u))\in \Ssafe$ and $\deltadet$ accepts $v$ from $s$ without any rejecting transition.
\end{lemma}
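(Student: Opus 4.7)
I would construct $\deltadet$ from a positional winning strategy $\sigmasafe$ of Eve in the safety game $\Gsafe(\Ar)$, which exists because $\Gsafe(\Ar)$ is a safety game for Eve. Concretely, set $\deltadet((s,X),a)$ to be the target state of the transition chosen by $\sigmasafe$ at the diagonal position $((s,X),(s,X))$ on letter $a$. To see that the codomain is indeed $\Ssafe$, observe that if $\sigmasafe$ responds with a transition to $(s',X')$, Adam can mirror this transition, reaching the diagonal position $((s',X'),(s',X'))$; by the winning property of $\sigmasafe$, this position is in $\Wsafe$, so $(s',X')\in\Ssafe$. Thus $\deltadet\colon\Ssafe\times\Sigma\to\Ssafe$ is well\=/defined as a partial function (defined whenever $\delr$ has a transition on the given letter). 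I would also record the crucial property that $\sigmasafe$, from any diagonal position, must pick an accepting transition whenever one is available on the current letter: if it chose a rejecting one while an accepting transition existed, Adam could play the accepting transition and immediately win the safety game.

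For the acceptance claim, fix $w\in L(\Ar)$ together with an accepting run $\rho$. Since $\Ar$ has undergone acceptance tuning (so accepting transitions form SCCs), the coB\"uchi acceptance of $\rho$ gives an index $N$ such that $\rho$ uses only accepting transitions from step $N$ onwards. Decompose $w=uv$ with $|u|=N$, and let $r_N=(s_\rho,\delrr(u))$ be the state reached by $\rho$ after reading $u$. By~\cref{lem:safepos}, pick $s\in Q$ with $(s,\delrr(u))\in\Ssafe$ and $((s,\delrr(u)),r_N)\in\Wsafe$; this is the witness $s$ appearing in the statement.

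The remaining task, which I expect to be the main obstacle, is to show that the $\deltadet$\=/run on $v$ starting from $(s,\delrr(u))$ uses only accepting transitions. Following the strategy of~\cite{KS15}, I would proceed by induction on $i\geq 0$, maintaining the invariant that the state $(s_i,X_i)$ reached by $\deltadet$ after $i$ steps satisfies $((s_i,X_i),r_{N+i})\in\Wsafe$; the second component agrees with that of $\rho$ automatically, since $X_i=\delrr(u w_N\cdots w_{N+i-1})$ evolves deterministically. The base case is the choice of $s$ via~\cref{lem:safepos}. For the inductive step, the accepting transition taken by $\rho$ at position $r_{N+i}$ combined with the winning property of $\sigmasafe$ from $((s_i,X_i),r_{N+i})$ forces Eve to respond with an accepting transition: if the only transitions from $(s_i,X_i)$ on $w_{N+i}$ were rejecting, Adam would play $\rho$'s accepting transition and Eve would lose the safety game, contradicting $((s_i,X_i),r_{N+i})\in\Wsafe$; hence by the property recorded above, $\deltadet$'s own choice is also accepting.

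The delicate part is propagating the invariant to step $i+1$, since $\deltadet$'s move is Eve's response at a diagonal position while the inductive hypothesis concerns a non\=/diagonal position. I would use the transitivity of $\Wsafe$ established in~\cite{KS15} to reconcile the two: Eve's diagonal response lands in $\Ssafe$ (so the diagonal pair is in $\Wsafe$), Eve's response against $\rho$ yields a pair in $\Wsafe$ with second component $r_{N+i+1}$, and a uniformisation of $\sigmasafe$ (choosing one response per Eve\=/state) together with the shared $X$\=/component allows us to compose these two $\Wsafe$ facts via transitivity into the desired invariant at step $i+1$. This yields an accepting $\deltadet$\=/run on $v$ from $(s,\delrr(u))$, completing the proof.
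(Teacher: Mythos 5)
Your construction of $\deltadet$ from a positional $\sigmasafe$ at diagonal positions, the decomposition $w=uv$ at the last rejecting transition of an accepting run $\rho$, the choice of $s$ via \cref{lem:safepos}, and the observation that $\sigmasafe$ must pick an accepting transition at a diagonal whenever one exists, all match the paper. The gap is exactly where you flag it: propagating the invariant. Because your invariant $((s_i,X_i),r_{N+i})\in\Wsafe$ tracks $\rho$'s states, the inductive step must reconcile Eve's move at the \emph{off-diagonal} position $((s_i,X_i),r_{N+i})$ --- call its target $(t,X_{i+1})$ --- with her move at the diagonal $((s_i,X_i),(s_i,X_i))$, whose target $(s_{i+1},X_{i+1})$ is the one $\deltadet$ actually takes. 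Transitivity of $\Wsafe$ would close the loop only if you had $((s_{i+1},X_{i+1}),(t,X_{i+1}))\in\Wsafe$, and nothing you establish gives that: $(s_{i+1},X_{i+1})\in\Ssafe$ only yields the diagonal fact. The proposed fix --- ``uniformising'' $\sigmasafe$ so that Eve's response depends only on her own state and the letter --- is not available in general: in this simulation-like safety game Eve's winning move from a state against $q$ may genuinely differ from her winning move against $q'$ (e.g.\ two accepting successors $p_1,p_2$ of $(s,X)$ with $(p_1,q_1)\in\Wsafe$ but $(p_2,q_1)\notin\Wsafe$, and the roles of $p_1,p_2$ swapped for $q'$'s successor), so a single response per Eve-state need not remain winning against all of Adam's states paired with it.

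The paper avoids the off-diagonal problem by choosing a different invariant $(P_i)$: \emph{the remaining suffix can be accepted from $(s_i,X_i)$ without any rejecting transition}. The base case converts your $\Wsafe$-fact into this form by letting Adam play $\rho$'s all-accepting suffix against $\sigmasafe$ from $((s,\delrr(u)),r_N)$, which forces Eve's run there to be all-accepting. With that invariant the inductive step is played entirely at the diagonal: Adam plays the hypothetical safe accepting run from $(s_i,X_i)$ itself against $\sigmasafe$ at $((s_i,X_i),(s_i,X_i))$. This both forces $\deltadet$'s transition to be accepting (as you correctly argue) and shows that the remaining suffix is safely acceptable from $(s_{i+1},X_{i+1})$ --- otherwise Adam's all-accepting run would eventually catch Eve taking a rejecting transition and defeat $\sigmasafe$, a contradiction. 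Replacing your invariant by this one repairs the argument; the rest of your proof then goes through unchanged.
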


\begin{proof}
The function $\deltadet$ is defined by $\deltadet(\reach{p},a)=\sigmasafe(\reach{p},\reach{p},a)$, where $\sigmasafe$ is a positional winning strategy of Eve in $\Gsafe(\Ar)$.

Let $w=a_1a_2\dots\in L(\Ar)$, and $\rho=(p_0,X_0)(p_1,X_1)(p_2,X_2)\dots$ be an accepting run of $\Ar$ on $w$.
Let $k\in\mathbb N$ such that the last rejecting transition in $\rho$ occurs before position $k$. Let $u=a_1a_2\dots a_k$ and $v=a_{k+1}a_{k+2}\dots$, so $w=uv$.
After reading $u$, the run $\rho$ reaches a state $(p_k,X_k)$ with $X_k=\delrr(u)$.
By Lemma \ref{lem:safepos}, there exists $s_k\in Q$ such that $(s_k,X_k)\in \Ssafe$ and $((s_k,X_k),(p_k,X_k))\in\Wsafe$.

We now build by induction a sequence $(s_i)_{i\geq k}$, describing the run of $\Ar$ yielded by $\deltadet$ on $v$ from $(s_k,X_k)$. We show that this run does not contain rejecting transitions, thereby proving the Lemma.
To do so, we show the following invariant $(P_i)$: the run yielded by $\deltadet$ on $a_{k+1}\dots a_i$ does not contain rejecting transitions, and the remaining suffix $a_{i+1}a_{i+2}\dots$ can be accepted in $\Ar$ without rejecting transition from $(s_i,X_i)$.
For the induction base, we need to show that $(P_k)$ is true. The first part is trivial, and the second part follows from the fact that $((s_k,X_k),(p_k,X_k))\in\Wsafe$. Indeed, if from this position in $\Gsafe(\Ar)$, Adam plays $v$ and follows the suffix of $\rho$ from position $k$, Eve must accept $v$ from $(s_k,X_k)$ without seeing any rejecting transition, witnessing the wanted property.

For the induction step, assume $(P_i)$ holds on $(s_i,X_i)$, and let $(s_{i+1},X_{i+1})=\deltadet((s_i,X_i),\allowbreak a_{i+1})$ (recall that the second component evolves deterministically, so $\deltadet$ only chooses the first component $s_{i+1}$). We first need to show that this transition is well-defined and not rejecting.
From $(P_i)$, we know that $a_{i+1}a_{i+2}\dots$ can be accepted without rejecting transitions from $(s_i,X_i)$.
This means that $\sigmasafe((s_i,X_i),(s_i,X_i),a_{i+1})$ must not be rejecting, as otherwise the strategy $\sigmasafe$ would not be winning, since Adam could play an accepting transition from $(s_i,X_i)$ and immediately win the play. By definition of $\deltadet$, we obtain that the transition $\trans{(s_i,X_i)}{a_{i+1}}{(s_{i+1},X_{i+1})}$ is well-defined and accepting.
It remains to show the second part of $(P_{i+1})$, i.e., $a_{i+2}a_{i+3}\dots$ can be accepted from $(s_{i+1},X_{i+1})$ without any rejecting transition in $\Ar$.
Again, consider the play of $\Gsafe$ starting from $((s_i,X_i),(s_i,X_i))$ with Adam playing $a_{i+1}$ and Eve playing $\sigmasafe((s_i,X_i),(s_i,X_i),a_{i+1}))$ to $(s_{i+1}, X_{i+1})$.
If the suffix  $a_{i+2}a_{i+3}\dots$ cannot be safely accepted from $(s_{i+1},X_{i+1})$, then Adam can just play the accepting safe run from $(s_i,X_i)$ on $a_{i+1}a_{i+2}\dots$ (which exists by $P_i$), and win the game $\Gsafe$ against $\sigmasafe$. This is a contradiction, so $(P_{i+1})$ must hold.

This achieves the proof that $\deltadet$ builds a run without rejecting transition from $(s_i,X_i)$ on $v$.
\end{proof}

\subsection{Adam's Strategy $\stratA$ in the Letter Game}

Let us assume toward contradiction that although Eve wins $G_2(\Ar)$, $\Ar$ is not GFG, so Adam has a finite-memory winning strategy $\stratA$ in the letter game on $\Ar$.
Let $M$ be the memory used in $\stratA$, i.e., $\stratA$ has type $\Qr\times M\to \Sigma$ (together with a memory update function), and $m_0$ be the initial memory state.

We will explicit a property of $\stratA$ linked with the strategy $\deltadet$ from Lemma~\ref{lem:safedet}. 

\begin{definition}\label{def:det-breakpoint}
Consider a word $u=a_1a_2\dots a_{|u|}\in\Sigma^*$. We say that positions $i_1<i_2<\dots<i_n$ are \emph{det-breakpoints} of $u$ if for every $j\in [1,n{-}1]$ and $s\in Q$ for which $(s,\delrr(a_1\dots a_{i_j}))\in \Ssafe$, the run yielded by $\deltadet$ from $(s,\delrr(a_1\dots a_{i_j}))$ on $a_{i_{j}+1}\dots a_{i_{j+1}}$ is not defined or witnesses a rejecting transition.
\end{definition}

\begin{lemma}\label{lem:taubreak}
There is a constant $N$ depending only on $\Ar$ and $M$, such that if $u$ is a finite word produced by the strategy $\stratA$ then any sequence of det-breakpoints of $u$ has length smaller than $N$.
\end{lemma}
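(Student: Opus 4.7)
The plan is to argue by contradiction via a pigeonhole-and-loop construction. Set $N := |\Qr| \cdot |M| + 1$, and suppose for contradiction that some finite word $u = a_1 \dots a_{|u|}$ produced by $\stratA$ (against a fixed Eve play) admits a sequence of det-breakpoints $i_1 < i_2 < \dots < i_N$. At each position $i$, the play assigns a pair $(q_i, m_i) \in \Qr \times M$, namely the current state of Eve's token in $\Ar$ and the current memory state of $\stratA$. Since there are only $|\Qr| \cdot |M| < N$ such pairs, by the pigeonhole principle there exist $j < k$ with $(q_{i_j}, m_{i_j}) = (q_{i_k}, m_{i_k})$.

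Next, I would extend the play into an infinite word $w := u[1..i_j] \cdot (u[i_j+1..i_k])^\omega$ by making Eve repeat her transitions between $i_j$ and $i_k$ on every loop iteration. Because both Eve's state and $\stratA$'s memory coincide at positions $i_j$ and $i_k$, the looped play is genuinely consistent with $\stratA$: Adam's next letter, dictated by $\stratA$, is forced to be identical on each round, and the memory updates agree throughout. Since $\stratA$ is winning in Eve's letter game, every outcome of $\stratA$ belongs to $L(\Ar)$, hence $w \in L(\Ar)$.

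I would then invoke \cref{lem:safedet} on $w$, obtaining a decomposition $w = u' v$ and a state $s \in Q$ with $(s, \delrr(u')) \in \Ssafe$ such that the $\deltadet$-run on $v$ from $(s, \delrr(u'))$ contains no rejecting transitions. The crucial observation is that $w$ inherits infinitely many det-breakpoints: the loop $u[i_j+1..i_k]$ already contains the det-breakpoint $i_k$, and translating $i_{j+1}, \dots, i_k$ by multiples of $i_k - i_j$ yields an infinite sequence of valid det-breakpoints of $w$ (indeed, each block $w[i_? + l(i_k-i_j) + 1 .. i_{?+1} + l(i_k-i_j)]$ coincides with $u[i_? + 1 .. i_{?+1}]$, and the $\delrr$-values repeat with period $i_k - i_j$). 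Picking two consecutive such breakpoints $\ell_1 < \ell_2$ both greater than $|u'|$, the $\deltadet$-run on $v$ reaches at position $\ell_1$ some state $(s_1, \delrr(w[1..\ell_1])) \in \Ssafe$ because $\deltadet$ maps $\Ssafe$ into itself. But then the det-breakpoint property forces the $\deltadet$-run from $(s_1, \delrr(w[1..\ell_1]))$ on $w[\ell_1+1..\ell_2]$ to be undefined or to witness a rejecting transition, contradicting the fact that the $\deltadet$-run on $v$ is well-defined and rejecting-free.

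The hard part will be justifying the loop construction rigorously: one must verify that knowledge of the state-memory pair at a det-breakpoint really suffices for both Eve and $\stratA$ to replay the loop identically, ensuring that $w$ is a legitimate outcome of $\stratA$. Once this is settled, the rest is a routine combinatorial argument combining pigeonhole, \cref{lem:safedet}, and the invariance of the det-breakpoint property under translation along the loop.
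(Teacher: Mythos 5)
Your proposal is correct and follows essentially the same route as the paper's proof: pigeonhole on the (Eve-state, Adam-memory) pairs at the det-breakpoint positions, a loop construction forcing $\stratA$ to emit an ultimately periodic word $w\in L(\Ar)$ with infinitely many det-breakpoints, and a contradiction with \cref{lem:safedet}. The only differences are cosmetic (the paper takes $N=|M|\cdot|\Qr|+2$ rather than $+1$, and leaves the final step ``infinitely many det-breakpoints $\Rightarrow w\notin L(\Ar)$'' implicit, which you usefully spell out via the fact that $\deltadet$ maps $\Ssafe$ into itself).
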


\begin{proof}
Let $N=|M|\cdot |Q_r|+2$, and consider a partial run $\iotar \xrightarrow[]{a_1} p_1 \xrightarrow[]{a_2}  p_2\xrightarrow[]{a_3}\dots \xrightarrow[]{a_{l}} p_l$ of $\Ar$ on a finite word $u=a_1,\dots, a_l$ that is generated by $\stratA$. Assume toward contradiction that $u$ has $N$ det-breakpoints $i_1<i_2<\dots<i_N$, and let $m_j$ be the memory state of $\stratA$ at step $i_j$.

By the choice of $N$, there must be $j<t\in[1,N]$, such that $(m_j,p_j)=(m_t,p_t)$.
This means that when reaching $p_t$, Eve can repeat the play from $p_j$ to $p_t$, forcing $\stratA$ to produce the same letters $a_{i_j+1}\dots a_{i_t}$ in a loop, producing an infinite word $w$. Moreover, each occurrence of this loop contains a det-breakpoint, so $w$ contains infinitely many det-breakpoints.
By Lemma \ref{lem:safedet}, it follows that $w\notin L(\Ar)$. This contradicts the fact that $\stratA$ is winning in the letter game, as $\stratA$ must always produce a word from $L(\Ar)$ in order to win.
\end{proof}

\subsection{Limit Strategy $\sigmainf$}

Recall that by Lemma \ref{lem:ktokens}, Eve wins $G_k(\Ar)$ for all $k\in\mathbb N$. Let $\Wk\subseteq \Qr^{k+1}$ be the winning region of Eve in $G_k(\Ar)$, and $\stratE_k$ a winning strategy for Eve in $G_k(\Ar)$ from the initial position $(\iotar;\iotar,\dots,\iotar)$.

\begin{proposition}\label{prop:Wk}
Let $(p;q_1,q_2,\dots,q_k)\in \Wk$, let $i\in[1,k]$, and let $\pi_i:[1,i]\to[1,k]$ be an injective function.
Then $(p;q_{\pi_i(1)},q_{\pi_i(2)},\dots,q_{\pi_i(i)})\in W_i$.
\end{proposition}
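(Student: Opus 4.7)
The plan is to show that Eve wins $G_i(\Ar)$ from $(p;q_{\pi_i(1)},\dots,q_{\pi_i(i)})$ by simulating the $k$-token game in her memory. Concretely, I would construct a winning strategy $\sigma_i$ for Eve in $G_i$ from this position as follows: $\sigma_i$ maintains a virtual $G_k$ play starting from $(p;q_1,\dots,q_k)$, in which Eve follows the fixed $G_k$ winning strategy from $(p;q_1,\dots,q_k)\in W_k$ (which exists by hypothesis). At each round, when Adam picks a letter $\letter$ in $G_i$, Eve feeds the same letter into the virtual $G_k$ play. When Adam then produces transitions for his $i$ tokens in $G_i$, Eve copies these transitions into the virtual $G_k$ play onto the tokens with indices $\pi_i(1),\dots,\pi_i(i)$. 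For the remaining $k-i$ virtual tokens, Eve plays arbitrary (valid) transitions in the virtual $G_k$ game, which is legitimate since in $G_k$ these transitions are Adam's responsibility and any valid choice is allowed. Finally, Eve uses her $G_k$ winning strategy to determine her own move in the virtual $G_k$ play, and replays exactly that move in the real $G_i$ game.

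The first step to verify is that this is well-defined: since $\pi_i$ is injective, the $i$ transitions provided by Adam in $G_i$ are mapped to $i$ distinct virtual tokens, so there is no conflict. The next step is to observe that the resulting virtual play of $G_k$ is indeed a legitimate play consistent with the $G_k$ winning strategy: Adam's virtual letter choices and token moves are all valid, and Eve plays according to her winning strategy. Furthermore, by construction, the path followed by Eve's real token in $G_i$ coincides with the path of her virtual token in $G_k$, and the paths of Adam's $i$ real tokens in $G_i$ coincide with the paths of the virtual tokens indexed by $\pi_i(1),\dots,\pi_i(i)$ in $G_k$.

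The conclusion follows by a case analysis on how Eve wins the virtual $G_k$ play. If her virtual token's run is accepting, then her real token's run in $G_i$ is accepting, so she wins $G_i$. Otherwise, all $k$ of Adam's virtual tokens yield rejecting runs; in particular the $i$ tokens indexed by $\pi_i$ are rejecting, so all of Adam's real tokens in $G_i$ reject, and again Eve wins $G_i$. There is no real obstacle in this argument: it is essentially the standard ``more tokens for Adam only makes Eve's task harder'' observation (analogous to~\cite[Lemma~8]{BK18} or the proof of \cref{lem:ktokens}), and the injectivity of $\pi_i$ is exactly what ensures that the simulation is consistent round by round.
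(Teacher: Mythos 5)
Your argument is correct and is exactly the routine simulation the authors had in mind when they dismissed the proof as ``Straightforward'': fewer tokens for Adam can only make Eve's task easier, with injectivity of $\pi_i$ ensuring the round-by-round copying into the virtual $G_k$ play is consistent. Nothing further is needed.
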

\begin{proof}
Straightforward.
\end{proof}

\begin{lemma}\label{lem:sigmainf}
There is a strategy\footnote{In \cref{ap:Preliminaries}, we formally defined a ``strategy'' with respect to a specific game. Here we abuse the term ``strategy'' to refer to a general total function on finite words.} $\sigmainf: \Sigma^*\to \Qr$, such that for all $u\in\Sigma^*$, if $\delr^*(u)=\{q_1,q_2,\dots,q_{|\Qr|}\}$, then we have $(\sigmainf(u);q_1,\dots,q_{|\Qr|})\in W_{|\Qr|}$.
\end{lemma}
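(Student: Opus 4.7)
The plan is to construct $\sigmainf$ by induction on the length of $u$, maintaining the stronger invariant that for \emph{every} listing $(q_1,\ldots,q_{|\Qr|})$ of $\delr^*(u)$ as an $|\Qr|$-tuple (with repetitions allowed to pad up to length $|\Qr|$), the position $(\sigmainf(u); q_1,\ldots,q_{|\Qr|})$ lies in $W_{|\Qr|}$. For the base case I set $\sigmainf(\epsilon):=\iotar$; since $\delr^*(\epsilon)=\{\iotar\}$, the only such listing is $(\iotar;\iotar,\ldots,\iotar)$, which lies in $W_{|\Qr|}$ because Eve wins $G_{|\Qr|}(\Ar)$ from the initial configuration by \cref{lem:ktokens}.

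For the inductive step, suppose $p=\sigmainf(u)$ satisfies the invariant and a letter $a$ is appended. Fix any listing $(q_1',\ldots,q_{|\Qr|}')$ of $\delr^*(ua)$. Because every element of $\delr^*(ua)$ is an $a$-successor of some element of $\delr^*(u)$, for each index $i$ I can choose $s_i\in\delr^*(u)$ with $(s_i,a,q_i')\in\delr$; moreover the tuple $(s_1,\ldots,s_{|\Qr|})$ is itself a listing of a sub-multiset of $\delr^*(u)$, so after padding it to a full listing of $\delr^*(u)$ the invariant gives that the corresponding position lies in $W_{|\Qr|}$. By a monotonicity property of $W_k$ under duplication and permutation of Adam's tokens (Eve can simulate a winning strategy on a richer multiset by treating duplicated tokens identically; this strengthens \cref{prop:Wk}), we conclude $(p;s_1,\ldots,s_{|\Qr|})\in W_{|\Qr|}$. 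I then play one round of $G_{|\Qr|}$ from this position in which Adam selects the letter $a$ and moves each token $s_i$ along $(s_i,a,q_i')$; Eve's winning strategy $\stratE_{|\Qr|}$ responds with a state $p'$, and the resulting position $(p';q_1',\ldots,q_{|\Qr|}')$ is again in $W_{|\Qr|}$. I set $\sigmainf(ua):=p'$; this choice must be independent of the listing, so in the formal construction I fix a canonical listing of $\delr^*(ua)$ to pick $p'$, and then argue that the invariant for all other listings follows from the same monotonicity property applied to $(p';q_1',\ldots,q_{|\Qr|}')$.

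The main obstacle is the monotonicity step: \cref{prop:Wk} is stated as a projection from $W_k$ down to $W_i$ for $i\leq k$ via an injection, so it does not directly cover the case where Adam's $|\Qr|$ tokens occupy a multiset of size strictly less than $|\Qr|$. I will therefore spell out a short auxiliary lemma saying that $W_k$ is closed under arbitrary reindexings $\pi\colon[1,k]\to[1,k]$ of Adam's tokens, including non-injective ones. The argument is that, given any winning strategy of Eve against tokens at $(x_1,\ldots,x_k)$, one obtains a winning strategy against tokens at $(x_{\pi(1)},\ldots,x_{\pi(k)})$ by simulating the original play while forcing duplicated tokens to make identical moves; the simulation can be implemented because Adam in the simulated game is free to choose his transitions, and in particular free to choose the same transition for all tokens sitting at the same state. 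Once this closure property is in hand, every step of the induction goes through routinely and $\sigmainf$ is well defined on $\Sigma^*$.
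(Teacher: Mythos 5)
Your induction is structured sensibly, and the places where you flag difficulties are the right ones, but the ``short auxiliary lemma'' you lean on --- that $W_k$ is closed under arbitrary, possibly non-injective, reindexings $\pi\colon[1,k]\to[1,k]$ of Adam's tokens --- is exactly where the argument breaks. Your proof of it has Eve simulate a play from $(p;x_1,\dots,x_k)$ while the real tokens sit at $(x_{\pi(1)},\dots,x_{\pi(k)})$; but when $\pi(j_1)=\pi(j_2)$ the two real tokens start at the same state and Adam may send them along \emph{divergent} runs, and the single simulated token can only track one of them. If the untracked real token is the one that builds an accepting run, the simulated play gives Eve no guarantee at all. Note that this is precisely the simulation that would ``prove'' that winning with one Adam token implies winning with two (duplicate the single token), which is known to be false: the paper recalls from~\cite{BK18} that asking Adam to build just one run makes the game too easy for Eve. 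So duplicating a token is a genuine gain for Adam, and \cref{prop:Wk} only gives you the opposite (injective, token-dropping) direction. Your inductive step really does need the duplication direction: if every state of $\delr^*(ua)$ happens to have the same predecessor $x\in\delr^*(u)$, you must pass from $(p;\text{a covering listing of }\delr^*(u))\in W_{|\Qr|}$ to $(p;x,\dots,x)\in W_{|\Qr|}$, and nothing you have established yields this. Proving set-invariance of $W_k$ at arbitrary positions is essentially as hard as the $G_2\Rightarrow G_k$ theorem itself, so it is not a lemma one can dispatch in a line.

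The paper avoids the issue by brute multiplicity rather than by an invariance property of $W_k$: Eve plays $G_k(\Ar)$ for very large $k$ against $k$ tokens dispatched \emph{uniformly} at every nondeterministic choice, so that after reading $u$ each state of $\delr^*(u)$ physically carries at least $|\Qr|$ distinct tokens; then only the injective projection of \cref{prop:Wk} is needed to conclude membership in $W_{|\Qr|}$. Since the required $k$ grows with $|u|$, a pigeonhole/diagonalisation over infinitely many values of $k$ is used to extract the single limit function $\sigmainf$. To repair your induction you would have to either import that multiplicity device or give a genuinely new proof of the duplication-closure of $W_k$.
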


Notice that in the above definition the cardinality of $\delr^*(u)$ might be smaller than $|\Qr|$ and then some states $q_1,\ldots,q_{|\Qr|}$ repeat.

\begin{proof}
We build $\sigmainf$ by induction on $u$, starting with $\sigmainf(\varepsilon)=\iotar$.

Let us define a strategy $\taukunif:\Qr^k\times\Sigma\to \Qr^k$ for moving $k$ tokens in $\Ar$, by dispatching them uniformly at each nondeterministic choice (remaining tokens are dispatched arbitrarily, for instance using some fixed order on the states). For instance if a state $p$ contains $10$ tokens, and its possible transitions on a letter $a$ are $\delta(p,a)=\{q_1,q_2,q_4\}$, then $\taukunif$ can send $4$ tokens to $q_1$, $3$ tokens to $q_2$, and $3$ tokens to $q_4$.
Let us define a strategy $\sigmakunif:\Sigma^*\to \Qr$ for Eve in the letter game on $\Ar$, as follows.
The memory $M=\Qr^k$ of $\sigmakunif$ consists of $k$ tokens, updated according to $\taukunif$.
The choices made by $\sigmakunif$ are then simply the choices made by $\sigma_k$ against these $k$ tokens.
More formally, $\sigmakunifM: \Qr\times M\times \Sigma\to \Qr$ is defined by $\sigmakunifM(p,m,a)=\sigma_k(p,m,a)$, and its update fuction is induced by $\taukunif:M\times \Sigma\to M$.
The strategy $\sigmakunif:\Sigma^*\to \Qr$ is then defined from $\sigmakunifM$ in a canonical way, using initial memory state $m_0=(\iotar,\dots,\iotar)$, and initial state $\sigmakunif(\epsilon)=\iotar$.

We will preserve the following invariant while building $\sigmainf$: for every finite word $u$, there is an infinite set $I_u\subseteq\mathbb N$ such that for all $k\in I_u$, $\sigmainf$ yields the same run as $\sigmakunif$ on $u$.
This invariant guarantees the statement of the lemma (using Proposition \ref{prop:Wk}), since as soon as $k$ is big enough, all states from $\delr^*(u)$ are each reached by $|\Qr|$ tokens when playing $\taukunif$, and $\sigmakunif$ must always stay in the winning region $\Wk$.

We start with $I_\varepsilon=\mathbb N$, for which the invariant trivially holds.

Assume it holds for $u$ with some infinite set $I_u$, and let $a\in\Sigma$. Let $p=\sigmainf(u)$. For each $k\in\mathbb N$, let $p_k=\sigmakunif(u)$. There exists $q\in \Qr$ such that for infinitely many $k\in I_u$, we have $p_k=q$. We set $\sigmainf(ua)=q$, and $I_{ua}=\{k\in I_u\mid p_k=q\}$.

This maintains the invariant, and thus we can conclude the proof by induction.
\end{proof}

\subsection{Playing Against $\tau$}
\label{ssec:playing}

We will now describe a strategy $\stratE$ for Eve in the letter game of $\Ar$, so that the play yielded by $\stratE$ playing against $\stratA$ is winning for Eve. This will contradict the assumption that $\stratA$ is a winning strategy, leading to the conclusion that $\Ar$ is GFG.

Let $N$ be the constant from \cref{lem:taubreak}. 
The strategy $\stratE$ will intuitively play the $N$-tokens game in $\A_r$ against $N$ imaginary \emph{main tokens} $q_1$, $q_2$, \ldots $q_N$. Here ``imaginary'' means that these tokens exist only in the memory of Eve, and are not part of the actual game arena. So if $u$ is the finite word read so far and $p$ is the current state of Eve, when $\stratA$ produces a new letter $\letter$, Eve will choose a successor state $p'$ by setting $p'=\stratE_N(u\letter,p,q_1,\dots,q_N)$, where $\stratE_N$ is a winning strategy of Eve in $G_N(\Ar)$. 

If $\stratA$ eventually produces a word $w$ not in $L(\A)$, Eve wins by the definition of the letter game. We may thus consider the case where $w\in L(\A)$. Then, since $\stratE_N$ is a winning strategy for Eve, by the definition of tokens game, either the path produced by $\stratE$ is accepting, which is what we target, or all of the paths generated by the imaginary tokens are rejecting. We shall thus ensure that at least one of the paths generated by a $q_i$ token is accepting.

The $N$ main tokens will aim at producing an accepting path in turns, starting with $q_1$; if $q_1$ seems to fail, moving to $q_2$; and so on; until giving the last chance to $q_N$. At each point of time the index of the \emph{active token} is denoted by $j$, and $q_j$ plays the $|\Qr|$-tokens game on $\Ar$ against yet other $|\Qr|$ imaginary  \emph{deterministic tokens} $d_1$, $d_2$, \ldots $d_{|\Qr|}$. 
That is, the new state of the active token is $q_j'=\stratE_{|\Qr|}(u\letter,q_j, d_1,\dots,d_{|\Qr|})$. Analogously to the previous step, since $\stratE_{|\Qr|}$ is a winning strategy for Eve in the $|\Qr|$-token game on $\Ar$, the path of $q_i$ is guaranteed to accept if at least one path of the deterministic tokens that it plays against is accepting.

Each of the ``awaiting'' main tokens $q_{j+1}, \ldots q_N$ should remain in the ``safe area'' of Eve in the $|\Qr|$-tokens game, namely in $W_{|\Qr|}$, until its turn arrives. This ``waiting in the safe area'' is done according to the strategy $\sigmainf$. That is, for every $i\in[j+1,N]$, we set the new state of the $i$-th main token to $q_i'=\sigmainf(ua)$. 
The ``discarded'' main tokens $q_1,\ldots,q_{j-1}$ proceed arbitrarily to a new state compatible with $\delr$. That is, for every $i\in[1,j-1]$, we set the new state of the $i$-th main token to $q_i'\in\delr(q_i,a)$.  

What did we get so far? Instead of directly producing an accepting run for Eve in her letter game, moving around the token $p$, we aim at producing an accepting path for one of the main tokens $q_1,\ldots,q_N$, which we again reduce to producing an accepting path for one of the deterministic tokens $d_1$, \ldots $d_{|\Qr|}$. What is it good for? The first reduction allows to try out $N$ paths instead of a single one. The second reduction allows not to only consider a connected paths, but also paths with up to $N$ ``jumps'' between co-reachable states. That is, whenever the active token is replaced, namely when the index $j$ of the active token is increased by one, a new token game starts between $q_j$ and the deterministic tokens  $d_1$, \ldots $d_{|\Qr|}$. At this point, each of the deterministic tokens can be changed to a new state not only by following $\delr$ but also by going to any co-reachable state.

The $|\Qr|$ deterministic tokens proceed according to $\deltadet$; if for some token $d_i$, the transition $\deltadet$ is not defined or makes a rejecting transition, the token $d_i$ is no longer ``alive'', and we maintain a set $E\subseteq \Qr$ of the alive deterministic tokens. 
That is, for every $i\in E$, we move the token $d_i$ to $\deltadet(d_i,\letter)$, if this transition of $\deltadet$ is accepting, or to an arbitrary state in $\delr(d_i,\letter)$ otherwise.

If the set $E$ of alive deterministic tokens becomes empty, it is a \emph{breakpoint}, on which $\stratE$ behaves as follows: the active token $q_j$ finishes its turn, and $q_{j+1}$ gets its turn to be the active token and play against the deterministic tokens, which become alive again and are spread across all reachable states in $\Ssafe$. That is, we choose the new states $d_1',d_2',\dots,d_{|\Qr|}'$ of the deterministic tokens in any canonical way such that $\{d_1',d_2',\dots d_{|\Qr|}'\}=\delr^*(ua) \cap \Ssafe$.

To sum up the description of $\stratE$, it uses the infinite memory structure $\Sigma^*\times \Qr^N\times [1,N] \times \Qr^{|\Qr|}\times 2^{[1,|\Qr|]}$, 
where a memory state $m=(u,q_1,\dots, q_N, j,d_1,\dots, d_{|\Qr|},E)$ consists of
\begin{itemize}
\item The prefix $u$ of the word read so far.
\item $N$ main tokens $q_1,\dots,q_N$.
\item The index $j\in[1,N]$ of the currently active main token.
\item $|\Qr|$ deterministic tokens $d_1,\dots,d_{|\Qr|}$.
\item A set $E\subseteq |\Qr|$ of the indexes of the alive deterministic tokens, namely those that encountered only accepting transitions since the last breakpoint.
\end{itemize}

The initial memory state is $m_0=(\varepsilon,\iotar,\dots,\iotar,1,\iotar,\dots,\iotar,\{1\})$, where all tokens are in $\iotar$, the active main token is the first one, and only one deterministic token (of index $1$) is alive.
This memory structure of $\stratE$ is updated as described above, the behaviour of the tokens is illustrated in Fig. \ref{fig:runs}.

\begin{figure}
\centering
\includegraphics[scale=.5]{runs.pdf}
\caption{An illustration of the behaviour of tokens in the memory structure of the strategy $\stratE$. Awaiting main tokens are represented in black, active tokens in green, and alive deterministic tokens in red. Breakpoints are represented by dashed vertical lines.}
\label{fig:runs}
\end{figure}

We are now ready to prove the main theorem stating the correctness of $\stratE$.
\begin{lemma}\label{lem:sigmatau}
The strategy $\stratE$ wins against $\stratA$ in the letter game of $\A$.
\end{lemma}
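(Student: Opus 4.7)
The plan is to argue by contradiction. Suppose $\stratA$ is a winning strategy for Adam in the letter game against $\stratE$; then the play produces a word $w\in L(\Ar)$ while the real path $\rho$ generated by $\stratE$ is rejecting. I will derive a contradiction by tracking the virtual tokens maintained in Eve's memory and showing that $\rho$ must in fact be accepting.

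First I will analyse the breakpoints appearing in Eve's memory during the play. Immediately after a breakpoint at step $b_k$, setting $u=w\restr_{b_k}$, the alive deterministic tokens cover $\delr^\ast(u)\cap \Ssafe$; because every state of $\Qr$ has the form $(s,X)$ with $s\in X$, this set is in bijection with $\{s\in Q\mid (s,\delrr(u))\in \Ssafe\}$, which is precisely the set quantified over in Definition \ref{def:det-breakpoint}. Between two consecutive breakpoints every currently alive deterministic token dies (either $\deltadet$ is undefined or it crosses a rejecting transition), so the sequence $b_1<b_2<\cdots$ of breakpoint positions forms a sequence of det\=/breakpoints of any prefix containing them. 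By Lemma \ref{lem:taubreak} any such sequence has length strictly less than $N$, hence only finitely many (at most $N-1$) breakpoints occur during the entire play.

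Consequently, some main token $q_j$ (with $j\leq N$) remains active forever after the last breakpoint; let $u$ now denote the prefix read up to that breakpoint. By the rule governing awaiting tokens one has $q_j=\sigmainf(u)$, while $d_1,\dots,d_{|\Qr|}$ cover $\delr^\ast(u)\cap \Ssafe\subseteq \delr^\ast(u)$. Combining Lemma \ref{lem:sigmainf} with Proposition \ref{prop:Wk} (after identifying duplicate tokens), this yields $(q_j;d_1,\dots,d_{|\Qr|})\in W_{|\Qr|}$. Since no later breakpoint occurs, at least one deterministic token $d_i$ must stay alive forever, so from that step on its successive states are linked by accepting $\deltadet$\=/transitions, which means its suffix traces an accepting path in $\Ar$. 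Invoking Eve's winning strategy in $G_{|\Qr|}(\Ar)$ from the position $(q_j;d_1,\dots,d_{|\Qr|})$, the fact that $d_i$'s suffix is accepting forces $q_j$'s suffix---and hence $q_j$'s entire path---to be accepting.

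Finally, the real path $\rho$ is produced by Eve's winning strategy in $G_N(\Ar)$ from the initial position $(\iotar;\iotar,\dots,\iotar)\in W_N$ against the $N$ genuine $\delr$\=/paths traced by $q_1,\dots,q_N$. Since $q_j$'s path is accepting, the winning condition of $G_N$ forces $\rho$ to be accepting as well, contradicting the choice of $\stratA$. The main obstacle I expect is the bookkeeping of the first step: one must verify carefully that the reset prescribed by $\stratE$ produces exactly the collection of $\Ssafe$\=/states used in Definition \ref{def:det-breakpoint}, so that Eve's breakpoints really coincide with the notion controlled by Lemma \ref{lem:taubreak}. A secondary subtlety is the repeated invocation of winning strategies in $G_k(\Ar)$ from non\=/initial positions $(q;d_1,\dots,d_k)\in W_k$; this is legitimate because the $G_k$ winning condition is Rabin and hence positionally determined, giving Eve a uniform positional winning strategy defined on all of $W_k$ that can be reused after each breakpoint.
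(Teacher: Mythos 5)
Your proof is correct and follows essentially the same route as the paper's: bound the number of breakpoints via Lemma~\ref{lem:taubreak}, deduce that some main token $q_j$ stays active forever with $(q_j;d_1,\dots,d_{|\Qr|})\in W_{|\Qr|}$ by Lemma~\ref{lem:sigmainf} and Proposition~\ref{prop:Wk}, use the forever-alive deterministic token to force $q_j$'s path to be accepting, and then propagate acceptance to Eve's real token through $G_N(\Ar)$. The only quibble is your justification of the ``uniform'' winning strategies on $W_k$: the $G_k$ winning condition for a coB\"uchi automaton is a disjunction of a coB\"uchi condition with a generalised-B\"uchi condition, which is not obviously a Rabin condition, but the fact you actually need---that Eve has a winning strategy from every position of $W_k$---holds simply by definition of the winning region, exactly as the paper implicitly assumes.
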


\begin{proof}
First of all, notice that by the construction of $\stratE$, if $E$ becomes empty while token $q_i$ is active, then the word $u$ produced by $\stratA$ so far has witnessed $i$ det-breakpoints.
By Lemma \ref{lem:taubreak}, this means that we will never run out of main tokens, and eventually an active token $q_i$ stays active forever.

Let us consider the point of the run where this token $q_i$ becomes active, and let $u$ be the word produced so far. At the breakpoint, deterministic tokens are placed such that $\{d_1,d_2,\dots d_{|\Qr|}\}=
\delr^*(ua) \cap \Ssafe$.
Since $q_i$ followed $\sigmainf$ until now, we have $q_i=\sigmainf(u)$.
By Prop. \ref{prop:Wk} and Lemma \ref{lem:sigmainf}, we have $(q_i,d_1,\dots,d_{|\Qr|})\in W_{|\Qr|}$.
From now on, since no more det-breakpoints occur, some deterministic token $d_h$ will never encounter a rejecting transition, and will safely follow $\deltadet$ forever.
Since $q_i$ now moves according to $\stratE_{|\Qr|}$ against $d_1,\dots,d_{|\Qr|}$, and since $d_h$ follows an accepting run, we obtain that $q_i$ will follow an accepting run.

Finally, since the states assigned to $p$ are chosen according to $\stratE_N$ against $q_1,\dots,q_N$, and since $q_i$ follows an accepting run, we have that $p$ follows an accepting run, and thus the play yielded by $\stratE$ against $\stratA$ is winning for Eve.
\end{proof}

Assuming that Eve wins $G_2(\Ar)$ and $\Ar$ is not GFG leads to a contradiction, we can therefore conclude that $\Ar$ is GFG, and thus $\A$ is GFG by \cref{lem:G2RestrictionPreserve,lem:OperationsPreserve}. We finally have shown \cref{thm:G2cobuchi}.

\end{document}